\documentclass[twoside]{article}

\usepackage[accepted]{aistats2024}
%
%


\setlength{\pdfpageheight}{11in}
\setlength{\pdfpagewidth}{8.5in}

\usepackage[round]{natbib}


\usepackage[ruled]{algorithm2e}
\usepackage{amsmath}
\usepackage{amssymb}
\usepackage{amsthm}
\usepackage{graphicx}
\usepackage{tikz}

\usepgflibrary{arrows}
\usetikzlibrary{shapes}
\usetikzlibrary{arrows.meta}
\usetikzlibrary{positioning}

\newtheorem{defn}{Definition}[section]
\newtheorem{exmp}[defn]{Example}
\newtheorem{lem}[defn]{Lemma}
\newtheorem{prop}[defn]{Proposition}
\newtheorem{thm}[defn]{Theorem}
\newtheorem{cor}[defn]{Corollary}

\newcommand{\an}{\text{an}}

\newcommand{\pa}{\text{pa}}

\newcommand{\indep}{\perp \!\!\! \perp}

\begin{document}

%

%

\twocolumn[

\aistatstitle{Faithful graphical representations of local independence}

\aistatsauthor{Søren Wengel Mogensen}

\aistatsaddress{Department of Automatic Control, Lund University} ]

\begin{abstract}
  Graphical models use graphs to represent conditional independence 
  structure in the
  distribution of a random vector. In stochastic processes, graphs may 
  represent so-called local 
  independence or conditional Granger causality. Under some regularity 
  conditions, a local independence graph implies a set of independences using a 
  graphical criterion 
  known as $\delta$-separation, or using its generalization, $\mu$-separation. 
  This 
  is a stochastic process analogue of 
  $d$-separation in DAGs. However, there may be 
  more independences than implied by this graph and this is a violation of 
  so-called \emph{faithfulness}. We characterize faithfulness in local 
  independence graphs and give a method to construct a faithful graph from any 
  local independence model such that the output equals the true graph when 
  Markov and faithfulness assumptions hold. We discuss various assumptions that 
  are weaker than faithfulness, and we explore different structure learning 
  algorithms and their properties under varying assumptions.
\end{abstract}

\section{Introduction}
\label{sec:intro}

Graphical models are widely used and so-called \emph{Markov 
properties} are essential as they describe how graphs encode conditional 
independence
\citep{lauritzen1996}.
While such Markov properties hold under fairly general conditions, it is 
well-understood that conditional independence models are too complicated to be 
described completely by these properties. One particular issue 
is the potential lack of \emph{faithfulness} such that the graph encodes a 
dependence which is not in the probability distribution 
\citep{spirtes2018search}.

In models of multivariate stochastic processes, tests of \emph{local 
independence} or 
\emph{Granger causality} may be used to learn a causal graph in which each node 
represents a coordinate process. Most prior work 
assumes that the causal graph is Markov and faithful with 
respect to the observed independences. This might not hold, even in the 
theoretical distribution from which we sample data. Moreover, when presented 
with real data, we need statistical tests of local independence and therefore 
wrong test results will also distort the output. 

In this paper, we characterize faithfulness and discuss 
a hierarchy of faithfulness assumptions that are relevant in this context. We 
describe differences between structure learning in DAG-based models and in 
stochastic process models. We compare different algorithms for use in 
stochastic process models and highlight how to minimize the impact of 
faithfulness issues. We start by defining the two independence relations that 
we will use.

\subsection{Local Independence}

Local independence is a ternary independence relation 
\citep{schweder1970,aalen1987,didelez2008} and we will use graphs to represent 
local independence in a multivariate stochastic process, analogously to how 
graphs may encode conditional independence in the distribution of a random 
vector.

The definition of local independence will depend on the class of stochastic 
processes we consider. We follow the definition in \cite{mogensenUAI2018}. Let 
$X_t = (X_t^1,\ldots,X_t^n)$ be a continuous-time stochastic process. We say 
that $X_t^i$ is a 
\emph{coordinate process}. We let $V=\{1,2,\ldots,n \}$. For $D\subseteq V$, we 
define $\mathcal{F}_t^D$ as the completed and right-continuous version of 
$\sigma(\{X_s^\alpha: s<t, \alpha\in D \})$.

\begin{defn}[Local independence]
	Let $\lambda_t = (\lambda_t^1,\ldots,\lambda_t^n)$ be a stochastic process. 
	Let $A,B,C\subseteq V$. We say that $X^B$ is \emph{locally independent} of 
	$X^A$ given $X^C$, or simply that $B$ is \emph{locally independent} of $A$ 
	given $C$, if for all $\beta\in B$ 
	
	\begin{align*}
		t \mapsto E(\lambda_t^\beta \mid \mathcal{F}_t^{A,C})
	\end{align*}
	
	has an $\mathcal{F}_t^{C}$-adapted version.
	\label{def:li}
\end{defn}

The above definition does not answer the important question: What should the 
$\lambda$-process be? This will depend on the class of processes. For 
stochastic differential equations, $\lambda$ is the drift 
\citep{mogensenUAI2018}. For point processes, it is the conditional intensity. 
We give a detailed point process example in Appendix \ref{app:hawkes}. The 
$\lambda$-process should essentially 
describe how the immediate evolution of the multivariate process depends on the 
past. If so, $B$ is locally independent of $A$ given $C$ if predicting the 
immediate future of $B$ can be done equally well using the past of process $C$ 
only or 
the past of processes $A$ and $C$.

\subsection{Granger Causality}

Granger causality \citep{granger1969investigating} is at times treated with 
some suspicion as it is said to not 
be `true' causality. In this paper, we only use Granger causality as an 
independence relation, analogously to how conditional independence is used in 
causal models of random vectors. In this way, tests of Granger causality can 
help us identify certain features of the underlying causal graph, and 
\emph{(conditional) Granger independence} would in fact be a better term
for our usage of 
Granger causality. In this context, $X = (X_t^1,\ldots,X_t^n)$ is a 
multivariate time series, that is, a stochastic process in discrete time. We 
let $X_{<t}^D$ denote the set $\{X_s^\alpha: s<t, \alpha\in D \}$.

\begin{defn}[Granger causality]
	Let $A,B,C\subseteq V$. We say that $X^B$ is \emph{Granger-noncausal} for 
	$X^A$ given $X^C$ if for all $t$ and all $\beta\in B$,
	
	\begin{align*}
		X_t^\beta \indep X_{<t}^A \mid X_{<t}^C
	\end{align*}
	
	where $\cdot \indep \cdot \mid \cdot$ denotes conditional independence.
	\label{def:granger}
\end{defn}

\begin{exmp}[VAR]
	As an example of a time series model, we consider a 
	\emph{vector-autoregressive process} of order 1. For each $t$, we have
	
	\begin{align*}
		X_t = A X_{t-1} + \varepsilon_t
	\end{align*}
	
	such that $(\varepsilon_t)$ is a sequence of independent random vectors. 
	Moreover, the entries of $\varepsilon_t$ are independent. In this case, the 
	zeroes of the $n\times n$ matrix $A$ encode which variables at time $t-1$ 
	directly influence the variables at time $t$. We can construct an intuitive 
	graphical representation with nodes $V = \{1,2,\ldots,n \}$ by including 
	the edge $\alpha\rightarrow\beta$ if and only if $A_{\beta\alpha}\neq 0$. 
	Assume now that $n = 4$, $V = \{1,2,3,4\}$, and 
	
	\begin{align*}
		A = \begin{bmatrix}
		a_{11}       & 0 & 0 & 0 \\
		a_{21}       & a_{22} & a_{23} & a_{2n} \\
		0       & 0 & a_{23} & 0 \\
		0       & a_{42} & a_{43} & a_{44}
		\end{bmatrix}
	\end{align*}
	
	where the entries of $A$ are nonzero if not indicated as zero in the above 
	equation. The corresponding graph is in Figure \ref{fig:exmp}.
	
	\begin{figure}
		\centering
												\begin{tikzpicture}[->,> = 
												latex',thick,scale=0.7]
												\tikzset{vertex/.style
													= 
													{shape=circle,draw,minimum 
														size=1.5em, 
														inner sep = 0pt}}
												\tikzset{edge/.style= {->,> = 
														latex',thick}}
												\tikzset{edgebi/.style= {<->,> 
												= 
														latex', 
														thick}}
												\tikzset{every 
													loop/.style={->,> = 
														latex',thick,min 
														distance=8mm, 
														looseness=5}}
												\tikzset{vertexFac/.style= 	
													{shape=rectangle,draw,minimum
													 
														size=1.5em, 
														inner sep = 
														0pt}}
												
												%
												\def\y{-2}						
												\node[vertex] (i) at (-4,0+\y) 	
												{$1$};
												\node[vertex] (a) at  	
												(-2,0+\y) {$2$};
												\node[vertex] (b) 	at  
												(2,0+\y) 	
												{$4$};
												\node[vertex] 	(u) 	
												at  	
												(0,1.25+\y)	{$3$};

												
												\draw[edge] (i) to 
												(a);
												\draw[edge] (a) to 
												(b);				
												\draw[edge] (u) to 
												(a);
												\draw[edge] (u) to 
												(b);
												\draw[edge, bend left] (b) to 
												(a);
												\draw[loop above] (a) to 
												(a);	
												\draw[loop above] (i) to 
												(i);
												\draw[loop right] (u) to 
												(u);
												\draw[loop above] (b) to 
												(b);

												\end{tikzpicture}
\caption{Graph from Example \ref{exmp:var}. In this graph, 4 is $\mu$-separated 
from 1 given $\{2,3,4\}$. Under the global Markov property, this implies that 
1 is Granger noncausal for 4 given $\{2,3,4\}$. This means that we are able 
to predict the present of variable 4, $X_t^4$, equally well using the past of 
processes 
$\{2,3,4\}$, $X_{<t}^{\{2,3,4\}}$, and using the past of processes 
$\{1,2,3,4\}$, 
$X_{<t}^{\{1,2,3,4\}}$. That is, conditionally on the past 
of $\{2,3,4\}$, the past of process 1 does not add any information on the 
present value of 4. On the other hand, 4 is not $\mu$-separated from 1 given 
$\{2,4\}$ as the path $1 \rightarrow 2 \leftarrow 3 \rightarrow 4$ is 
$\mu$-connecting.}
\label{fig:exmp}
	\end{figure}
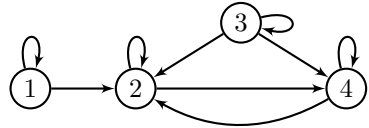
	
	\label{exmp:var}
\end{exmp}

\subsection{Graph Prerequisites}

A graph is an ordered pair $(V,E)$ where $V$ is a finite set of \emph{nodes} 
(also known as \emph{vertices}) and $E$ is a finite set of \emph{edges}. In 
this paper, we will mostly consider \emph{directed graphs} in which $E$ can be 
thought of as a subset of $V\times V$. For $\alpha,\beta\in V$, the edge
$\alpha\rightarrow\beta$ is in the graph if $(\alpha,\beta)\in E$. We 
always include all \emph{self-edges}, i.e., edges 
$\alpha\rightarrow\alpha$ for $\alpha\in V$.

For graphs $\mathcal{D}_1 = (V,E_1)$ and $\mathcal{D}_2 = (V,E_2)$, we say that 
$\mathcal{D}_1$ is a \emph{(proper) subgraph} of $\mathcal{D}_2$ if 
$E_1\subseteq E_2$ ($E_1\subsetneq E_2$), and we denote this by 
$\mathcal{D}_1\subseteq\mathcal{D}_2$ ($\mathcal{D}_1\subsetneq\mathcal{D}_2$). 
We also say that $\mathcal{D}_2$ is a \emph{(proper) supergraph} of 
$\mathcal{D}_1$. A \emph{walk}, $\omega$, is an ordered, alternating sequence 
of nodes and 
edges, $\alpha_1,e_1, \alpha_2, \ldots , \alpha_n, e_l, \alpha_{l+1}$, such 
that each edge is between its adjacent nodes. The \emph{length} of the walk 
$\omega$ is $l$. A \emph{path} is a walk such that no node is repeated. For 
nodes $\alpha,\beta\in V$, we say that a walk from $\alpha$ to $\beta$ is 
\emph{directed} if every edge points towards $\beta$, $\alpha\rightarrow \ldots 
\rightarrow\beta$. If there exists a directed walk from $\alpha$ to $\beta$, we 
say that $\alpha$ is an \emph{ancestor} of $\beta$. We let 
$\an_\mathcal{D}(\beta)$ denote the set of ancestors of $\beta$, and we let 
$\an_\mathcal{D}(B) = \cup_{\beta\in B} \an_\mathcal{D}(\beta)$. By convention, 
we say that a \emph{trivial walk} (a walk with no edges) is directed and 
therefore $B\subseteq \an_\mathcal{D}(B)$. The \emph{complete graph} on nodes 
$V$ is the graph $(V,E)$ such that $(\alpha,\beta)\in E$ for all $\alpha$ and 
$\beta$ such that $\alpha\neq \beta$.

We will use \emph{$\mu$-separation} to encode local 
independence or Granger noncausality. This is analogous to how $d$-separation 
in DAGs 
may encode conditional independence.

\begin{defn}[$\mu$-separation, \cite{mogensenUAI2018,Mogensen2020a}]
	Let $\mathcal{D}=(V,E)$ be a graph and let $A,B,C\subseteq V$. We say that 
	$B$ is 
	$\mu$-separated from $A$ given $C$ if there is no nontrivial walk in 
	$\mathcal{D}$ between any node 
	$\alpha\in A\setminus C$ and any node $\beta\in B$ such that all colliders 
	are in 
	$an_\mathcal{D}(C)$, no noncolliders are in $C$, and the final edge has a 
	head at $\beta$.
	\label{def:deltaSep}
\end{defn}

The notion of $\mu$-separation is a generalization of 
$\delta$-separation
\citep{didelez2000,didelez2008}.

\subsection{Independence Models}

In this paper, we will use an abstract \emph{independence model}, 
$\mathcal{I}$, which is 
simply a set of triples, $(A,B,C)$, $A,B,C\subseteq V$, and we say that this is 
an independence model \emph{over} $V$. Such an independence model 
may represent the local independences that hold in a multivariate, 
continuous-time stochastic process or the conditional Granger-noncausalities 
that hold in a discrete-time stochastic process, i.e., $(A,B,C) \in 
\mathcal{I}$ if and only if $B$ is locally independent 
of $A$ given $C$, for example. Using an abstract independence model, there is 
no need to distinguish between 
independence models representing local independences and independence models 
representing Granger noncausalities. In the remainder of the paper, we will 
often refer to both types of independences as simply `local independences'.

For a graph $\mathcal{D}$, we define $\mathcal{I}(\mathcal{D})$ as the set of 
triples 
$(A,B,C)$ such that $B$ is $\mu$-separated from $A$ given $C$ in $\mathcal{D}$. 
\emph{Markov and 
faithfulness properties} describe how 
$\mathcal{I}$ and $\mathcal{I}(\mathcal{D})$ are related.

\subsection{Markov Properties and Faithfulness}

\emph{Markov properties} describe how graphs encode independence by 
relating 
properties of a graph, $\mathcal{D}$, to an independence model, 
$\mathcal{I}$. We use the notation $\alpha \rightarrow_\mathcal{D}\beta$ to 
indicate that the edge $\alpha\rightarrow\beta$ is in $\mathcal{D}$.

\begin{defn}[Pairwise Markov property]
	We say that $\mathcal{I}$ satisfies the \emph{pairwise Markov property} 
	with 
	respect to $\mathcal{D}$ if for all $\alpha,\beta\in V$
	
	\begin{align*}
		\alpha\not\rightarrow_\mathcal{D}\beta\Rightarrow(\alpha,\beta,V\setminus
		\{\alpha \})\in \mathcal{I}.
	\end{align*}
	\label{def:pairwiseMarkov}
\end{defn}

\begin{defn}[Global Markov property]
	We say that $\mathcal{I}$ satisfies the \emph{global Markov property} with 
	respect to $\mathcal{D}$, or simply that $\mathcal{I}$ is \emph{Markov} 
	with respect to $\mathcal{D}$, if for all $A,B,C\subseteq V$,
	
	\begin{align*}
		(A,B,C) \in \mathcal{I}(\mathcal{D}) \Rightarrow (A,B,C) \in 
		\mathcal{I}.
	\end{align*}
	
	The global Markov property may also be written as 
	$\mathcal{I}(\mathcal{D})\subseteq\mathcal{I}$.
	\label{def:globalMarkov}
\end{defn}

The global and pairwise Markov properties are equivalent under fairly general 
assumptions, see, e.g., \cite{didelez2000, 
didelez2008,eichler2012,mogensenUAI2018} for related results in different model 
classes. Some of these results restrict the sets $A$,$B$, and $C$, e.g., such 
that $B\subseteq C$ in our notation.

\begin{defn}[Faithfulness]
	We say that $\mathcal{I}$ is \emph{faithful} with respect to $\mathcal{D}$ 
	if for all $A,B,C\subseteq V$,
	
	\begin{align*}
	(A,B,C) \in 
	\mathcal{I} \Rightarrow (A,B,C) \in \mathcal{I}(\mathcal{D}),
	\end{align*}
	
	that is, if $\mathcal{I}\subseteq \mathcal{I}(\mathcal{D})$.
	\label{def:faith}
\end{defn}

Note that, in our terminology, faithfulness corresponds to the statement 
$\mathcal{I} \subseteq \mathcal{I}(\mathcal{G})$, not to the stronger statement 
$\mathcal{I} = \mathcal{I}(\mathcal{G})$.

\subsection{Structure Learning}
\label{ssec:structLearning}

There is a large literature on structure learning from multivariate stochastic 
processes, often assuming \emph{causal sufficiency}, i.e., that every 
relevant coordinate process is observed, and assuming some specific parametric 
or semiparametric class of stochastic processes. We will also make 
the assumption of causal sufficiency in 
this paper, however, we will take a nonparametric approach. For 
parametric model classes, and assuming causal 
sufficiency, one may also, e.g., use methods that are specific to the model 
class to learn a causal graph from data. Our approach is completely 
nonparametric in that it only uses tests of local independence. Examples 
\ref{exmp:hawkes} and \ref{exmp:var} are therefore mostly 
meant as an illustration.

In the 
next section, we give a characterization of faithfulness which allows us to 
construct faithful representations of local independence models.

\subsubsection{Causal Interpretation}
\label{sssec:cauInter}

Structure learning is often done from a causal perspective. The causal 
interpretation will also depend on the model class. In this paper, we assume 
that 
$\mathcal{D}$ is a \emph{causal graph} which summarizes the cause-effect 
relations between the coordinate processes of the system. The exact meaning of 
this is discussed by, e.g., \cite{eichler2007,roysland2023graphical}.

\section{Transitivity Conditions}

We define a set of \emph{transitivity conditions}.

\begin{defn}[Transitivity conditions]
	Let $\mathcal{D} = (V,E)$ and let $\mathcal{I}$ be an independence model 
	over $V$.
	Let $C\subseteq V$. We say that $\mathcal{I}$ is $C$-transitive with 
	respect to $\mathcal{D}$ if for each edge
	$\alpha\rightarrow\beta$ in $\mathcal{D}$, conditions D0-D3 hold.
	\begin{enumerate}
		\item[D0] if $\alpha\notin C$, then $(\alpha,\beta,C)\notin 
		\mathcal{I}$,
		\item[D1] if $\alpha\notin C$, then for all $\gamma$: \\ 
		$(\gamma,\alpha,C)\notin \mathcal{I}(\mathcal{D}) 
		\Rightarrow (\gamma,\beta,C)\notin \mathcal{I}$,
		\item[D2] if $\alpha\notin C,\beta\in C$, then for all 
		$\gamma,\delta$:\\
		$(\gamma,\beta,C) \notin \mathcal{I}(\mathcal{D}), 
		(\alpha,\delta,C)\notin \mathcal{I}(\mathcal{D}) \Rightarrow 
		(\gamma,\delta,C)\notin\mathcal{I}$,
		\item[D3] if $\alpha\notin C$, then for all $\gamma$:\\
		$(\alpha,\gamma,C)\notin \mathcal{I}(\mathcal{D})\Rightarrow 
		(\beta,\gamma,C)\notin \mathcal{I}$.
	\end{enumerate}
	
	We say that an independence model $\mathcal{I}$ is \emph{transitively 
	closed} with respect to a graph 
	$\mathcal{D}$ if $\mathcal{I}$ is $C$-transitive with respect to 
	$\mathcal{D}$ for all $C\subseteq V$.
	\label{def:transitivity}
\end{defn}

A simpler version of the conditions in Definition \ref{def:transitivity} 
are also found in 
\cite{Mogensen2020a} where the authors used them to prove that every Markov 
equivalence class of partially observed local independence graphs have a 
greatest element. We can recover their version by using $\mathcal{I} = 
\mathcal{I}(\mathcal{D})$ in the above definition. For our result, the 
generalization is important as it connects an arbitrary independence model, 
$\mathcal{I}$, to a graphical representation, $\mathcal{D}$.

The conditions in Definition \ref{def:transitivity} are in a certain sense 
rewriting the definition of $\mu$-separation. This has three purposes. First, 
this 
assigns faithfulness violations to specific edges that can be removed to obtain 
faithful representations (Subsection \ref{ssec:trim}). Second, it allows us to 
construct a 
(nontrivial) faithful representation directly from the independence model 
(Section \ref{sec:characFaith}). 
Third, we will reformulate these conditions slightly to see that they 
correspond to different 
notions of faithfulness (Section \ref{sec:weakFaith}).

\begin{prop}
	The independence model $\mathcal{I}(\mathcal{D})$ is transitively closed 
	with respect to the
	 graph $\mathcal{D}$.
	\label{prop:graphTransitive}
\end{prop}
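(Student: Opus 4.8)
The plan is to fix a set $C\subseteq V$ and an edge $\alpha\to\beta$ of $\mathcal{D}$, and to verify D0--D3 directly from the definition of $\mu$-separation with $\mathcal{I}=\mathcal{I}(\mathcal{D})$. In each case the argument is a short piece of ``walk surgery'': we are handed one or two $\mu$-connecting walks (each ending with a head at its target) and we build a new $\mu$-connecting walk, either by appending the edge $\alpha\to\beta$ at the end, by prepending it traversed from $\beta$ to $\alpha$ at the front, or by gluing two walks together at $\beta$ using that edge. The reason this is clean is that $\mu$-separation is stated for \emph{walks}, not paths, so the concatenation of walks is again a walk and no node-repetition bookkeeping is needed; the only things to track are the collider/noncollider status of the nodes where pieces meet and the orientation of the last edge.

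For D0, if $\alpha\notin C$ then the length-one walk $\alpha\to\beta$ is itself a nontrivial walk between $\alpha\in\{\alpha\}\setminus C$ and $\beta$ with no internal nodes (hence no colliders and no noncolliders) and with a head at $\beta$, so it is $\mu$-connecting and $(\alpha,\beta,C)\notin\mathcal{I}(\mathcal{D})$. For D1, assume $\alpha\notin C$ and $(\gamma,\alpha,C)\notin\mathcal{I}(\mathcal{D})$; the latter gives a $\mu$-connecting walk $\omega$ from $\gamma$ to $\alpha$ whose last edge has a head at $\alpha$. Appending $\alpha\to\beta$ to $\omega$ yields a walk from $\gamma$ to $\beta$ in which $\alpha$ has become an internal node of the form $\cdots\to\alpha\to\beta$, i.e.\ a noncollider; this is allowed because $\alpha\notin C$, every other node inherits its status from $\omega$, and the new last edge $\alpha\to\beta$ has a head at $\beta$. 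Hence $(\gamma,\beta,C)\notin\mathcal{I}(\mathcal{D})$. Condition D3 is the mirror image: given a $\mu$-connecting walk from $\alpha$ to $\gamma$ (head at $\gamma$), prepend the edge $\alpha\to\beta$ traversed from $\beta$ to $\alpha$; in the resulting walk from $\beta$ to $\gamma$ the node $\alpha$ is a noncollider (a tail points from $\alpha$ towards $\beta$), which is fine since $\alpha\notin C$, and the last edge is unchanged, so $(\beta,\gamma,C)\notin\mathcal{I}(\mathcal{D})$.

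For D2, assume $\alpha\notin C$, $\beta\in C$, and that there are $\mu$-connecting walks $\omega_1$ from $\gamma$ to $\beta$ (head at $\beta$) and $\omega_2$ from $\alpha$ to $\delta$ (head at $\delta$). Form the walk that follows $\omega_1$, then traverses $\alpha\to\beta$ from $\beta$ to $\alpha$, then follows $\omega_2$. Two heads meet at $\beta$, so $\beta$ is now a collider --- but this is exactly the permitted case, since $\beta\in C\subseteq\an_\mathcal{D}(C)$. At $\alpha$ a tail points towards $\beta$, so $\alpha$ is a noncollider, permitted because $\alpha\notin C$. All other nodes keep their status from $\omega_1$ or $\omega_2$, and the last edge (from $\omega_2$) has a head at $\delta$, so $(\gamma,\delta,C)\notin\mathcal{I}(\mathcal{D})$.

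None of these steps is hard; the one observation that has to be made explicitly is that the collider created at $\beta$ in the D2 construction automatically meets the collider requirement because $\beta$ is assumed to lie in $C$ and $C\subseteq\an_\mathcal{D}(C)$, and that the splice node $\alpha$ always ends up as a noncollider outside $C$. The remaining work is purely the handling of boundary configurations --- trivial walks, the self-edges that $\mathcal{D}$ always contains, and coincidences among $\alpha,\beta,\gamma,\delta$ --- where the same constructions apply but the relevant walk conventions have to be read with a little care; I would dispatch these with a brief remark rather than a separate case analysis, and I expect them, rather than the main argument, to be the only place where one has to be a bit careful.
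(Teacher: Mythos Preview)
Your proposal is correct and takes essentially the same approach as the paper: verify D0--D3 by direct walk concatenation using the edge $\alpha\to\beta$, exploiting that $\mu$-separation is defined via walks so no path-repair is needed. The paper's own proof is terser (it spells out only D0 and D1 and dismisses D2, D3 with ``follow similarly''), whereas you work out the splice for D2 and the prepend for D3 explicitly, but the underlying constructions are identical.
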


\begin{prop}
	Let $\mathcal{I}_1\subseteq \mathcal{I}_2$. If $\mathcal{I}_2$ is 
	transitively closed with respect to $\mathcal{D}$, then $\mathcal{I}_1$ is 
	transitively closed with respect to $\mathcal{D}$.
	\label{prop:transitiveInclusion}
\end{prop}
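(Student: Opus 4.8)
The plan is to exploit that conditions D0--D3 are \emph{antitone} in the independence model: each of them has a conclusion of the form ``$(\cdot,\cdot,C)\notin\mathcal{I}$'', whereas every hypothesis appearing in D0--D3 refers only to the fixed data $\mathcal{D}$, $\mathcal{I}(\mathcal{D})$, and $C$, and never to $\mathcal{I}$ itself. Since being transitively closed with respect to $\mathcal{D}$ means being $C$-transitive for every $C\subseteq V$, it suffices to fix an arbitrary $C\subseteq V$ and an arbitrary edge $\alpha\rightarrow\beta$ of $\mathcal{D}$ and to show that if D0--D3 hold at this edge and this $C$ with $\mathcal{I}=\mathcal{I}_2$, then they also hold with $\mathcal{I}=\mathcal{I}_1$.

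The only tool needed is the contrapositive of $\mathcal{I}_1\subseteq\mathcal{I}_2$: for every triple $(A,B,C)$,
\[
(A,B,C)\notin\mathcal{I}_2 \;\Longrightarrow\; (A,B,C)\notin\mathcal{I}_1 .
\]
I would then check the four conditions in turn. In D0 the hypothesis ``$\alpha\notin C$'' mentions no independence model, so the conclusion $(\alpha,\beta,C)\notin\mathcal{I}_2$ transfers directly to $(\alpha,\beta,C)\notin\mathcal{I}_1$. In D1 and D3 the hypotheses ``$\alpha\notin C$'' together with ``$(\gamma,\alpha,C)\notin\mathcal{I}(\mathcal{D})$'' (respectively ``$(\alpha,\gamma,C)\notin\mathcal{I}(\mathcal{D})$'') involve only $C$ and $\mathcal{I}(\mathcal{D})$, which are unchanged; assuming them, D1 (resp.\ D3) for $\mathcal{I}_2$ yields $(\gamma,\beta,C)\notin\mathcal{I}_2$ (resp.\ $(\beta,\gamma,C)\notin\mathcal{I}_2$), and the displayed implication upgrades this to the analogous statement for $\mathcal{I}_1$. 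D2 is the same: its hypotheses ``$\alpha\notin C$'', ``$\beta\in C$'', ``$(\gamma,\beta,C)\notin\mathcal{I}(\mathcal{D})$'', and ``$(\alpha,\delta,C)\notin\mathcal{I}(\mathcal{D})$'' are all model-independent, and under them D2 for $\mathcal{I}_2$ gives $(\gamma,\delta,C)\notin\mathcal{I}_2$, hence $(\gamma,\delta,C)\notin\mathcal{I}_1$. This establishes D0--D3 for $\mathcal{I}_1$; since $C$ and the edge were arbitrary, $\mathcal{I}_1$ is transitively closed with respect to $\mathcal{D}$.

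There is essentially no obstacle here: the argument is pure bookkeeping. The one point genuinely worth verifying --- and the only place the statement could fail --- is that no hypothesis of D0--D3 has the form ``$(\cdot,\cdot,C)\in\mathcal{I}$''; if $\mathcal{I}$ occurred positively in a premise, then passing from $\mathcal{I}_2$ to the smaller $\mathcal{I}_1$ could invalidate that premise and the transfer would break. Inspecting Definition \ref{def:transitivity} confirms that $\mathcal{I}$ appears only in the conclusions of D0--D3, so the monotonicity argument goes through.
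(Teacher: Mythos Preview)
Your proposal is correct and follows essentially the same approach as the paper: the paper's proof verifies D0 explicitly via the contrapositive of $\mathcal{I}_1\subseteq\mathcal{I}_2$ and then notes that ``the other conditions follow similarly'', which is exactly your antitonicity observation spelled out in more detail. Your additional remark that the argument works precisely because $\mathcal{I}$ appears only in the conclusions (never the hypotheses) of D0--D3 is the underlying reason, made explicit.
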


\section{Characterization of Faithfulness}
\label{sec:characFaith}

Definition \ref{def:transitivity} gives a 
characterization of faithfulness as described in the next theorem.

\begin{thm}
	An independence model $\mathcal{I}$ is transitively closed with respect to 
	a 
	graph $\mathcal{D}$ if and only if $\mathcal{I}$ is faithful with 
	respect to $\mathcal{D}$.
	\label{thm:characFaith}
\end{thm}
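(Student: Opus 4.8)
The plan is to prove the two implications separately: the ``only if'' part is immediate from the two preceding propositions, while the ``if'' part is a combinatorial argument on $\mu$-connecting walks.

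Suppose first that $\mathcal{I}$ is faithful, i.e.\ $\mathcal{I}\subseteq\mathcal{I}(\mathcal{D})$. By Proposition~\ref{prop:graphTransitive} the model $\mathcal{I}(\mathcal{D})$ is transitively closed with respect to $\mathcal{D}$, and by Proposition~\ref{prop:transitiveInclusion} any submodel of a transitively closed model is transitively closed with respect to the same graph. Hence $\mathcal{I}$ is transitively closed, and this direction requires nothing further.

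For the converse I would argue by contraposition: assuming $\mathcal{I}$ is transitively closed, I show that $(A,B,C)\notin\mathcal{I}(\mathcal{D})$ forces $(A,B,C)\notin\mathcal{I}$. By the definition of $\mu$-separation there is a nontrivial walk $\omega$ from some $\alpha\in A\setminus C$ to some $\beta\in B$ along which every collider lies in $\an_\mathcal{D}(C)$, no noncollider lies in $C$, and the last edge points into $\beta$; appealing to the elementary behaviour of local independence (a local independence over $B$ reduces to one over each element of $B$, and shrinking the first argument preserves a local independence) it is enough to produce $(\{\alpha\},\{\beta\},C)\notin\mathcal{I}$. I would take such a $\mu$-connecting walk of minimal length and induct on its length. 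The point is that conditions D0--D3 are, in effect, the ``append one edge'' rules for $\mu$-connecting walks, so each case of the induction is matched to exactly one of them. D0 is the base case of a single edge $\alpha\to\beta$. If the walk ends $\cdots\to v\to\beta$ with $v$ a noncollider, its prefix is a strictly shorter $\mu$-connecting walk and D1 re-attaches the last edge. The symmetric situation at the $\alpha$-end, in particular the common-ancestor configuration $\alpha\leftarrow\cdots\leftarrow v\to\cdots\to\beta$, is handled by D3 iterated along the directed chain from the top node. When the walk runs through a collider one applies D2 to the edge of $\omega$ pointing into that collider, splicing the two pieces of $\omega$ that meet there; and since a collider is only required to lie in $\an_\mathcal{D}(C)$, when it lies in $\an_\mathcal{D}(C)\setminus C$ one first extends the relevant piece along a shortest directed path into $C$ --- itself a $\mu$-connecting walk --- to reduce to the case of a collider in $C$. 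The self-edges at every node are what let the base case and the D1/D3 steps be initialized at the correct node of $A$ or $B$.

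The step I expect to be the main obstacle is the case bookkeeping in the inductive step. A $\mu$-connecting walk may revisit nodes and string several colliders together, so one must verify that stripping edges and splicing in the directed paths to $C$ genuinely produces \emph{strictly shorter} walks that are still $\mu$-connecting --- that no collider leaves $\an_\mathcal{D}(C)$ and no noncollider enters $C$ in the process --- and that the two premises of D2 (which are statements about $\mathcal{I}(\mathcal{D})$, not $\mathcal{I}$) really are supplied by the two halves of the walk. Checking that every local shape of $\omega$ near an endpoint or a collider falls into exactly one of the D0--D3 cases, each with a valid shorter instance for the induction hypothesis, is the technical heart of the argument.
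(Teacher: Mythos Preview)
Your backward direction (faithfulness $\Rightarrow$ transitively closed) is exactly the paper's argument. For the forward direction your case split is the right one, but you over-complicate it by setting up an induction. Look again at D1--D3: their hypotheses are statements about $\mathcal{I}(\mathcal{D})$, not about $\mathcal{I}$. Once you have a $\mu$-connecting walk $\omega$ from $\gamma$ to $\delta$, the relevant subwalk already certifies the $\mathcal{I}(\mathcal{D})$-premise directly, and a \emph{single} application of the matching D-condition yields $(\gamma,\delta,C)\notin\mathcal{I}$; no induction hypothesis is ever used. In particular, ``D3 iterated along the directed chain'' does not work as stated: each application of D3 outputs a statement about $\mathcal{I}$, which is not the $\mathcal{I}(\mathcal{D})$-input the next application would need. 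The paper instead applies D3 once, to the first edge $a_1\to\gamma$ of $\omega$, using that the subwalk from $a_1$ to $\delta$ is itself $\mu$-connecting and hence supplies $(a_1,\delta,C)\notin\mathcal{I}(\mathcal{D})$. (You may be conflating D0--D3 with E0--E3 of Definition~\ref{def:edgetransitive}; those do have $\mathcal{I}$-premises, and the proof of Theorem~\ref{thm:edgeFaith} is genuinely inductive for that reason.)

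The paper also sidesteps your collider bookkeeping: it first invokes Lemma~\ref{prop:allCollInC} to replace $\omega$ by a $\mu$-connecting walk whose colliders already lie in $C$, so the $\an_\mathcal{D}(C)\setminus C$ case never arises and D2 applies immediately to an edge into the collider. Your path-extension idea is workable, but it makes the pieces longer rather than shorter (so it fights your minimal-length setup) and is unnecessary once that reduction is in hand. Finally, the remark about self-edges is a red herring; they play no role in the argument.
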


The above characterizes the set of graphs, $\mathcal{D}$, 
that are faithful with respect to the independence model $\mathcal{I}$. 
However, the conditions in Definition \ref{def:transitivity} use both 
$\mathcal{I}$ and $\mathcal{I}(\mathcal{D})$, and it is therefore not 
immediately clear 
how to construct these graphs if we only have access to the independence 
model $\mathcal{I}$. The next definition defines a graph from an independence 
model, $\mathcal{I}$, only, and Theorem \ref{thm:edgeFaith} proves that  
$\mathcal{I}$ in fact is faithful with respect to the graph that we obtain from 
the 
definition.

\begin{defn}[Edge-transitive graph]
	Let $\mathcal{I}$ be an independence model over $V$. We define a graph 
	$\mathcal{F}_I = (V,E_I)$ by including the edge $\alpha\rightarrow\beta$, 
	$\alpha,\beta\in V$, $\alpha\neq\beta$, if and only if E0-E3 hold for 
	all $C$.
	\begin{enumerate}
		\item[E0] if $\alpha\notin C$, then $(\alpha,\beta,C)\notin 
		\mathcal{I}$,
		\item[E1] if $\alpha\notin C$, then for all $\gamma$: \\ 
		$(\gamma,\alpha,C)\notin \mathcal{I} 
		\Rightarrow (\gamma,\beta,C)\notin \mathcal{I}$,
		\item[E2] if $\alpha\notin C,\beta\in C$, then for all 
		$\gamma,\delta$:\\
		$(\gamma,\beta,C) \notin \mathcal{I}, 
		(\alpha,\delta,C)\notin \mathcal{I} \Rightarrow 
		(\gamma,\delta,C)\notin\mathcal{I}$,
		\item[E3] if $\alpha\notin C$, then for all $\gamma$:\\
		$(\alpha,\gamma,C)\notin \mathcal{I}\Rightarrow 
		(\beta,\gamma,C)\notin \mathcal{I}$.
	\end{enumerate}
	
	For an independence model, $\mathcal{I}$, we say that $\mathcal{F}_I$ 
	defined 
	above is the \emph{edge-transitive graph} corresponding to $\mathcal{I}$.
	\label{def:edgetransitive}
\end{defn}

\begin{thm}
	The independence model $\mathcal{I}$ is faithful with respect to 
	$\mathcal{F}_I$.
	\label{thm:edgeFaith} 
\end{thm}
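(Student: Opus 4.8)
The plan is to establish faithfulness directly, in contrapositive form: if $(A,B,C)\notin\mathcal{I}(\mathcal{F}_I)$, then $(A,B,C)\notin\mathcal{I}$. By hypothesis there is a $\mu$-connecting walk $\omega$ in $\mathcal{F}_I$ between some $a\in A\setminus C$ and some $b\in B$ with a head at $b$, and using the elementary monotonicity properties of local independence (which let one reduce a triple over sets to triples over the relevant single coordinates), together with the base fact that a coordinate is never locally independent of itself given a set not containing it, it suffices to show that the existence of $\omega$ forces $(a,b,C)\notin\mathcal{I}$. I note that routing the argument through Theorem~\ref{thm:characFaith} — checking that $\mathcal{I}$ is transitively closed with respect to $\mathcal{F}_I$ — does not shorten matters: D0 is exactly E0, but the hypotheses of D1--D3 involve $\mathcal{I}(\mathcal{F}_I)$ and can only be discharged by precisely the walk analysis below, after which E1--E3 finish the job.

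The core argument decomposes $\omega$ at its colliders $c_1,\dots,c_m$, listed in order from $a$ to $b$, into collider-free segments: $S_0$ from $a$ to $c_1$, $S_i$ from $c_i$ to $c_{i+1}$, and $S_m$ from $c_m$ to $b$ (a single segment from $a$ to $b$ if $m=0$). A collider-free walk whose end-edges point into its endpoints has a unique ``source'' node $w$ (a noncollider of $\omega$, hence $w\notin C$) and splits into two directed paths out of $w$ whose interior nodes are noncolliders of $\omega$ and therefore also lie outside $C$; for $S_0$ the path toward $a$ may be trivial. Along any directed path whose first node and interior nodes avoid $C$, condition E0 produces a non-independence from the first edge and conditions E1 and E3 extend it edge by edge — E1 pushes the ``target'' coordinate of the triple forward along an edge, E3 pushes the ``source'' coordinate forward. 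Applying E0 and E1 along the directed path inside $S_i$ that ends at $c_{i+1}$ (or $b$) yields $(w,c_{i+1},C)\notin\mathcal{I}$, and E3 along the other directed path relocates the source of this non-independence to the other endpoint; in particular $S_0$ gives $(a,c_1,C)\notin\mathcal{I}$.

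Next I traverse the colliders $c_1,\dots,c_m$ in order, maintaining the invariant $(a,c_i,C)\notin\mathcal{I}$, and splice across each collider with E2. When $c_i\in C$, let $x$ be the neighbour of $c_i$ inside $S_i$ (a noncollider of $\omega$, hence $x\notin C$), so $x\to c_i$ is an edge of $\mathcal{F}_I$; the segment $S_i$ yields, as above, $(x,\mathrm{next},C)\notin\mathcal{I}$ with $\mathrm{next}$ equal to $c_{i+1}$ or $b$, and E2 for the edge $x\to c_i$ combines $(a,c_i,C)\notin\mathcal{I}$ and $(x,\mathrm{next},C)\notin\mathcal{I}$ into $(a,\mathrm{next},C)\notin\mathcal{I}$. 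Iterating through all colliders gives $(a,b,C)\notin\mathcal{I}$, as required.

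The main obstacle is the case where a collider $c_i$ lies in $\an_{\mathcal{F}_I}(C)\setminus C$ rather than in $C$ itself, since E2 only splices at colliders in $C$. I plan to handle this using a directed path $c_i\to z_1\to\cdots\to z$ with $z\in C$ chosen so that $z_1,\dots,z_{r-1}\notin C$: extend $(a,c_i,C)\notin\mathcal{I}$ along this path by E1 to $(a,z,C)\notin\mathcal{I}$; extend the outgoing segment fact by E3 along $x\to c_i\to z_1\to\cdots\to z_{r-1}$ to $(z_{r-1},\mathrm{next},C)\notin\mathcal{I}$; and then apply E2 at the edge $z_{r-1}\to z$. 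The remaining delicate point is orientation bookkeeping: at each use of E0--E3 one must check that the head/tail pattern along $\omega$ and along these auxiliary directed paths matches what the condition requires — this asymmetry is exactly why four conditions are needed — but this is routine once the segment decomposition is in place.
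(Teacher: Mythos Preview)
Your proof is correct and takes a genuinely different organizational route from the paper's. The paper first invokes Lemma~\ref{prop:allCollInC} (cited from \cite{Mogensen2020a}) to replace the $\mu$-connecting walk by one in which every collider already lies in $C$, thereby eliminating what you identify as the ``main obstacle'' before it ever arises; it then argues by induction on the length of this walk, using a three-way case split on the orientation of the penultimate edge (the types in Lemma~\ref{lem:walkTypes}) to decide which of E1, E2, E3 to apply in the inductive step. You instead keep the original walk, decompose it at its colliders into collider-free segments, extract a non-independence from each segment via E0/E1/E3, and splice across colliders with E2 --- handling colliders in $\an_{\mathcal{F}_I}(C)\setminus C$ by an explicit directed-path detour to a node of $C$. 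The paper's route is shorter because the collider-in-$C$ reduction is outsourced to a cited lemma and the induction hides the segment bookkeeping; your route is self-contained and makes the role of each of E0--E3 more transparent, at the cost of the detour argument, which is essentially a rederivation of that lemma translated from $\mathcal{I}(\mathcal{F}_I)$ into $\mathcal{I}$. (Your aside about a coordinate never being locally independent of itself is not actually needed: the walk is nontrivial, so even when $a=b$ the segment argument already delivers $(a,b,C)\notin\mathcal{I}$ via E0--E3.)
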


We say that an independence model, $\mathcal{I}$, is \emph{graphical} if there 
exists a graph $\mathcal{D}$ such that $\mathcal{I} = 
\mathcal{I}(\mathcal{D})$. The following proposition simply states that if the 
independence model is 
Markov and faithful with respect to a graph, i.e., is graphical, then 
$\mathcal{F}_I$ as defined in Definition \ref{def:edgetransitive} is equal to 
$\mathcal{D}$.

\begin{prop}
	Assume $\mathcal{I}$ is graphical, that is, $\mathcal{I} = 
	\mathcal{I}(\mathcal{D})$ for a graph $\mathcal{D}$. In this case, 
	$\mathcal{F}_I = \mathcal{D}$.
	\label{prop:graphicalI}
\end{prop}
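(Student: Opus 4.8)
The plan is to show the two inclusions $\mathcal{F}_I \subseteq \mathcal{D}$ and $\mathcal{D} \subseteq \mathcal{F}_I$ separately, leveraging the assumption $\mathcal{I} = \mathcal{I}(\mathcal{D})$ so that conditions E0--E3 (stated in terms of $\mathcal{I}$) become literally identical to conditions D0--D3 (stated in terms of $\mathcal{I}$ and $\mathcal{I}(\mathcal{D})$). Concretely, under $\mathcal{I} = \mathcal{I}(\mathcal{D})$ the condition ``E$k$ holds for all $C$'' is the same as ``D$k$ holds for all $C$'', i.e., the edge $\alpha\rightarrow\beta$ is in $\mathcal{F}_I$ if and only if $\mathcal{I}$ is $C$-transitive with respect to $\mathcal{D}$ \emph{restricted to that edge} for every $C$.

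First I would treat the direction $\mathcal{D} \subseteq \mathcal{F}_I$. Take an edge $\alpha\rightarrow\beta$ in $\mathcal{D}$. By Proposition \ref{prop:graphTransitive}, $\mathcal{I}(\mathcal{D})$ is transitively closed with respect to $\mathcal{D}$, so in particular conditions D0--D3 hold for this edge and all $C$. Since $\mathcal{I} = \mathcal{I}(\mathcal{D})$, substituting $\mathcal{I}$ for $\mathcal{I}(\mathcal{D})$ everywhere turns D0--D3 into E0--E3 for this edge and all $C$. Hence $\alpha\rightarrow\beta$ satisfies the defining condition of $\mathcal{F}_I$ and is included in $E_I$. (Self-edges are in both graphs by convention, so only $\alpha\neq\beta$ needs checking.)

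For the reverse inclusion $\mathcal{F}_I \subseteq \mathcal{D}$, I argue contrapositively: suppose $\alpha\rightarrow\beta$ is \emph{not} in $\mathcal{D}$, with $\alpha\neq\beta$; I must exhibit some $C$ for which one of E0--E3 fails. The natural candidate is $C = V\setminus\{\alpha\}$, following the pairwise-Markov intuition. If $\alpha\notin C$ is to hold we need $\alpha \notin V\setminus\{\alpha\}$, which is automatic. Now since $\alpha\rightarrow\beta \notin \mathcal{D}$ and $\alpha\neq\beta$, I claim $\beta$ is $\mu$-separated from $\alpha$ given $V\setminus\{\alpha\}$: any nontrivial walk from $\alpha$ to $\beta$ with all noncolliders outside $C = V\setminus\{\alpha\}$ forces every intermediate node to be a collider, but the standard argument (as in the pairwise/global Markov equivalence of $\mu$-separation, cf.\ the references cited after Definition \ref{def:globalMarkov}) shows such a walk cannot end with a head at $\beta$ unless $\alpha\rightarrow\beta$ is itself an edge. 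Hence $(\alpha,\beta,V\setminus\{\alpha\}) \in \mathcal{I}(\mathcal{D}) = \mathcal{I}$, so $(\alpha,\beta,C)\in\mathcal{I}$ with $\alpha\notin C$, which is exactly the negation of E0. Therefore $\alpha\rightarrow\beta\notin \mathcal{F}_I$, completing the inclusion.

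The main obstacle is the $\mu$-separation claim in the last paragraph: verifying that a non-edge $\alpha\not\rightarrow\beta$ in $\mathcal{D}$ genuinely yields $\beta$ being $\mu$-separated from $\alpha$ given $V\setminus\{\alpha\}$. This is essentially the ``pairwise implies a specific separation'' half of the Markov-property machinery for $\mu$-separation, and one must be careful with the collider/ancestor bookkeeping (a walk $\alpha\rightarrow\gamma\leftarrow\cdots$ has $\gamma$ a collider only if it has two heads, and colliders must lie in $\an_\mathcal{D}(C)$; with $C = V\setminus\{\alpha\}$ every node except possibly $\alpha$ is trivially in $C$ hence in $\an_\mathcal{D}(C)$, so the binding constraint is really ``the final edge has a head at $\beta$'' together with ``no noncolliders in $C$''). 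If a fully self-contained argument is wanted, one can instead invoke the pairwise-to-global equivalence result for $\mu$-separation directly to get $(\alpha,\beta,V\setminus\{\alpha\})\in\mathcal{I}(\mathcal{D})$ from $\alpha\not\rightarrow\beta$, which is the cleanest route and sidesteps a hand-rolled walk analysis.
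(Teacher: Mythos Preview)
Your proof is correct and follows essentially the same route as the paper's: for $\mathcal{D}\subseteq\mathcal{F}_I$ both you and the paper invoke Proposition~\ref{prop:graphTransitive} and the identification of D0--D3 with E0--E3 under $\mathcal{I}=\mathcal{I}(\mathcal{D})$, and for $\mathcal{F}_I\subseteq\mathcal{D}$ both argue that a non-edge admits a separating set $C\not\ni\alpha$ so that E0 fails. The only cosmetic difference is that you name the specific witness $C=V\setminus\{\alpha\}$ and discuss verifying the $\mu$-separation, whereas the paper simply asserts the existence of such a $C$; your explicit choice works (any walk ending $\gamma\rightarrow\beta$ has $\gamma$ a noncollider, forcing $\gamma=\alpha$ and hence the missing edge), so your caution there is unnecessary but harmless.
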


Any independence model is faithful with respect to the empty graph, and it is 
useful to introduce the concept of \emph{maximal faithfulness}. We say 
that $\mathcal{I}$ is \emph{maximally faithful} with respect 
to $\mathcal{D}$ if it is faithful with respect to $\mathcal{D}$ and it is not 
faithful with respect to any proper 
supergraph of $\mathcal{D}$.

\section{Weaker Notions of Faithfulness}
\label{sec:weakFaith}

The Markov condition holds under fairly general assumptions, however, some 
version of a faithfulness-like assumption is needed for structure learning. It 
is possible to define such notions that are weaker than faithfulness, yet 
useful in the context of structure learning (in DAG-based models, see, e.g., 
\cite{zhang2008detection, ramsey2006adjacency}). In local independence models, 
\cite{mogensen2020causal} gives the 
following definition.

\begin{defn}[Ancestor faithfulness, \cite{mogensen2020causal}]
	We say that $\mathcal{I}$ is \emph{ancestor faithful} with respect to 
	$\mathcal{D}$ if, for all $A,B$, and $C$ such that $A\not\subseteq C$, the 
	existence of a 
	directed and 
	$\mu$-connecting path from $A$ to $B$ given $C$ implies $(A,B,C)\notin 
	\mathcal{I}$. 
	\label{def:ancFaith}
\end{defn}

The following is a weaker notion than that of ancestor faithfulness.

\begin{defn}[Parent faithfulness]
	We say that $\mathcal{I}$ is \emph{parent faithful} with respect to 
	$\mathcal{D}$ if, for all $A,B$, and $C$ such that $A\not\subseteq 
	C$, the existence of a 
	directed edge $\alpha\rightarrow\beta$ such that $\alpha\in A$ and 
	$\beta\in B$ implies $(A,B,C)\notin \mathcal{I}$.
	\label{def:paFaith}
\end{defn}

We say that $\beta$ is \emph{inseparable} from $\alpha$ if there is no 
$C\subseteq V\setminus \{\alpha \}$ such that $(\alpha,\beta,C)\in 
\mathcal{I}(\mathcal{D})$.
Parent faithfulness can be seen as an analogue of adjacency faithfulness in 
DAG-based models: In a DAG, nodes are inseparable if and only if they are 
adjacent. In a local independence graph, a node $\beta$ is inseparable from a 
node $\alpha$ if and only if the edge $\alpha\rightarrow\beta$ is in the graph. 
One should note that the notion of inseparability is symmetric in DAGs, but 
asymmetric in local independence graphs. This means that in a local 
independence graph, $\alpha$ need not be inseparable from $\beta$ even if 
$\beta$ is inseparable from $\alpha$.

It may be that faithfulness is not violated, however, only closed to being 
violated. If so, learning methods that only assume weaker notions 
of 
faithfulness may show better performance \citep{ramsey2006adjacency, 
zhalama2017weakening}.

We define an even weaker faithfulness-like assumption.

\begin{defn}[Parent dependence]
	We say that $\mathcal{I}$ satisfies \emph{parent dependence} with respect 
	to $\mathcal{D}$, if $\alpha\rightarrow\beta$ implies $(\alpha,\beta, 
	\beta) \notin 
	\mathcal{I}$ for all $\alpha\neq \beta$.
\end{defn}

\subsection{Causal Minimality}

Faithfulness, and similar assumptions, are common for structure learning. In 
the context, of local independence there is a far weaker notion which is in 
fact sufficient for structure learning.

The concept of a maximally faithful graph is essentially dual to the concept of 
\emph{causal minimality}. We say that $\mathcal{I}$ is \emph{causally minimal} 
with respect to $\mathcal{D}$ if it is Markov with respect to $\mathcal{D}$ and 
there is no proper subgraph of $\mathcal{D}$, $\mathcal{D}'$, such that 
$\mathcal{I}$ is Markov with respect to $\mathcal{D}'$.
\citep{peters2017}. In symbols, $\mathcal{I}(\mathcal{D}) \subseteq 
\mathcal{I}$ 
and there is no $\mathcal{D}_0 \subsetneq \mathcal{D}$ such that 
$\mathcal{I}(\mathcal{D}_0) \subseteq \mathcal{I}$. Causal minimality is also 
known as \emph{minimal Markovness} \citep{sadeghi2017faithfulness}.

\begin{prop}
	Let $\mathcal{I}$ be an independence model and $\mathcal{D}$ be a graph. If 
	$\mathcal{I}$ is faithful with respect to $\mathcal{D}$, then it is 
	ancestor faithful with respect to $\mathcal{D}$. If $\mathcal{I}$ is 
	ancestor faithful with respect to $\mathcal{D}$, then it is parent 
	faithful with respect to $\mathcal{D}$. If $\mathcal{I}$ is parent faithful 
	and Markov with respect 
	to $\mathcal{D}$, then $\mathcal{I}$ is causally minimal with respect to 
	$\mathcal{D}$.
	\label{prop:hierarchyFaith}
\end{prop}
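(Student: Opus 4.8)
The plan is to prove the three implications in Proposition~\ref{prop:hierarchyFaith} in order, since each is strictly weaker than the previous one.

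\textbf{Faithfulness $\Rightarrow$ ancestor faithfulness.} Suppose $\mathcal{I}$ is faithful with respect to $\mathcal{D}$, i.e.\ $\mathcal{I}\subseteq\mathcal{I}(\mathcal{D})$, and suppose toward a contradiction that ancestor faithfulness fails: there are $A,B,C$ with $A\not\subseteq C$ and a directed $\mu$-connecting path from some $\alpha\in A$ to some $\beta\in B$ given $C$, yet $(A,B,C)\in\mathcal{I}$. By faithfulness $(A,B,C)\in\mathcal{I}(\mathcal{D})$, i.e.\ $B$ is $\mu$-separated from $A$ given $C$, so by Definition~\ref{def:deltaSep} there is no nontrivial walk between any node of $A\setminus C$ and any node of $B$ with all colliders in $\an_\mathcal{D}(C)$, no noncollider in $C$, and a head at $\beta$. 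But a directed $\mu$-connecting path from $\alpha$ to $\beta$ given $C$ is exactly such a walk (it has no colliders, so the collider condition is vacuous; being $\mu$-connecting it has no noncollider in $C$; and since it is directed from $\alpha$ to $\beta$ its final edge has a head at $\beta$), and since $A\not\subseteq C$ we may take $\alpha\in A\setminus C$ (this is the one place where one must check that the endpoint of the connecting path can be chosen outside $C$ — the definition of ancestor faithfulness quantifies over $A$, so I would note that the relevant $\alpha$ witnessing $\mu$-connection must lie in $A\setminus C$ for the path to be $\mu$-connecting in the first place). This contradicts $\mu$-separation.

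\textbf{Ancestor faithfulness $\Rightarrow$ parent faithfulness.} This is essentially immediate: a single directed edge $\alpha\rightarrow\beta$ with $\alpha\in A$, $\beta\in B$ is itself a directed path from $\alpha$ to $\beta$, and it is $\mu$-connecting given any $C$ with $A\not\subseteq C$ provided $\alpha\notin C$ — there are no noncolliders and no colliders on a one-edge path, and the final (only) edge has a head at $\beta$. If $\alpha\in C$ then one can instead use that $A\not\subseteq C$ to pick another element, but more cleanly: parent faithfulness as stated only needs $A\not\subseteq C$, and if $\alpha\in C$ one still has $\alpha\rightarrow\beta$ as a directed and $\mu$-connecting path only when $\alpha\notin C$; so I would first reduce to the case $\alpha\notin C$ by noting that if $\alpha\in C$ the hypothesis $A\not\subseteq C$ combined with the edge does not directly give a connecting path, hence the cleanest route is to observe that parent faithfulness is the special case of ancestor faithfulness where the directed path has length one, and invoke Definition~\ref{def:ancFaith} directly with that path. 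I expect this to be the least troublesome step.

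\textbf{Parent faithfulness and Markov $\Rightarrow$ causal minimality.} Assume $\mathcal{I}$ is Markov ($\mathcal{I}(\mathcal{D})\subseteq\mathcal{I}$) and parent faithful with respect to $\mathcal{D}$. Suppose $\mathcal{D}_0\subsetneq\mathcal{D}$ with $\mathcal{I}(\mathcal{D}_0)\subseteq\mathcal{I}$; we must derive a contradiction. Since $\mathcal{D}_0$ is a proper subgraph, there is an edge $\alpha\rightarrow\beta$ in $\mathcal{D}$ but not in $\mathcal{D}_0$, with $\alpha\neq\beta$ (self-edges are always present, so the missing edge is non-trivial). By the pairwise Markov property applied to $\mathcal{D}_0$ — which follows from the global Markov property $\mathcal{I}(\mathcal{D}_0)\subseteq\mathcal{I}$, since $\alpha\not\rightarrow_{\mathcal{D}_0}\beta$ implies $\alpha$ and $\beta$ are $\mu$-separated in $\mathcal{D}_0$ given $V\setminus\{\alpha\}$, hence $(\alpha,\beta,V\setminus\{\alpha\})\in\mathcal{I}(\mathcal{D}_0)\subseteq\mathcal{I}$. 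On the other hand, parent faithfulness applied with $A=\{\alpha\}$, $B=\{\beta\}$, $C=V\setminus\{\alpha\}$ (note $A\not\subseteq C$ and the edge $\alpha\rightarrow\beta$ exists in $\mathcal{D}$) gives $(\alpha,\beta,V\setminus\{\alpha\})\notin\mathcal{I}$ — a contradiction. The only subtle point here is justifying that $\alpha$ and $\beta$ are $\mu$-separated in $\mathcal{D}_0$ given $V\setminus\{\alpha\}$ when the edge is absent, i.e.\ that the pairwise Markov property holds; I would either cite the standing equivalence of pairwise and global Markov properties mentioned after Definition~\ref{def:globalMarkov}, or give the short direct argument that any walk from $\alpha$ to $\beta$ in $\mathcal{D}_0$ with a head at $\beta$ and no noncollider in $V\setminus\{\alpha\}$ must have every intermediate node a collider, hence in $\an_{\mathcal{D}_0}(V\setminus\{\alpha\})$ trivially, so the walk is $\mu$-connecting only if... — actually the clean statement is that with $C=V\setminus\{\alpha\}$, no noncollider may lie in $C$, forcing all intermediate nodes to be colliders, and the penultimate node $\gamma$ on such a walk satisfies $\gamma\rightarrow\beta$ is impossible since $\gamma$ is a collider, so the walk cannot end with a head at $\beta$ unless it is the single edge $\alpha\rightarrow\beta$, which is absent in $\mathcal{D}_0$. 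This verification of the pairwise Markov property for $\mathcal{D}_0$ is the main obstacle, and I would isolate it as a small lemma.
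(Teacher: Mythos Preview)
Your approach is correct and matches the paper's. The paper dispatches the first two implications in a single sentence (``obvious from the definitions''), so your more detailed unpacking --- including the observation that the endpoint of a $\mu$-connecting path must lie in $A\setminus C$, and that a single edge is a length-one directed path --- is simply a spelled-out version of the same reasoning.

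For the third implication the paper uses the same contradiction strategy but with a different separating set: instead of your $C=V\setminus\{\alpha\}$, it takes $C=\pa_{\mathcal{D}_0}(\beta)$. This choice makes the $\mu$-separation of $\beta$ from $\alpha$ in $\mathcal{D}_0$ immediate --- any walk ending with a head at $\beta$ has its penultimate node in $\pa_{\mathcal{D}_0}(\beta)=C$, and that node is necessarily a noncollider (since the final edge is a tail at it), so the walk is blocked --- thereby eliminating the ``small lemma'' you flag as the main obstacle. Your choice also works (as you argue, the only allowable noncollider is $\alpha$ itself, forcing the absent edge $\alpha\rightarrow\beta$ to appear), but the paper's route is slightly cleaner.
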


Definition \ref{def:edgetransitive} allows us to construct a faithful graph 
from an independence model (Theorem \ref{thm:edgeFaith}). We can also directly 
construct 
a causally minimal graph. For an independence model, $\mathcal{I}$, we define a 
graph, $\mathcal{D}_I$, such that $\alpha\rightarrow\beta$ is in 
$\mathcal{D}_I$ if and only if $(\alpha,\beta,V\setminus\{\alpha 
\})\notin\mathcal{I}$ and we say that $\mathcal{D}_I$ is the \emph{induced 
local 
independence graph} corresponding to $\mathcal{I}$.

\begin{prop}[\cite{mogensenThesis2020}]
	Assume equivalence of pairwise and global Markov properties. The induced 
	local independence graph corresponding to 
	$\mathcal{I}$, $\mathcal{D}_I$, is causally minimal with respect to 
	$\mathcal{I}$.
	\label{prop:inducedCauMin}
\end{prop}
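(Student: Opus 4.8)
The plan is to verify the two requirements of causal minimality for $\mathcal{D}_I$ separately: that $\mathcal{I}$ is Markov with respect to $\mathcal{D}_I$, and that no proper subgraph of $\mathcal{D}_I$ also makes $\mathcal{I}$ Markov. Only the first part uses the assumed equivalence of the pairwise and global Markov properties.

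For the Markov part, I would note that $\mathcal{I}$ satisfies the pairwise Markov property with respect to $\mathcal{D}_I$ essentially by construction: for $\alpha\neq\beta$, the edge $\alpha\rightarrow\beta$ is absent from $\mathcal{D}_I$ exactly when $(\alpha,\beta,V\setminus\{\alpha\})\in\mathcal{I}$, and the always-present self-edges impose no condition in Definition \ref{def:pairwiseMarkov}. The assumed equivalence then yields the global Markov property, $\mathcal{I}(\mathcal{D}_I)\subseteq\mathcal{I}$.

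For minimality, let $\mathcal{D}_0=(V,E_0)$ be a proper subgraph of $\mathcal{D}_I$. Because every graph here contains all self-edges, some edge $\alpha\rightarrow\beta$ with $\alpha\neq\beta$ lies in $\mathcal{D}_I$ but not in $\mathcal{D}_0$, and by the definition of $\mathcal{D}_I$ this means $(\alpha,\beta,V\setminus\{\alpha\})\notin\mathcal{I}$. I would then show the complementary fact $(\alpha,\beta,V\setminus\{\alpha\})\in\mathcal{I}(\mathcal{D}_0)$, i.e. that $\beta$ is $\mu$-separated from $\alpha$ given $C:=V\setminus\{\alpha\}$ in $\mathcal{D}_0$; this gives $\mathcal{I}(\mathcal{D}_0)\not\subseteq\mathcal{I}$, so $\mathcal{I}$ is not Markov with respect to $\mathcal{D}_0$, and since $\mathcal{D}_0$ was arbitrary, $\mathcal{I}$ is causally minimal with respect to $\mathcal{D}_I$. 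To establish the $\mu$-separation statement, suppose for contradiction there is a nontrivial $\mu$-connecting walk $\omega=\alpha_1,e_1,\ldots,\alpha_{l+1}$ in $\mathcal{D}_0$ from $\alpha_1=\alpha$ to $\alpha_{l+1}=\beta$ given $C$, of minimal length $l\geq 1$. The final edge $e_l$ must have a head at $\beta$, so $e_l=\alpha_l\rightarrow\beta$, and in particular $e_l$ has a tail at $\alpha_l$. If $\alpha_l=\alpha$ (in particular when $l=1$), then $\alpha\rightarrow\beta$ would be an edge of $\mathcal{D}_0$, contradicting $\alpha\rightarrow\beta\notin E_0$. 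If $l\geq 2$ and $\alpha_l\notin\{\alpha,\beta\}$, then $\alpha_l\in C$ is an internal node, but the tail of $e_l$ at $\alpha_l$ makes $\alpha_l$ not a collider and hence a noncollider on $\omega$, forcing $\alpha_l\notin C$ — a contradiction. Finally, if $\alpha_l=\beta$ then $e_l$ is the self-edge at $\beta$: as a noncollider, $\alpha_l\notin C$ would give $\alpha=\beta$, which is false; as a collider, $e_{l-1}$ has a head at $\beta$, so $\alpha_1,e_1,\ldots,\alpha_l$ is a shorter $\mu$-connecting walk from $\alpha$ to $\beta$ given $C$ (its internal nodes keep their incident edges, hence their collider status and the associated membership constraints), contradicting minimality. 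Thus no such walk exists.

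The Markov part is routine. The step that needs the most care is the walk analysis, and within it the treatment of self-edges: since a self-edge simultaneously has a head and a tail at its endpoint, deciding whether it makes its node a collider is delicate, and the minimal-length reduction is exactly what removes that ambiguity. As an aside, the result could instead be read off from Proposition \ref{prop:hierarchyFaith} once one checks parent faithfulness of $\mathcal{I}$ with respect to $\mathcal{D}_I$, but that route would call for graphoid-type decomposition properties of $\mathcal{I}$, whereas the argument sketched here relies only on the combinatorics of $\mu$-separation together with the equivalence hypothesis.
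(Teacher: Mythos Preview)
Your proof is correct and follows essentially the same two-step approach as the paper: establish the global Markov property via the pairwise property and the assumed equivalence, then show any proper subgraph $\mathcal{D}_0$ fails Markovness because $(\alpha,\beta,V\setminus\{\alpha\})\in\mathcal{I}(\mathcal{D}_0)\setminus\mathcal{I}$ for the dropped edge $\alpha\rightarrow\beta$. The only difference is that the paper asserts the $\mu$-separation $(\alpha,\beta,V\setminus\{\alpha\})\in\mathcal{I}(\mathcal{D}_0)$ without justification, whereas you spell out the walk analysis (including the self-edge case) explicitly.
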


\begin{prop}
	The graph $\mathcal{D}_I$ is the only causally minimal graph with respect 
	to $\mathcal{I}$.
	\label{prop:cauMinUnique}
\end{prop}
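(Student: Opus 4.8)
Proposition~\ref{prop:cauMinUnique} asserts that $\mathcal{D}_I$ is the \emph{only} causally minimal graph with respect to $\mathcal{I}$. By Proposition~\ref{prop:inducedCauMin} we already know $\mathcal{D}_I$ is causally minimal (assuming equivalence of the pairwise and global Markov properties, which I will carry through), so the content here is \emph{uniqueness}: if $\mathcal{D}$ is any causally minimal graph with respect to $\mathcal{I}$, then $\mathcal{D} = \mathcal{D}_I$. I would prove this by a double-inclusion argument on the edge sets, $E \subseteq E_I$ and $E_I \subseteq E$.

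First I would show $\mathcal{D}_I \subseteq \mathcal{D}$, i.e. every edge of $\mathcal{D}_I$ is an edge of $\mathcal{D}$. Self-edges are always present by convention, so fix $\alpha \neq \beta$ with $\alpha \rightarrow_{\mathcal{D}_I} \beta$; by definition of $\mathcal{D}_I$ this means $(\alpha,\beta,V\setminus\{\alpha\}) \notin \mathcal{I}$. Suppose for contradiction that $\alpha \not\rightarrow_\mathcal{D} \beta$. Since $\mathcal{I}$ is Markov with respect to $\mathcal{D}$, and Markov equals pairwise Markov here, the pairwise Markov property gives $(\alpha,\beta,V\setminus\{\alpha\}) \in \mathcal{I}$, a contradiction. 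Hence $\alpha \rightarrow_\mathcal{D} \beta$, so $E_I \subseteq E$.

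Next I would show $\mathcal{D} \subseteq \mathcal{D}_I$. Again self-edges are automatic, so fix $\alpha \neq \beta$ with $\alpha \rightarrow_\mathcal{D} \beta$ and suppose for contradiction $\alpha \not\rightarrow_{\mathcal{D}_I} \beta$, i.e. $(\alpha,\beta,V\setminus\{\alpha\}) \in \mathcal{I}$. I would form $\mathcal{D}_0 = \mathcal{D}$ with the edge $\alpha\rightarrow\beta$ removed, a proper subgraph of $\mathcal{D}$, and argue that $\mathcal{I}$ is still Markov with respect to $\mathcal{D}_0$, contradicting causal minimality of $\mathcal{D}$. Using the equivalence of global and pairwise Markov, it suffices to check the pairwise Markov property for $\mathcal{D}_0$: for any non-edge $\gamma \not\rightarrow_{\mathcal{D}_0}\delta$ we need $(\gamma,\delta,V\setminus\{\gamma\}) \in \mathcal{I}$. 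If the non-edge is already a non-edge in $\mathcal{D}$, this holds because $\mathcal{I}$ is pairwise Markov for $\mathcal{D}$; the only new non-edge is $\alpha \not\rightarrow_{\mathcal{D}_0}\beta$, and the required triple $(\alpha,\beta,V\setminus\{\alpha\}) \in \mathcal{I}$ is exactly our contradiction hypothesis. So $\mathcal{D}_0$ is a proper Markov subgraph, contradicting minimality; hence $\alpha\rightarrow_{\mathcal{D}_I}\beta$ and $E \subseteq E_I$. Combining the two inclusions gives $\mathcal{D} = \mathcal{D}_I$.

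The main delicacy is the implicit appeal to equivalence of pairwise and global Markov properties: Proposition~\ref{prop:inducedCauMin} states this assumption explicitly, and I would keep it as a hypothesis (or note it is inherited). The step where I pass from ``$\mathcal{I}$ is pairwise Markov for $\mathcal{D}_0$'' to ``$\mathcal{I}$ is global Markov for $\mathcal{D}_0$'' is where that equivalence is used; the rest is a routine edge-by-edge bookkeeping argument. A minor point worth stating cleanly is that deleting a single non-self-edge from $\mathcal{D}$ adds exactly one new ordered non-adjacency, namely $(\alpha,\beta)$, so no other pairwise constraint is affected.
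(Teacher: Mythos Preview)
Your proof is correct and shares the forward inclusion $\mathcal{D}_I \subseteq \mathcal{D}$ with the paper: both argue that a missing edge $\alpha\not\rightarrow_\mathcal{D}\beta$ forces $(\alpha,\beta,V\setminus\{\alpha\})\in\mathcal{I}$ via the Markov property (you invoke pairwise Markov, the paper verifies the $\mu$-separation and applies global Markov---equivalent under the standing assumption).

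The reverse inclusion is where you diverge. The paper does not remove a single edge; instead, having established $\mathcal{D}_I\subseteq\mathcal{D}$, it simply observes that $\mathcal{D}_I$ itself is Markov (this is part of Proposition~\ref{prop:inducedCauMin}), so if $\mathcal{D}_I\subsetneq\mathcal{D}$ then $\mathcal{D}_I$ is a proper Markov subgraph of $\mathcal{D}$, contradicting causal minimality of $\mathcal{D}$. This is a one-line finish that recycles Proposition~\ref{prop:inducedCauMin}. Your route---deleting the offending edge to form $\mathcal{D}_0$, checking that the only new non-adjacency is covered by the hypothesis $(\alpha,\beta,V\setminus\{\alpha\})\in\mathcal{I}$, and then upgrading pairwise to global Markov---is more hands-on but has the virtue of making the use of the pairwise/global equivalence completely explicit and of not depending on Proposition~\ref{prop:inducedCauMin} for the uniqueness step. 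Both arguments ultimately rely on that equivalence, just at different points.
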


In other words, assuming the equivalence of pairwise and global Markov 
properties, $\mathcal{I}$ is causally minimal with respect to $\mathcal{D}$ if 
and only if $\alpha\rightarrow\beta$ is in $\mathcal{D}$ exactly when 
$(\alpha,\beta,V\setminus \{\alpha\} )\notin \mathcal{I}$.

Theorem \ref{thm:detectable} argues that violations of faithfulness, in 
principle, are detectable under Markov and causal minimality assumptions 
(Appendix \ref{app:detectable}).

The next proposition uses \emph{asymmetric graphoid properties} that hold in 
local independence models, see, e.g., \cite{didelezUAI2006, mogensenUAI2018} 
and Appendix \ref{sec:asym}.

\begin{prop}
	Assume $\mathcal{I}$ is causally minimal with respect to $\mathcal{D}$. If 
	$\alpha\notin pa_\mathcal{D}(\beta)$, $\alpha\neq \beta$, and 
	$\pa_\mathcal{D}(\beta)\subseteq C$, then $(\alpha,\beta, 
	C)\in\mathcal{I}$. 
	Assume that $\mathcal{I}$ satisfies left weak union, left decomposition, 
	and left 
	contraction, that we 
	have equivalence of pairwise and global Markov properties, 
	$\alpha\in pa_\mathcal{D}(\beta)$, $\alpha\neq \beta$, and that 
	$\pa_\mathcal{D}(\beta)\setminus \{ \alpha\} \subseteq C$. In this case,
	$(\alpha,\beta, 
	C)\notin\mathcal{I}$.
	\label{prop:superCpa}
\end{prop}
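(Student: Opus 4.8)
The plan is to prove the two claims separately, since they go in opposite directions and rely on different tools.

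For the first claim, assume $\mathcal{I}$ is causally minimal with respect to $\mathcal{D}$, $\alpha \neq \beta$, $\alpha \notin \pa_\mathcal{D}(\beta)$, and $\pa_\mathcal{D}(\beta) \subseteq C$. I want to show $(\alpha,\beta,C) \in \mathcal{I}$. By the global Markov property (which holds since causal minimality includes Markovness), it suffices to show $(\alpha,\beta,C) \in \mathcal{I}(\mathcal{D})$, i.e. that $\beta$ is $\mu$-separated from $\alpha$ given $C$. Consider any nontrivial walk ending with a head at $\beta$; its last edge is $\gamma \rightarrow \beta$ for some $\gamma \in \pa_\mathcal{D}(\beta)$. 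If this walk has length $1$, then $\gamma = \alpha \in \pa_\mathcal{D}(\beta)$, contradicting the hypothesis, unless $\alpha \in C$ in which case $\alpha$ is not an endpoint we need to block from (the definition only forbids walks between $\alpha \in A \setminus C$ and $\beta$); so assume $\alpha \notin C$ and the walk has length $\geq 2$. Then $\gamma$ is a noncollider on the walk (it is an endpoint-adjacent node with an incoming... wait, $\gamma$ has the edge into $\beta$ pointing out of $\gamma$ toward $\beta$, and one more edge on the other side), and $\gamma \in \pa_\mathcal{D}(\beta) \subseteq C$, so the walk is blocked by a noncollider in $C$. Hence no $\mu$-connecting walk exists and $(\alpha,\beta,C) \in \mathcal{I}(\mathcal{D}) \subseteq \mathcal{I}$. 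I should double check the collider/noncollider bookkeeping at $\gamma$: on a walk $\ldots \, \delta \,\text{--}\, \gamma \rightarrow \beta$, the node $\gamma$ is a noncollider precisely because one of its incident walk-edges points away from it (toward $\beta$), regardless of the orientation of the $\delta$--$\gamma$ edge, so $\gamma \in C$ does block. This bookkeeping is the only subtle point in this direction.

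For the second claim, assume equivalence of pairwise and global Markov properties, that $\mathcal{I}$ satisfies left weak union, left decomposition, and left contraction, and that $\alpha \in \pa_\mathcal{D}(\beta)$, $\alpha \neq \beta$, $\pa_\mathcal{D}(\beta) \setminus \{\alpha\} \subseteq C$. I want $(\alpha,\beta,C) \notin \mathcal{I}$. I would argue by contradiction: suppose $(\alpha,\beta,C) \in \mathcal{I}$. Since $\mathcal{I}$ is causally minimal (hence Markov) and $\mathcal{D}_I$ is the unique causally minimal graph (Propositions \ref{prop:inducedCauMin} and \ref{prop:cauMinUnique}), the edge $\alpha \rightarrow \beta$ in $\mathcal{D}$ forces $(\alpha,\beta,V\setminus\{\alpha\}) \notin \mathcal{I}$. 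The strategy is to use the left graphoid properties to transport the assumed independence $(\alpha,\beta,C) \in \mathcal{I}$ — together with independences that hold by the Markov property because the nodes in $V \setminus (C \cup \{\alpha,\beta\})$ are non-parents of $\beta$ that lie outside the relevant separating set — up to $(\alpha,\beta,V\setminus\{\alpha\}) \in \mathcal{I}$, contradicting causal minimality. Concretely, write $D = V \setminus (C \cup \{\alpha,\beta\})$; by the first part of the proposition (applied coordinatewise, or its Markov-based argument) each $\delta \in D$ satisfies $(\delta, \beta, C \cup \{\alpha\}) \in \mathcal{I}$ and similar statements, and I would assemble $(\alpha,\beta,C) \in \mathcal{I}$ with these via left weak union / left decomposition / left contraction to build $(\alpha, \beta, V\setminus\{\alpha\}) \in \mathcal{I}$.

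The main obstacle I anticipate is the second claim: getting the exact sequence of left-graphoid manipulations right, including verifying that the auxiliary independences I need (of the form "$\beta$ is locally independent of some non-parent set given $C$ augmented appropriately") genuinely follow from the Markov property via $\mu$-separation, and that the asymmetric versions of weak union, decomposition, and contraction combine in the correct order without needing a symmetry property that local independence lacks. I would handle this by first proving a clean lemma: if $\pa_\mathcal{D}(\beta) \subseteq C'$ and $\beta \notin C'$ then $(V \setminus C', \beta, C') \in \mathcal{I}$ — wait, that includes $\alpha$; rather, if $S \cap \pa_\mathcal{D}(\beta) = \emptyset$ and $\pa_\mathcal{D}(\beta) \subseteq C'$, $\beta \notin C'$, then $(S,\beta,C') \in \mathcal{I}$, which is exactly the Markov-plus-$\mu$-separation argument from the first part — and then inducting on $|D|$, peeling off one element of $D$ at a time and using left contraction to merge it into the conditioning set while preserving $(\alpha,\beta,\cdot) \in \mathcal{I}$.
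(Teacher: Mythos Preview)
Your proposal is correct and follows essentially the same approach as the paper: the first claim via the $\mu$-separation argument that the penultimate node on any candidate walk is a parent of $\beta$ and hence a noncollider in $C$, and the second claim by contradiction, using the Markov property to obtain an independence of the non-parents of $\beta$ given $C\cup\{\alpha\}$ and then combining it with the assumed $(\alpha,\beta,C)\in\mathcal{I}$ via the left-graphoid rules to reach $(\alpha,\beta,V\setminus\{\alpha\})\in\mathcal{I}$, contradicting causal minimality. The only difference is that the paper applies left contraction once with the full set $V\setminus\pa_{\mathcal{D}}(\beta)$ (exactly your ``clean lemma'') rather than peeling off elements of $D$ one at a time, so your induction is unnecessary but harmless.
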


\subsection{Hierarchy of Faithfulness Assumptions}
\label{ssec:hierarchyFaith}

In this subsection, we rewrite the conditions in Definition 
\ref{def:transitivity} to illustrate how they correspond to different 
faithfulness assumptions. We first define the notion of \emph{trek 
faithfulness}.

\begin{defn}[Trek faithfulness]
	We say that a walk is a \emph{trek} if it has no colliders. We say that 
	$\mathcal{I}$ is \emph{trek faithful} with respect to $\mathcal{D}$, if for 
	all disjoint $A$, $B$, and $C$, the existence of a $\mu$-connecting trek 
	from $A$ to $B$ given $C$ implies $(A,B,C)\notin \mathcal{I}$.
	\label{def:trekfaith}
\end{defn}

It is immediate that faithfulness implies trek faithfulness, and that trek 
faithfulness implies ancestor faithfulness, noting that a directed walk is also 
a trek.

Lemma \ref{lem:conditions} reformulates the conditions from Definition
\ref{def:transitivity} to provide an equivalent set of conditions. These 
conditions correspond to the hierarchical nature of the faithfulness 
conditions: D0 is equivalent with parent faithfulness, the combination of D0 
and D1' is equivalent with ancestor faithfulness, and the combination of D0, 
D1', and D3' is equivalent with trek faithfulness. The combination of D0, D1', 
D2, and D3' is equivalent with faithfulness. This is the content of Theorem 
\ref{thm:conditionsFaith}.

\begin{lem}
	For an edge $\alpha\rightarrow\beta$ and a set $C$, we define the following 
	conditions.
	
	\begin{itemize}
		\item[D1'] If there is a directed path which is $\mu$-connecting 
		from $\gamma$ to $\alpha$ given $C$ in $\mathcal{D}$, then 
		$(\gamma,\beta,C)\notin \mathcal{I}$.
		\item[D3'] If there is a trek which is $\mu$-connecting from $\alpha$ 
		to $\gamma$ given $C$ in $\mathcal{D}$, then $(\beta,\gamma,C)\notin 
		\mathcal{I}$.
		\label{lem:conditions}
	\end{itemize} 
	
	An independence model $\mathcal{I}$ and a graph $\mathcal{D}$ satisfy D0, 
	D1, 
	D2, and D3 for every edge in $\mathcal{D}$ and set $C$ if and only if they 
	satisfy D0, 
	D1', 
	D2, and D3' for every edge in $\mathcal{D}$ and set $C$.
	\label{lem:reformulate}
\end{lem}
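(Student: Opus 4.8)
The two lists share D0 and D2, so it is enough to prove that, with D0 and D2 assumed for every edge and every set $C$, the conjunction of D1 and D3 (over all edges and all $C$) is equivalent to the conjunction of D1' and D3'. Throughout I read D1, D1', D3, D3' with the side condition $\beta\notin C$ in addition to the stated $\alpha\notin C$ (the regime $\beta\in C$ being precisely the one governed by D2), and I use repeatedly that, for $\alpha\notin C$, the statement $(\gamma,\alpha,C)\notin\mathcal I(\mathcal D)$ means exactly that there is a $\mu$-connecting walk from $\gamma$ to $\alpha$ given $C$ (hence with a head at $\alpha$), and symmetrically for $(\alpha,\gamma,C)\notin\mathcal I(\mathcal D)$.

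The implications D1 $\Rightarrow$ D1' and D3 $\Rightarrow$ D3' are immediate: a directed $\mu$-connecting path from $\gamma$ to $\alpha$ given $C$ is in particular such a walk into $\alpha$, so it witnesses $(\gamma,\alpha,C)\notin\mathcal I(\mathcal D)$ and D1 yields $(\gamma,\beta,C)\notin\mathcal I$; likewise a $\mu$-connecting trek from $\alpha$ to $\gamma$ given $C$ witnesses $(\alpha,\gamma,C)\notin\mathcal I(\mathcal D)$ and D3 yields $(\beta,\gamma,C)\notin\mathcal I$.

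For the reverse direction I would derive D3 from D0, D1', D2, D3', the argument for D1 being its mirror image. Fix an edge $\alpha\to\beta$ and a set $C$ with $\alpha\notin C$, take $\gamma$ with $(\alpha,\gamma,C)\notin\mathcal I(\mathcal D)$, and pick a $\mu$-connecting walk $\omega$ from $\alpha$ to $\gamma$ given $C$. If $\omega$ has no collider it is a $\mu$-connecting trek from $\alpha$ to $\gamma$ and D3' gives $(\beta,\gamma,C)\notin\mathcal I$ at once. Otherwise let $c$ be the collider of $\omega$ closest to $\gamma$; then the $\alpha$-to-$c$ sub-walk of $\omega$ witnesses $(\alpha,c,C)\notin\mathcal I(\mathcal D)$, while the $c$-to-$\gamma$ sub-walk is collider-free and has a head at $c$ on its first edge and a head at $\gamma$ on its last edge, hence possesses a source node $d$ with directed paths $d\to\cdots\to c$ and $d\to\cdots\to\gamma$ running through nodes outside $C$. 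Since $c\in\an_{\mathcal D}(C)$, prolonging $d\to\cdots\to c$ by a shortest directed path from $c$ into $C$ produces a directed path from $d$ to some $c'\in C$ with all earlier nodes outside $C$; write its final edge as $\eta\to c'$, so $\eta\notin C$ and $c'\in C$. I would then apply D2 to this edge: its premise $(\eta,\gamma,C)\notin\mathcal I(\mathcal D)$ is witnessed by $\eta\leftarrow\cdots\leftarrow d\to\cdots\to\gamma$, and its premise $(\beta,c',C)\notin\mathcal I(\mathcal D)$ by the walk that starts $\beta\leftarrow\alpha$, then follows the $\alpha$-to-$c$ sub-walk of $\omega$, then follows $c\to\cdots\to c'$; D2 now delivers $(\beta,\gamma,C)\notin\mathcal I$, which is D3. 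In the mirror argument for D1 one extra configuration arises: the witnessing walk $\omega$ from $\gamma$ to $\alpha$ may be a trek with a genuine fork $e\neq\gamma$, hence not a directed path, so D1' does not apply directly; there I would climb the arm $e\to\cdots\to\gamma$ one edge at a time, invoking D3' at each edge $w\to w'$ on it with the $\mu$-connecting trek that goes from $w$ up to $e$ and then down $e\to\cdots\to\alpha\to\beta$, which transports $(\cdot,\beta,C)\notin\mathcal I$ from $e$ along the arm to $\gamma$; the collider case for D1 uses D2 exactly as above.

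The only genuine work is the bookkeeping at the splice points: each time two walks are concatenated (at $\alpha$, $c$, $d$, the fork $e$, or $\eta$) one must verify that the joined node has the correct collider/noncollider status, that it lies inside or outside $C$ as required, that no newly created collider falls outside $\an_{\mathcal D}(C)$, and that the spliced walk ends with a head at its intended endpoint. The degenerate configurations — a collider that already lies in $C$, a fork coinciding with an endpoint, $\beta$ or $\gamma$ recurring on one of the constructed directed paths so it must be truncated, and self-edges — all have to be folded into this case analysis. I expect that verification, rather than any conceptual step, to be the bulk of the proof.
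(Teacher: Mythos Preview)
Your approach is essentially the paper's: the easy direction is immediate, and for the hard direction both you and the paper split the witnessing $\mu$-connecting walk into the directed, collider-free (fork), and collider-bearing cases and discharge them with D1', D3', and D2 respectively. The paper streamlines two of your steps --- it first replaces the witnessing walk by one whose colliders all lie in $C$ (its lemma that a $\mu$-connecting walk can be chosen with every collider in $C$), so that D2 applies directly at a collider without your detour from $c$ to some $c'\in C$, and it handles the fork case by a single application of D3' at the edge adjacent to the endpoint rather than climbing the arm edge by edge.
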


\begin{thm}
	Let $\mathcal{I}$ be an independence model which satisfies left and right 
	decomposition, and let $\mathcal{D}$ be a graph. 
	
	\begin{itemize}
		\item Condition D0 holds for all edges $\alpha\rightarrow\beta$ in 
		$\mathcal{D}$ and sets $C$ if and only if $\mathcal{I}$ is parent 
		faithful with 
		respect to $\mathcal{D}$.
		\item Conditions D0 and D1' hold for all edges $\alpha\rightarrow\beta$ 
		in $\mathcal{D}$ and sets $C$ if and only if $\mathcal{I}$ is ancestor 
		faithful with 
		respect to $\mathcal{D}$.
		\item Conditions D0, D1', and D3' hold for all edges 
		$\alpha\rightarrow\beta$ in $\mathcal{D}$ and sets $C$ if and only if 
		$\mathcal{I}$ 
		is trek faithful with respect to $\mathcal{D}$.
	\end{itemize}
	\label{thm:conditionsFaith}
\end{thm}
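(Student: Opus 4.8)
The plan is to prove each of the three equivalences in turn, working from the definition of $\mu$-connecting walks and the relevant faithfulness notion, and using the reformulated conditions D1' and D3' from Lemma~\ref{lem:reformulate} rather than the original D1 and D3. Throughout, I would fix an independence model $\mathcal{I}$ satisfying left and right decomposition and a graph $\mathcal{D}$.

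First, the parent-faithfulness equivalence. The forward direction (D0 for all edges and all $C$ implies parent faithfulness) is essentially unwrapping definitions: if $\alpha\to\beta$ with $\alpha\in A$, $\beta\in B$, and $A\not\subseteq C$, then $\alpha\notin C$ is not quite guaranteed — but parent faithfulness as stated requires a directed edge $\alpha\to\beta$ with $\alpha\in A$, $\beta\in B$, and I would need $\alpha\notin C$; here the subtlety is handling the case $\alpha\in C$. If $\alpha\in C$ we cannot invoke D0 directly, so I would argue that the relevant witness for $A\not\subseteq C$ gives some other node, or more carefully, restrict to showing $(A,B,C)\notin\mathcal{I}$ via the single pair $(\alpha,\beta,C)$ when $\alpha\notin C$, and use left/right decomposition to pass between $(\{\alpha\},\{\beta\},C)$ and $(A,B,C)$. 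The converse (parent faithfulness implies D0) is a direct specialization: given an edge $\alpha\to\beta$ and $C$ with $\alpha\notin C$, take $A=\{\alpha\}$, $B=\{\beta\}$; since $A\not\subseteq C$ and there is a directed edge, parent faithfulness yields $(\alpha,\beta,C)\notin\mathcal{I}$, which is exactly D0.

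Second, the ancestor-faithfulness equivalence. Assuming D0 and D1' hold, I would show ancestor faithfulness: suppose there is a directed $\mu$-connecting path from $A$ to $B$ given $C$ with $A\not\subseteq C$; such a path has the form $\gamma\to\cdots\to\alpha\to\beta\to\cdots$ ending with a head at some node of $B$, but the cleanest decomposition is to locate the last edge $\alpha\to\beta$ on the path that lies in $\mathcal{D}$ and whose tail $\alpha$ is reachable by a directed $\mu$-connecting path from some node $\gamma\in A$. Then D1' (after identifying the appropriate $\gamma$, $\alpha$, $\beta$) gives $(\gamma,\beta,C)\notin\mathcal{I}$, and if $\beta$ itself is not in $B$ one continues along the directed tail of the path — but a directed path ending with a head at $\beta'\in B$ means every node after $\alpha$ is itself an ancestor, so one can iterate or simply take $\beta$ to be the node in $B$ and $\alpha$ its parent on the path. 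Left/right decomposition then lifts $(\gamma,\beta,C)\notin\mathcal{I}$ to $(A,B,C)\notin\mathcal{I}$. For the converse, given an edge $\alpha\to\beta$, a set $C$ with $\alpha\notin C$, and a directed $\mu$-connecting path from $\gamma$ to $\alpha$ given $C$, concatenating that path with the edge $\alpha\to\beta$ produces a directed $\mu$-connecting path from $\gamma$ to $\beta$ given $C$ (one must check the concatenation preserves the no-noncollider-in-$C$ and collider-in-$\an(C)$ conditions — here $\alpha$ becomes a noncollider and $\alpha\notin C$, so this is fine), whence ancestor faithfulness gives $(\gamma,\beta,C)\notin\mathcal{I}$, which is D1'. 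D0 was already shown equivalent to parent faithfulness, which is implied by ancestor faithfulness (Proposition~\ref{prop:hierarchyFaith}).

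Third, the trek-faithfulness equivalence, which is the main obstacle. The same template applies — D3' is the natural "trek" analogue of D1' — but the combinatorics of decomposing an arbitrary $\mu$-connecting trek between $A$ and $B$ into a piece handled by D1' (the directed-path part, reaching the "top" of the trek) and a piece handled by D3' (the remaining trek segment emanating from a child $\beta$ of a node on the directed part) requires care. A trek from $A$ to $B$ has a unique "source" structure: it goes up from $A$ along tails and then the rest is a trek; I would split it at the first edge $\alpha\to\beta$ where the path structure changes, apply D1' to get that $\beta$ is connected (locally dependent) to the $A$-endpoint, then apply D3' to the trek segment from $\alpha$ onward to reach the $B$-endpoint, and finally glue the two dependence statements together using left and right contraction or the graphoid-type axioms — except the theorem hypothesis only grants left and right decomposition, so I must make the argument go through with decomposition alone, likely by choosing the split point so that only one application of D1' and one of D3' with a shared node is needed, and decomposition suffices to extract the single-pair statement $(\text{source},\text{target},C)\notin\mathcal{I}$ that decomposition then lifts to $(A,B,C)\notin\mathcal{I}$. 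The converse direction, that trek faithfulness implies D3', mirrors the D1' converse: prepend the edge $\alpha\to\beta$ to a $\mu$-connecting trek from $\alpha$ to $\gamma$ given $C$; since $\alpha\notin C$ and $\alpha$ sits as a non-collider (it has the edge $\alpha\to\beta$ as a tail and the trek continues from $\alpha$), the result is a $\mu$-connecting trek from $\beta$ to $\gamma$, giving $(\beta,\gamma,C)\notin\mathcal{I}$. I expect the bookkeeping around which node is a collider versus noncollider at each splice point, and ensuring the "final edge has a head at $\beta$" clause of $\mu$-separation is respected, to be where most of the real work lies.
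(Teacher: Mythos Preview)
Your treatment of the first two bullets matches the paper's argument: specialize to singletons to get D0 (respectively D1') from parent (respectively ancestor) faithfulness, and for the converse peel off the last edge of a directed $\mu$-connecting walk and apply D0 or D1' once, then use left and right decomposition to pass to general $A,B$. The subtlety you flag about $\alpha\in C$ in the parent-faithful direction is real but the paper handles it the same way you do, by working with $\alpha\notin C$.

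The third bullet is where your plan diverges from the paper, and in a way that does not go through. A $\mu$-connecting trek from $\gamma$ to $\delta$ given $C$ must end with a head at $\delta$ and contain no colliders, so its shape is necessarily $\gamma\leftarrow\cdots\leftarrow\sigma\to\cdots\to\delta$ (with the left segment possibly empty, giving the directed case). Your proposed split at the ``source'' edge $\sigma\to\beta_1$ fails on two counts. First, the hypothesis of D1' requires a directed $\mu$-connecting path \emph{into} $\sigma$, but the segment between $\gamma$ and $\sigma$ is directed \emph{out of} $\sigma$, so D1' does not yield $(\gamma,\beta_1,C)\notin\mathcal{I}$. Second, even if you had two statements of the form $(\gamma,\beta_1,C)\notin\mathcal{I}$ and $(\beta_1,\delta,C)\notin\mathcal{I}$, left and right decomposition cannot combine them into $(\gamma,\delta,C)\notin\mathcal{I}$; that would require a transitivity-type axiom you do not have.

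The paper avoids both problems by never splitting. If the trek is directed, its last edge $\varepsilon\to\delta$ together with the directed $\mu$-connecting subwalk from $\gamma$ to $\varepsilon$ (with $\varepsilon\notin C$) feeds straight into D1' to give $(\gamma,\delta,C)\notin\mathcal{I}$. If the trek is not directed, its \emph{first} edge is $\gamma\leftarrow\alpha_1$; since $\alpha_1$ is a noncollider on a $\mu$-connecting walk, $\alpha_1\notin C$, and the remaining subwalk from $\alpha_1$ to $\delta$ is itself a $\mu$-connecting trek. Applying D3' to the edge $\alpha_1\to\gamma$ with this subtrek yields $(\gamma,\delta,C)\notin\mathcal{I}$ in one step. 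Only a single invocation of D0, D1', or D3' is ever needed, so decomposition suffices, used only at the end to lift from singletons to $A$ and $B$.
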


\section{Structure Learning}
\label{sec:struct}

In graphical structure learning, the task is to recover a graphical 
representation from tests of local independence. In this section, we describe 
how the above theory relates to structure learning algorithms. It 
is common to assume faithfulness in the context of structure learning, see, 
e.g., \cite{meek2014toward,mogensenUAI2018,absar2021discovering} for examples 
in 
structure learning 
based on local 
independence/Granger 
noncausality. \cite{mogensen2020causal} uses a weaker notion of faithfulness. 

We assume causal sufficiency except in Appendix \ref{app:partialObs} where we 
describe some results assuming only partial observation. 
As is common in the literature, we will at times assume that we have access to 
an \emph{independence oracle}, i.e., instead of inputting, e.g., $p$-values 
from 
tests of local independence, our algorithm simply has access to the actual 
independence model and therefore always gets the right answer to an 
independence query. This is mostly done to separate algorithmic issues from 
testing issues. In practical applications of the learning algorithms, the test 
$(\alpha,\beta,C)\in \mathcal{I}$ is replaced by a $p$-value and a significance 
threshold.

\subsection{Comparison with DAG-based Models}

There is a large literature on learning causal graphs based on 
tests of conditional independence (see \cite{spirtes2018search} and references 
therein). One example of an algorithm is the 
PC-algorithm \citep{spirtes1993}. In the adjacency phase of this algorithm, 
larger and larger conditioning sets are used to look for separating sets. One 
motivation is to use tests with small conditioning sets to achieve larger power 
of the statistical tests \citep{spirtes2018search}. 
\cite{meek2014toward} and 
\cite{absar2021discovering} proceed by 
checking larger and larger sets of potential separating sets and remove an edge 
when one is found, essentially using this basic idea of the PC-algorithm in 
the stochastic process-setting. However, there are a number of important 
differences between 
constraint-based learning in DAG-based models and constraint-based learning in 
stochastic process models. First, in the case, of DAG-based model several 
graphs may 
encode the same conditional independences. On the other hand, for stochastic 
processes and under quite general assumptions the causal graph is actually 
identified from the local independence model (see Section 
\ref{ssec:learnCausalMin}). 
Second, the set $V\setminus \{ \alpha\}$ $\mu$-separates $\beta$ from 
$\alpha$ if and only if $\alpha\rightarrow\beta$ is not in the graph. For this, 
we do not need to know the graph and essentially this means that we can 
construct a separating set, if one exists, without any knowledge of the 
graph.

\subsection{The CA-algorithm}
\label{ssec:ca}

We briefly describe the CA-algorithm from \cite{meek2014toward}. This is also 
similar 
to the algorithm in \cite{absar2021discovering}. In this algorithm, for each 
ordered pair $(\alpha,\beta)$, larger and larger conditioning sets are tried to 
find a 
separating set, i.e., a set, $C$, such that $(\alpha,\beta,C)\in \mathcal{I}$. 
This is similar to the classical PC-algorithm for DAGs \citep{spirtes1993}. The 
details of the algorithm can be found in \cite{meek2014toward}.

\subsection{The CS-algorithm}
\label{ssec:cs}

The CS-algorithm (\emph{causal screening}) was introduced in 
\cite{mogensen2020causal} as a fast 
screening approach for partially observed systems. In its first step, it tests 
$(\alpha,\beta,\beta) \in \mathcal{I}$ for all ordered pairs $(\alpha,\beta)$. 
In its second step, it tests $(\alpha,\beta, 
\pa_{\mathcal{D}_1}(\beta))\in\mathcal{I}$ where $\mathcal{D}_1$ is the output 
from the first step. The idea is to use a superset of the actual parent set of 
$\beta$ as 
a conditioning set.

\begin{prop}
	In the oracle case, the CS-algorithm (Algorithm \ref{algo:CS}) outputs the 
	true graph under Markov 
	and parent faithfulness assumptions.
\end{prop}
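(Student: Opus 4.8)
The plan is to chase the two graphs produced by the algorithm: write $\mathcal{D}$ for the true graph --- the one with respect to which $\mathcal{I}$ is assumed Markov and parent faithful --- and $\mathcal{D}_1$, $\mathcal{D}_2$ for the outputs of the first and second steps, so that the goal becomes $\mathcal{D}_2=\mathcal{D}$. Since every graph here contains all self-edges, it suffices to argue about edges $\alpha\rightarrow\beta$ with $\alpha\neq\beta$. Recall that the first step retains $\alpha\rightarrow\beta$ exactly when $(\alpha,\beta,\beta)\notin\mathcal{I}$, and the second step retains $\alpha\rightarrow\beta$ exactly when $(\alpha,\beta,C)\notin\mathcal{I}$ for $C=\pa_{\mathcal{D}_1}(\beta)\setminus\{\alpha\}$ (the conditioning set in the edge test for $\alpha\rightarrow\beta$ excludes $\alpha$). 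I would prove the inclusions $\mathcal{D}\subseteq\mathcal{D}_2$ and $\mathcal{D}_2\subseteq\mathcal{D}$ in turn.

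For $\mathcal{D}\subseteq\mathcal{D}_2$ I would invoke parent faithfulness (Definition~\ref{def:paFaith}) twice. If $\alpha\rightarrow\beta$ is in $\mathcal{D}$ with $\alpha\neq\beta$, then $\{\alpha\}\not\subseteq\{\beta\}$, so parent faithfulness gives $(\alpha,\beta,\beta)\notin\mathcal{I}$ and the edge survives the first step; hence $\mathcal{D}\subseteq\mathcal{D}_1$, and in particular $\pa_\mathcal{D}(\beta)\subseteq\pa_{\mathcal{D}_1}(\beta)$ for every $\beta$. The same reasoning applied with the conditioning set $C=\pa_{\mathcal{D}_1}(\beta)\setminus\{\alpha\}$, which does not contain $\alpha$, shows every edge of $\mathcal{D}$ also survives the second step, so $\mathcal{D}\subseteq\mathcal{D}_2$.

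For $\mathcal{D}_2\subseteq\mathcal{D}$, take $\alpha\neq\beta$ with $\alpha\rightarrow\beta\notin\mathcal{D}$ and $C=\pa_{\mathcal{D}_1}(\beta)\setminus\{\alpha\}$; I would need to show $(\alpha,\beta,C)\in\mathcal{I}$, and by the global Markov property (Definition~\ref{def:globalMarkov}) it is enough to show $\beta$ is $\mu$-separated from $\alpha$ given $C$ in $\mathcal{D}$. Here $\alpha\notin C$, and since $\alpha\notin\pa_\mathcal{D}(\beta)$ and $\mathcal{D}\subseteq\mathcal{D}_1$ we get $\pa_\mathcal{D}(\beta)\subseteq\pa_{\mathcal{D}_1}(\beta)\setminus\{\alpha\}=C$. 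If a nontrivial walk $\omega$ from $\alpha$ to $\beta$ were $\mu$-connecting given $C$, its final edge would have a head at $\beta$ by Definition~\ref{def:deltaSep}, hence be $\gamma\rightarrow\beta$ with $\gamma\in\pa_\mathcal{D}(\beta)$; a walk of length one would then be the edge $\alpha\rightarrow\beta$ itself, contradicting $\alpha\rightarrow\beta\notin\mathcal{D}$, while for a longer walk the node $\gamma$ is the penultimate node and is a noncollider on $\omega$ (the edge $\gamma\rightarrow\beta$ has a tail at $\gamma$) lying in $\pa_\mathcal{D}(\beta)\subseteq C$, contradicting $\mu$-connectedness given $C$. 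So $\beta$ is $\mu$-separated from $\alpha$ given $C$, the edge is deleted in the second step, and $\mathcal{D}_2\subseteq\mathcal{D}$; together with the previous step this gives $\mathcal{D}_2=\mathcal{D}$.

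I expect this last $\mu$-separation argument to be the only delicate point: it is exactly the ``final edge has a head at $\beta$'' requirement in Definition~\ref{def:deltaSep} that forces the penultimate node of any candidate connecting walk into $\pa_\mathcal{D}(\beta)\subseteq C$, and the self-edge $\beta\rightarrow\beta$ is no exception, since $\beta\in\pa_\mathcal{D}(\beta)\subseteq C$ as well. As a cross-check, one can obtain the deletion step abstractly instead: parent faithfulness together with the Markov property implies causal minimality (Proposition~\ref{prop:hierarchyFaith}), and then the first part of Proposition~\ref{prop:superCpa}, applied with $\alpha\notin\pa_\mathcal{D}(\beta)$ and $\pa_\mathcal{D}(\beta)\subseteq C$, yields $(\alpha,\beta,C)\in\mathcal{I}$ directly.
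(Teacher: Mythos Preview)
Your proof is correct and follows the same two-inclusion structure as the paper: parent faithfulness guarantees that every true edge survives both steps, and the Markov property guarantees that every non-edge is removed in the second step. The paper's own proof is terser---it outsources the removal direction to an external reference (Proposition~2 in the supplementary material of \cite{mogensen2020causal})---whereas you supply the $\mu$-separation argument directly, showing that $\pa_\mathcal{D}(\beta)\subseteq C$ forces the penultimate node of any candidate connecting walk to be a noncollider in $C$; your cross-check via Proposition~\ref{prop:superCpa} is exactly the route the paper takes in the closely related Proposition~\ref{prop:CSoutput}.
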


\begin{proof}
	If $\alpha\rightarrow\beta$ is not in the true graph, then it is also not 
	in the output (Proposition 2 in the supplementary material of 
	\citep{mogensen2020causal}). If is in the true graph, then 
	parent faithfulness implies that it is also in the output.
\end{proof}

\begin{prop}
	Assume causal sufficiency, left weak union, left decomposition, and left 
	contraction of 
	$\mathcal{I}$, and equivalence of pairwise and global Markov properties. If 
	$\mathcal{I}$ is causally minimal with respect to 
	$\mathcal{D}$ and satisfies parent dependence with respect to 
	$\mathcal{D}$, then causal screening outputs $\mathcal{D}$ in the oracle 
	setting.
	\label{prop:CSoutput}
\end{prop}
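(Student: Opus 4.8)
The plan is to analyze the two phases of the CS-algorithm in turn. First I would fix notation: let $\mathcal{D}_1$ be the graph output by step~1, so that for $\alpha\neq\beta$ the edge $\alpha\rightarrow\beta$ lies in $\mathcal{D}_1$ exactly when $(\alpha,\beta,\{\beta\})\notin\mathcal{I}$ (self-edges are present by convention and never tested). In step~2 the pair $(\alpha,\beta)$, $\alpha\neq\beta$, is tested against the conditioning set $C_{\alpha\beta} := \pa_{\mathcal{D}_1}(\beta)\setminus\{\alpha\}$, and the edge is deleted precisely when $(\alpha,\beta,C_{\alpha\beta})\in\mathcal{I}$; note that $\alpha\notin C_{\alpha\beta}$. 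Write $\mathcal{D}_2$ for the final output.

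I would first show $\mathcal{D}\subseteq\mathcal{D}_1$. If $\alpha\rightarrow\beta$ is an edge of $\mathcal{D}$ with $\alpha\neq\beta$, then parent dependence of $\mathcal{I}$ with respect to $\mathcal{D}$ gives $(\alpha,\beta,\{\beta\})\notin\mathcal{I}$, so this edge is not removed in step~1 and therefore lies in $\mathcal{D}_1$. Combined with the self-edges, this yields $\mathcal{D}\subseteq\mathcal{D}_1$, and in particular $\pa_\mathcal{D}(\beta)\subseteq\pa_{\mathcal{D}_1}(\beta)$ for every $\beta\in V$; this monotonicity of parent sets is the only thing I need to carry forward from step~1.

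Next I would treat step~2. Fix an edge $\alpha\rightarrow\beta$ of $\mathcal{D}_1$ with $\alpha\neq\beta$ and let $C := C_{\alpha\beta}$. If $\alpha\in\pa_\mathcal{D}(\beta)$, i.e.\ the edge belongs to $\mathcal{D}$, then $\pa_\mathcal{D}(\beta)\setminus\{\alpha\}\subseteq\pa_{\mathcal{D}_1}(\beta)\setminus\{\alpha\}=C$, and since $\alpha\notin C$ the second part of Proposition~\ref{prop:superCpa} applies---it uses left weak union, left decomposition, left contraction, and the equivalence of the pairwise and global Markov properties, all of which are assumed---and gives $(\alpha,\beta,C)\notin\mathcal{I}$, so the edge survives. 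If instead $\alpha\notin\pa_\mathcal{D}(\beta)$, so the edge is in $\mathcal{D}_1$ but not in $\mathcal{D}$, then $\pa_\mathcal{D}(\beta)\subseteq\pa_{\mathcal{D}_1}(\beta)\setminus\{\alpha\}=C$, and the first part of Proposition~\ref{prop:superCpa}, which needs only causal minimality of $\mathcal{I}$ with respect to $\mathcal{D}$, gives $(\alpha,\beta,C)\in\mathcal{I}$, so the edge is deleted.

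Finally, combining these: since $\mathcal{D}\subseteq\mathcal{D}_1$, every edge of $\mathcal{D}$ reaches step~2 and survives, every edge of $\mathcal{D}_1$ not in $\mathcal{D}$ is deleted there, and the self-edges remain throughout, so $\mathcal{D}_2=\mathcal{D}$. The point needing the most care will be the bookkeeping around the step-2 conditioning set: one must check that it omits $\alpha$---otherwise left redundancy would trivially delete every edge and Proposition~\ref{prop:superCpa} would not apply---and that passing from $\mathcal{D}$ to the possibly larger $\mathcal{D}_1$ only enlarges the conditioning sets, which does no harm because Proposition~\ref{prop:superCpa} is already stated for arbitrary supersets of the relevant parent set. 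Causal sufficiency enters only in guaranteeing that the algorithm and the target graph $\mathcal{D}$ are defined over the same vertex set $V$.
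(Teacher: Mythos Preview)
Your proposal is correct and follows essentially the same approach as the paper: use parent dependence to show the first step yields a supergraph of $\mathcal{D}$, then invoke Proposition~\ref{prop:superCpa} (in both directions) to show the second step outputs exactly $\mathcal{D}$. Your write-up is in fact more explicit than the paper's, which simply states that Proposition~\ref{prop:superCpa} handles the second step without spelling out the two cases or the bookkeeping you flag around the conditioning set.
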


Many other algorithms will only be correct in the oracle case under stronger 
assumptions, one reason being that they test more `small' sets which may lead 
to 
a faulty edge removal due to a violation of faithfulness.

\begin{algorithm}
	\SetKwInOut{Input}{input}
	\SetKwInOut{Output}{output}
	\Input{$\mathcal{I}$ over $V$ such that $\vert V \vert = n$}
		\For{$\beta \in V$}{
			\For{$\alpha \in V\setminus \{\alpha\}$}{
					\If{$(\alpha,\beta,\beta)\in \mathcal{I}$}
					{
						$E \gets E \setminus \{\alpha\rightarrow \beta \}$\;
						$\mathcal{D} \gets (V,E)$\;
					}	}
				}	
	\For{$\beta \in V$}{
		\For{$\alpha \in \pa_\mathcal{D}(\beta)$}{
			\If{$(\alpha,\beta,\pa_\mathcal{D}(\beta)\setminus \{\alpha \})\in 
			\mathcal{I}$}
			{
				$E \gets E \setminus \{\alpha\rightarrow \beta \}$\;
				$\mathcal{D} \gets (V,E)$\;
			}	}
		}
			$\mathcal{D}_{cs} \gets \mathcal{D}$\;
			\Output{$\mathcal{D}_{cs}$}
			\caption{Causal screening algorithm (CS)}
			\label{algo:CS}
		\end{algorithm}

\subsection{Learning with Minimal Assumptions}
\label{ssec:learnCausalMin}

If we take the Markov property for granted, the four conditions outlined above, 
faithfulness, ancestor faithfulness, parent faithfulness, and causal 
minimality, are in this list ordered from strongest to weakest. In this 
subsection, we assume the weakest condition, that of causal minimality, to 
discuss how structure learning can be achieved with this minimal assumption.

Under 
Markov and causal minimality assumptions, the induced local independence graph, 
$\mathcal{D}_I$, equals the true graph (in 
the oracle case): Let $\mathcal{D}$ be the true graph such 
that $\mathcal{D}$ and $\mathcal{I}$ are causally minimal. The graph 
$\mathcal{D}_I$ satisfies the pairwise Markov property by definition. Under 
equivalence of pairwise and global Markov properties, we have that 
$\mathcal{D}_I$ is Markov with respect to $\mathcal{D}$. If $e$ is in 
$\mathcal{D}_I$, then $(\alpha,\beta,V\setminus \{\alpha \}) \notin 
\mathcal{I}$. Using Markovness, $e$ must be in $\mathcal{D}$ as well, 
so $\mathcal{D}_I\subseteq \mathcal{D}$. If $e$ is not in $\mathcal{D}_I$, then 
$e$ also not in $\mathcal{D}$ due to causal minimality.  The CM-algorithm 
(Causal Minimality) is the algorithm which outputs a graph, $\mathcal{D}_{cm}$, 
such that $\alpha\rightarrow_{\mathcal{D}_{cm}}\beta$ if and only if 
$(\alpha,\beta,V\setminus \{\alpha\})\notin \mathcal{I}$.

However, 
the above may not be practical if there are many 
coordinates processes as this may require very large conditioning sets, 
$V\setminus \{\alpha\}$, and 
therefore tests with poor performance. On the other hand, it avoids many tests, 
all of which have a risk of faithfulness violations, or near-violations, which 
leads to worse output. These observations may motivate the use of the 
CS-algorithm in Subsection \ref{ssec:cs}.

If we instead test all subsets, and include $\alpha\rightarrow\beta$ if and 
only if 
there is no separating set, we obtain a subgraph of the true graph, only 
assuming Markovness  (see details in Algorithm \ref{algo:dSGS} in Appendix 
\ref{app:dSGS}). This algorithm 
is a local independence version of the SGS 
algorithm \citep{spirtes1993}. If we assume that 
there are at most $k$ parents, we test 
all subsets of size at most $k$. This also returns a subgraph of the true graph 
under the Markov assumption. Appendix \ref{app:dSGS} defines this algorithm and 
gives states this result formally.

\subsection{Learning and Faithfulness}
\label{ssec:learnFaith}

The previous section describes general structure learning algorithms. Appendix 
\ref{app:learnFaith} connects structure learning with the faithfulness results 
in the previous 
	sections. In Subsection \ref{ssec:learnEdgeTransitive}, we consider the 
	\emph{edge-transitive graph}. In Subsection \ref{ssec:trim}, we argue 
	that one may trim the output of a learning algorithm to obtain a faithful 
	representation.

\begin{figure*}
	\includegraphics[scale = .7]{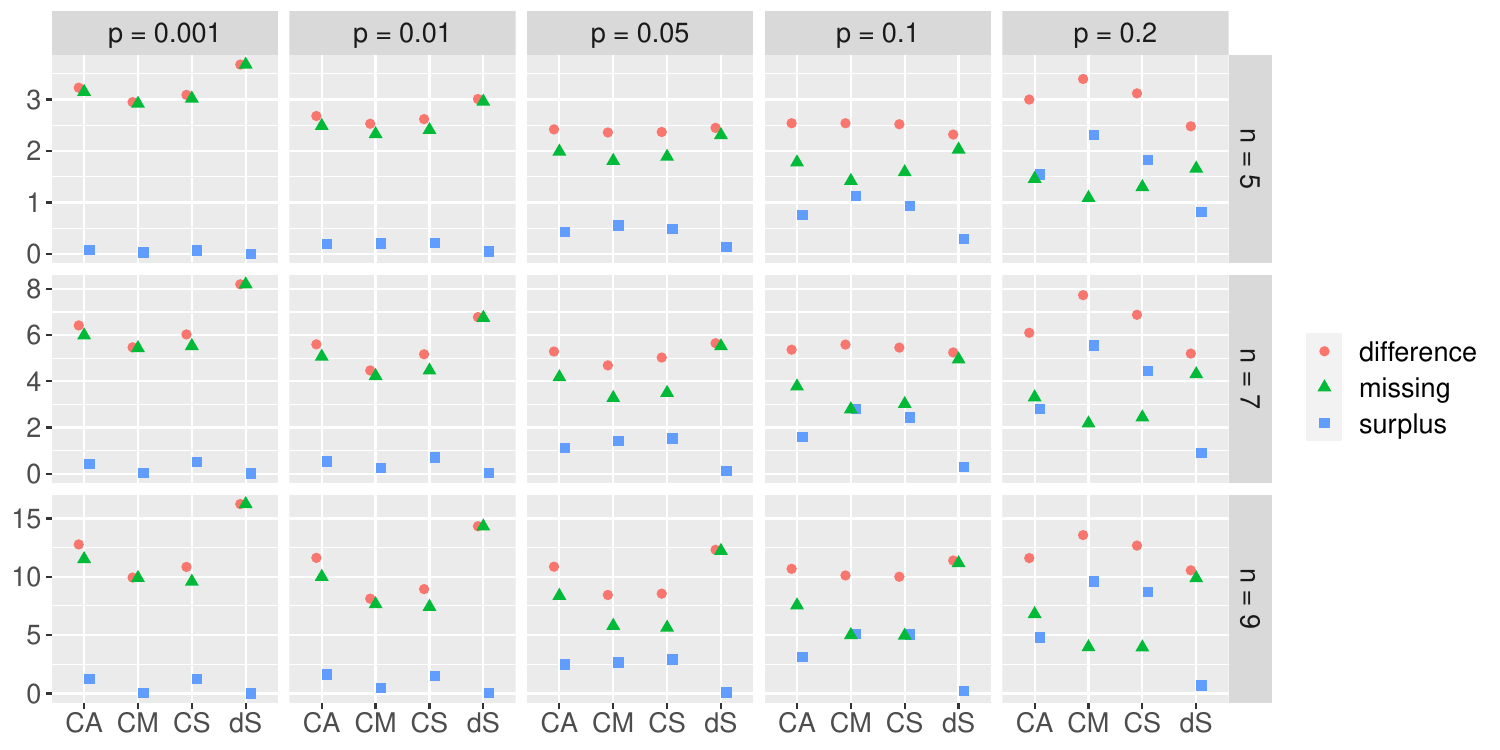}
	\caption{Comparison of algorithms. Points indicate mean over $M = 100$ 
		repetitions. Red circles indicate mean difference between true graph 
		and 
		output graph. Green triangles indicate mean number of surplus edges, 
		and 
		blue squares indicate mean number of missing edges. Section 
		\ref{sec:num} 
		explains this experiment in more detail.}
	\label{fig:num}
\end{figure*}

\section{Numerical Examples}
\label{sec:num}

We compare the algorithms to investigate their properties when using data. We 
repeatedly generated a true graph, $\mathcal{D} = (V,E)$, and observations from 
a corresponding 
VAR(1)-process. Using tests of Granger causality, we computed an output graph, 
$\mathcal{D}_a = (V,E_a)$ for each algorithm $a$. We 
computed the \emph{surplus edges}, $E_a\setminus E$, the \emph{missing edges} 
$E\setminus E_a$, and the \emph{difference}, $(E_a\setminus E) \cup (E\setminus 
E_a)$ between $\mathcal{D}$ and 
$\mathcal{D}_a$. In Figure 
\ref{fig:num}, we report the mean number of surplus edges, the mean number of 
missing edges, and the mean of $\vert 
(E_a\setminus E) \cup (E\setminus 
E_a)\vert $ for each algorithm $a$, and for different values of significance 
threshold and $n = \vert V\vert $. More details are in Section 
\ref{app:sim}.

The dSGS algorithm is seen to have the lowest number of surplus edges which is 
not surprising as it tests every possible set and removes the edge if any test 
is nonsignificant. We also know from Proposition \ref{prop:dSGSsubgraph} that, 
in the oracle case, it outputs a subgraph of the true graph under minimal 
assumptions.

We observe that the CM-algorithm, simply using a single test for each ordered 
pair $(\alpha,\beta)$ does surprisingly well, e.g., in comparison with the 
CA-algorithm. An important point is the fact that using more tests increase the 
risk of making errors due to faithfulness violations or near-violations. It is 
essential that, in this context, the set $V\setminus \{ \alpha\}$ always 
separated 
$\beta$ from $\alpha$ if such separation is possible, and this can be tested 
with no prior knowledge of the graph. For this reason, testing smaller 
conditioning sets is not needed, at least for $n$ of moderate size.

A key weakness of the CM-algorithm is, of course, the fact that it uses
large conditioning sets for large values of $n$ and such tests are expected to 
have low power. As a remedy, one may use the CS-algorithm which tries to reduce 
the 
set of potential parents. We see that the CS- and CM-algorithms have similar 
performances, and the CS-algorithm may be viable alternative for large values 
of $n$.

Appendix \ref{app:partialObs} provides additional results in the case of 
partial 
observation.

\section{Discussion}

Constraint-based learning in stochastic processes is still lacking some of the 
tools that are available for constraint-based learning from random vectors. 
This paper studies notions of faithfulness and discusses differences between 
the 
two frameworks, e.g., the fact that starting from small conditioning sets may 
not always be preferable in the stochastic process-setting.

The use of score-based methods, or methods that aggregate the information 
across edges, is an interesting topic for future research.

\subsubsection*{Acknowledgements}

This work was supported by a DFF-International Postdoctoral
Grant (0164-00023B) from Independent Research Fund Denmark. The author is a 
member of the
ELLIIT Strategic Research Area at Lund University.

\bibliographystyle{plainnat}
\bibliography{C:/Users/swmo/Desktop/-/forsk/references}

\appendix

\clearpage
\onecolumn

\section*{Supplementary Material for \\ Faithful graphical representations of 
local independence}

\section{Linear Hawkes Processes}
\label{app:hawkes}

\begin{exmp}[Linear Hawkes process]
	\emph{Linear Hawkes processes} are a class of point processes. A 
	(multivariate) point process, $X_t = (X_t^1,\ldots,X_t^n)$, 
	consists of a set of events, $(t,\alpha)$, such that $t$ is a time 
	point and $\alpha \in V = \{1,2,\ldots,n\}$ is a coordinate process. Point 
	processes may be described using the conditional intensity which we will 
	denote $\lambda_t = (\lambda_t^1,\ldots,\lambda_t^n)$. It holds that
	
	\begin{align*}
	\lambda_t^\beta = \frac{1}{h} \lim_{h\downarrow 0} P(\text{there is a 
	$\beta$-event 
		in } (t,t+h] \mid \mathcal{F}_t^V),
		\end{align*}
		
		and $\lambda_t^\beta$ can therefore be interpreted as describing how 
		likely 
		it is to observe a $\beta$-event in the immediate future given the past 
		of 
		the process. A point process is a \emph{linear Hawkes process} if for 
		all 
		$\beta$
		
		\begin{align*}
		\lambda_t^\beta = \mu_\beta + \sum_{\alpha\in V}\sum_{(s,\alpha):s< t} 
		f_{\beta\alpha}(t - s)
		\end{align*}
		
		where $\mu_\beta$ is a nonnegative constant, $f_{\beta\alpha}$ is a 
		nonnegative function, and the sum is over all events of type $\alpha$ 
		until 
		time $t$. In this example, the $\lambda$-process in Definition 
		\ref{def:li} 
		can be 
		chosen as the conditional intensity.
		
		When the function $f_{\beta\alpha}$ is zero there is no direct 
		dependence 
		of $\lambda_t^\beta$ on the past of the $\alpha$-process. We construct 
		a 
		graph with nodes $V = \{1,2,\ldots,n \}$ such that for $\alpha,\beta\in 
		V$, 
		we include $\alpha\rightarrow\beta$ if and only $f_{\beta\alpha} \neq 
		0$. This 
		graph encodes a set of local independences as described by the global 
		Markov 
		property (Definition \ref{def:globalMarkov}). When the graph is 
		unknown, we 
		can use tests of local independence to learn about the graph. We will 
		say that 
		the graph defined above
		is the \emph{causal graph}, see also Subsection \ref{sssec:cauInter}.
		
		\label{exmp:hawkes}
		\end{exmp}

\section{Asymmetric Graphoids}
\label{sec:asym}

\emph{Graphoid properties} are often used in the context of graphical models of 
random variables \citep{lauritzen1996}. Analogously, \emph{asymmetric graphoid 
properties} may be 
defined \citep{didelezUAI2006,mogensenUAI2018}. These have \emph{left} and 
\emph{right} versions as \emph{symmetry}, 
$(A,B,C) \in \mathcal{I} \Rightarrow (B,A,C)\in \mathcal{I}$, is not assumed.

\begin{defn}[Asymmetric graphoid properties]
	Let $\mathcal{I}$ be an independence model over $V$. We say that 
	$\mathcal{I}$ satisfies \emph{left decomposition} if 
	
	\begin{align*}
		(A,B,C) \in \mathcal{I} \Rightarrow (D,B,C)\in \mathcal{I} \text{ 
		whenever } D\subseteq A.
	\end{align*}
	
	We say that $\mathcal{I}$ satisfies \emph{right decomposition} if 
	
	\begin{align*}
	(A,B,C) \in \mathcal{I} \Rightarrow (A,D,C)\in \mathcal{I} \text{ 
		whenever } D\subseteq B.
	\end{align*}
	
	We say that $\mathcal{I}$ satisfies \emph{left weak union} if 
	
	\begin{align*}
		(A,B,C) \in \mathcal{I} \Rightarrow (A,B,C \cup D)\in \mathcal{I} 
		\text{ 
			whenever } D\subseteq A.
	\end{align*}
	
	We say that $\mathcal{I}$ satisfies \emph{left contraction} if 
	
	\begin{align*}
	(A,B,C) \in \mathcal{I},\  (D,B,A\cup C) \in \mathcal{I}  \Rightarrow (A 
	\cup 
	D,B,C)\in \mathcal{I}.
	\end{align*}
	
	\label{def:asym}
\end{defn}

The following is similar to results in \cite{didelezUAI2006} and 
\cite{mogensenUAI2018}.

\begin{prop}
	Let $\mathcal{I}$ be a local independence model, or a Granger causality 
	model, i.e., an independence model constructed using Definition 
	\ref{def:li} or Definition \ref{def:granger}. The independence model 
	$\mathcal{I}$ satisfies left and right decomposition.
\end{prop}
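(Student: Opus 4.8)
The plan is to handle the two decomposition properties and the two model classes separately, noting that in both classes right decomposition is immediate and essentially all the content sits in left decomposition. For right decomposition I would simply observe that both Definition \ref{def:li} and Definition \ref{def:granger} phrase $(A,B,C)\in\mathcal{I}$ as a condition quantified over all $\beta\in B$; if $D\subseteq B$, restricting that quantifier to $\beta\in D$ gives exactly $(A,D,C)\in\mathcal{I}$, with nothing further to check.

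For left decomposition in the Granger case, assume $(A,B,C)\in\mathcal{I}$ and $D\subseteq A$. Fix $t$ and $\beta\in B$, so that $X_t^\beta\indep X_{<t}^A\mid X_{<t}^C$. Since $D\subseteq A$, the family $X_{<t}^D$ is a sub-collection of $X_{<t}^A$, so the ordinary decomposition property of conditional independence of random vectors yields $X_t^\beta\indep X_{<t}^D\mid X_{<t}^C$. As $t$ and $\beta$ were arbitrary, $(D,B,C)\in\mathcal{I}$.

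For left decomposition in the local independence case, the two facts I would exploit are the filtration inclusions $\mathcal{F}_t^{D,C}\subseteq\mathcal{F}_t^{A,C}$ (because $D\subseteq A$) and $\mathcal{F}_t^{C}\subseteq\mathcal{F}_t^{D,C}$ (always true). Given $(A,B,C)\in\mathcal{I}$, fix $\beta\in B$ and let $Z=(Z_t)$ be an $\mathcal{F}_t^{C}$-adapted version of $t\mapsto E(\lambda_t^\beta\mid\mathcal{F}_t^{A,C})$. For each fixed $t$, $Z_t$ is $\mathcal{F}_t^{C}$-measurable, hence $\mathcal{F}_t^{D,C}$-measurable, and testing against an arbitrary bounded $\mathcal{F}_t^{D,C}$-measurable $W$ (which is then also $\mathcal{F}_t^{A,C}$-measurable) gives $E(\lambda_t^\beta W)=E\big(E(\lambda_t^\beta\mid\mathcal{F}_t^{A,C})\,W\big)=E(Z_tW)$, so $Z_t=E(\lambda_t^\beta\mid\mathcal{F}_t^{D,C})$ almost surely. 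Hence the same process $Z$ is simultaneously a version of $t\mapsto E(\lambda_t^\beta\mid\mathcal{F}_t^{D,C})$ and $\mathcal{F}_t^{C}$-adapted, which is precisely what Definition \ref{def:li} requires for $(D,B,C)\in\mathcal{I}$; since $\beta\in B$ was arbitrary, we are done.

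The only point needing care — and the closest thing to an obstacle — is the bookkeeping around \emph{versions} in the local independence case: I must make sure that one chosen $\mathcal{F}_t^{C}$-adapted version of the first conditional-expectation process also serves as a version of the second (the displayed computation above), and that whatever joint/progressive measurability is implicitly demanded of such a version is inherited. Beyond that, both assertions are really just monotonicity of (local) independence in its first argument, so no deeper difficulty is anticipated.
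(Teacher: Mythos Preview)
Your proposal is correct and takes essentially the same approach as the paper, which simply asserts that ``left and right decomposition follow immediately from the definitions''; you have just spelled out in detail what that one-line proof leaves implicit, including the tower-property step for local independence.
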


\begin{proof}
	Left and right decomposition follow immediately from the definitions.
\end{proof}

\section{Detection of Faithfulness Violations}
\label{app:detectable}

Assume that $\mathcal{I}$ is not faithful with respect to the causal graph 
$\mathcal{D}$. We say that a failure of faithfulness is \emph{detectable} if 
there is no other graph, $\mathcal{D}'$, such that $\mathcal{I} = 
\mathcal{I}(\mathcal{D}')$, i.e., no other graph, $\mathcal{D}'$, such that 
$\mathcal{I}$ is Markov and faithful with respect to $\mathcal{D}'$ 
\citep{zhang2008detection}. Detectability implies that we, in principle and for 
infinite 
data, will realize that the independence model we are observing is not 
graphical.

\begin{thm}
	If we assume Markovness and causal minimality, and the faithfulness 
	assumption fails, then the failure is detectable.
	\label{thm:detectable}
\end{thm}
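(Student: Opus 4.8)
The plan is to argue by contradiction. Suppose $\mathcal{I}$ is Markov and causally minimal with respect to the causal graph $\mathcal{D}$ but that faithfulness fails, so that $\mathcal{I}(\mathcal{D}) \subsetneq \mathcal{I}$; and suppose, toward a contradiction, that the failure is \emph{not} detectable, i.e., there is a graph $\mathcal{D}'$ (necessarily distinct from $\mathcal{D}$) with $\mathcal{I} = \mathcal{I}(\mathcal{D}')$. The goal is to show that this forces $\mathcal{D}' = \mathcal{D}$, whence $\mathcal{I} = \mathcal{I}(\mathcal{D}') = \mathcal{I}(\mathcal{D})$, contradicting $\mathcal{I}(\mathcal{D}) \subsetneq \mathcal{I}$.

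To this end I would first identify $\mathcal{D}'$ using the structural fact recalled in Subsection \ref{ssec:learnCausalMin} (and in the discussion following Proposition \ref{prop:cauMinUnique}): for any graph $\mathcal{G}$ on $V$ and any $\alpha \neq \beta$, the set $V\setminus\{\alpha\}$ $\mu$-separates $\beta$ from $\alpha$ if and only if $\alpha\rightarrow\beta$ is not an edge of $\mathcal{G}$, i.e., $\alpha\rightarrow_{\mathcal{G}}\beta$ exactly when $(\alpha,\beta,V\setminus\{\alpha\})\notin\mathcal{I}(\mathcal{G})$. Applying this with $\mathcal{G} = \mathcal{D}'$ and using $\mathcal{I}(\mathcal{D}') = \mathcal{I}$ shows that $\alpha\rightarrow_{\mathcal{D}'}\beta$ iff $(\alpha,\beta,V\setminus\{\alpha\})\notin\mathcal{I}$; since all self-edges are present by convention, this is exactly the defining condition of the induced local independence graph, so $\mathcal{D}' = \mathcal{D}_I$. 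Next I would identify $\mathcal{D}$: under the equivalence of the pairwise and global Markov properties (the assumption under which the causal-minimality results hold), Proposition \ref{prop:cauMinUnique} says $\mathcal{D}_I$ is the unique causally minimal graph with respect to $\mathcal{I}$, and $\mathcal{D}$ is causally minimal with respect to $\mathcal{I}$ by hypothesis, so $\mathcal{D} = \mathcal{D}_I$. Combining, $\mathcal{D} = \mathcal{D}_I = \mathcal{D}'$, hence $\mathcal{I} = \mathcal{I}(\mathcal{D}') = \mathcal{I}(\mathcal{D})$, i.e., $\mathcal{I}$ is faithful with respect to $\mathcal{D}$ --- the desired contradiction, so no such $\mathcal{D}'$ exists.

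I do not expect a serious obstacle; the argument is essentially bookkeeping about what causal minimality guarantees. The one point to state carefully is that the identification $\mathcal{D} = \mathcal{D}_I$ rests on Proposition \ref{prop:cauMinUnique} (and the characterization of causal minimality in terms of $\mathcal{D}_I$), so the proof must flag that the equivalence of the pairwise and global Markov properties is assumed, as elsewhere in the paper. If one prefers not to route through $\mathcal{D}_I$, an equivalent path is to show directly that $\mathcal{D}'$ is causally minimal with respect to $\mathcal{I}$ --- Markovness is immediate from $\mathcal{I}(\mathcal{D}') = \mathcal{I}$, and minimality holds because deleting any edge $\alpha\rightarrow\beta$ from $\mathcal{D}'$ creates the separation $(\alpha,\beta,V\setminus\{\alpha\})$, which by monotonicity of $\mu$-separation under edge deletion persists in every proper subgraph of $\mathcal{D}'$ yet is absent from $\mathcal{I} = \mathcal{I}(\mathcal{D}')$ --- and then invoke uniqueness of the causally minimal graph to conclude $\mathcal{D}' = \mathcal{D}$.
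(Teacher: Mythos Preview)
Your argument is correct. It coincides with the alternative route the paper itself mentions at the end of its proof (``Alternatively, this follows also from uniqueness in Proposition~\ref{prop:cauMinUnique}''), and your identification $\mathcal{D}'=\mathcal{D}_I$ is exactly what that sentence unpacks to.

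The paper's \emph{primary} argument is marginally different: rather than computing $\mathcal{D}'$ explicitly as $\mathcal{D}_I$, it uses the antitonicity result of Proposition~\ref{prop:subIndepSubGraph} (if $\mathcal{I}(\mathcal{D}_1)\subseteq\mathcal{I}(\mathcal{D}_2)$ then $\mathcal{D}_2\subseteq\mathcal{D}_1$) applied to $\mathcal{I}(\mathcal{D})\subsetneq\mathcal{I}(\mathcal{D}')$ to obtain $\mathcal{D}'\subsetneq\mathcal{D}$; since $\mathcal{I}$ is Markov with respect to $\mathcal{D}'$, this immediately contradicts causal minimality of $\mathcal{D}$. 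That route avoids invoking $\mathcal{D}_I$ or the uniqueness proposition, and in particular does not need the equivalence of pairwise and global Markov properties that underlies Proposition~\ref{prop:inducedCauMin}. Your route, on the other hand, makes the role of $\mathcal{D}_I$ and uniqueness explicit, which is arguably more informative about \emph{why} the failure is detectable (namely, because $\mathcal{D}$ can be read off from $\mathcal{I}$ alone). Both are short and valid; you correctly flag the pairwise--global equivalence as the one extra standing assumption your version needs.
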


\begin{proof}
	Assume that $\mathcal{I}(\mathcal{D}) \subsetneq \mathcal{I}$. If the 
	failure 
	is undetectable, there exists $\mathcal{D}'$ such that 
	$\mathcal{I}(\mathcal{D}') = \mathcal{I}$. In this case, 
	$\mathcal{I}(\mathcal{D}) \subsetneq \mathcal{I}(\mathcal{D}')$. 
	Proposition \ref{prop:subIndepSubGraph} gives $\mathcal{D}' \subseteq 
	\mathcal{D}$, and therefore $\mathcal{D}' \subsetneq 
	\mathcal{D}$. However, this is a violation of causal minimality. 
	Alternatively, this follows also from uniqueness in Proposition 
	\ref{prop:cauMinUnique}.
\end{proof}

\begin{prop}
	Let $\mathcal{D}_1 = (V,E_1)$ and $\mathcal{D}_2 = (V,E_2)$ be directed 
	graphs. If $\mathcal{I}(\mathcal{D}_1) \subseteq 
	\mathcal{I}(\mathcal{D}_2)$, then $\mathcal{D}_2 \subseteq \mathcal{D}_1$. 
	\label{prop:subIndepSubGraph}
\end{prop}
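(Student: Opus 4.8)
The plan is to prove the contrapositive: assuming $\mathcal{D}_2 \not\subseteq \mathcal{D}_1$, I will exhibit a triple belonging to $\mathcal{I}(\mathcal{D}_1)$ but not to $\mathcal{I}(\mathcal{D}_2)$. Since every self-edge lies in both graphs, $\mathcal{D}_2\not\subseteq\mathcal{D}_1$ forces the existence of an edge $\alpha\rightarrow\beta$ with $\alpha\neq\beta$ that is present in $\mathcal{D}_2$ but absent from $\mathcal{D}_1$. The candidate witness is then the triple $(\{\alpha\},\{\beta\},V\setminus\{\alpha\})$.

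First I would verify that $(\{\alpha\},\{\beta\},V\setminus\{\alpha\})\in\mathcal{I}(\mathcal{D}_1)$. This is exactly the statement that $V\setminus\{\alpha\}$ $\mu$-separates $\beta$ from $\alpha$ in $\mathcal{D}_1$ whenever $\alpha\rightarrow\beta\notin\mathcal{D}_1$; equivalently, it is the pairwise Markov property of $\mathcal{I}(\mathcal{D}_1)$ with respect to $\mathcal{D}_1$. This fact is recalled in Subsection ``Comparison with DAG-based Models'' (it is precisely what makes the CM-algorithm correct), so I would simply invoke it rather than re-derive it.

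Next I would verify that $(\{\alpha\},\{\beta\},V\setminus\{\alpha\})\notin\mathcal{I}(\mathcal{D}_2)$. Here the single edge $\alpha\rightarrow\beta$ is itself a walk in $\mathcal{D}_2$ from $\alpha$ to $\beta$: it is nontrivial, its initial node $\alpha$ lies in $\{\alpha\}\setminus(V\setminus\{\alpha\})$, it contains no colliders and no noncolliders (it has no internal nodes), and its final edge has a head at $\beta$. By Definition \ref{def:deltaSep} this walk is $\mu$-connecting, so $\beta$ is not $\mu$-separated from $\alpha$ given $V\setminus\{\alpha\}$ in $\mathcal{D}_2$. Combining the two steps gives $\mathcal{I}(\mathcal{D}_1)\not\subseteq\mathcal{I}(\mathcal{D}_2)$, which is the contrapositive of the claim.

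The main — and essentially the only — obstacle is the first step: establishing that deleting the edge $\alpha\rightarrow\beta$ genuinely makes $V\setminus\{\alpha\}$ a $\mu$-separating set for the ordered pair $(\alpha,\beta)$. Since the paper already records this as a known property of $\mu$-separation, I would treat it as given rather than repeat the (somewhat fiddly) walk analysis; everything else is a direct unwinding of the definition of $\mu$-separation applied to a length-one walk.
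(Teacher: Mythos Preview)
Your proof is correct and follows essentially the same approach as the paper: both arguments pick an edge $\alpha\rightarrow\beta$ absent from $\mathcal{D}_1$, exhibit a separating set $C\not\ni\alpha$ so that $(\alpha,\beta,C)\in\mathcal{I}(\mathcal{D}_1)\subseteq\mathcal{I}(\mathcal{D}_2)$, and conclude that the edge cannot be in $\mathcal{D}_2$ since the edge itself would be $\mu$-connecting given $C$. The only cosmetic difference is the choice of separating set---the paper uses $\pa_{\mathcal{D}_1}(\beta)$ while you use $V\setminus\{\alpha\}$---and both choices are equally valid.
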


\begin{proof}
	Let $\alpha\rightarrow\beta$ be an edge which is not in $\mathcal{D}_1$. In 
	this case, $(\alpha,\beta,\pa_{\mathcal{D}_1}(\beta))\in 
	\mathcal{I}(\mathcal{D}_1)$ and 
	$\pa_{\mathcal{D}_1}(\beta)\subseteq V\setminus \{\alpha \}$. Therefore, 
	$(\alpha,\beta,\pa_{\mathcal{D}_1}(\beta))\in \mathcal{I}(\mathcal{D}_2)$, 
	and $\alpha\rightarrow\beta$ is not in $\mathcal{D}_2$ as $\alpha\notin 
	\pa_{\mathcal{D}_1}(\beta)$.
\end{proof}

\section{Edge Order}
\label{sec:edgeOrder}

We define the order of the pair $(\alpha,\beta)$.

\begin{defn}[Order of an ordered pair of nodes]
	Let $\mathcal{D} = (V,E)$ be a directed graph, and $\alpha,\beta\in V$, 
	$\alpha\neq \beta$. The order of $(\alpha,\beta)$ relative to $\mathcal{D}$ 
	is $\inf\{\vert C\vert : (\alpha,\beta,C)\in \mathcal{I}(\mathcal{D}), 
	C\subseteq V\setminus \{\alpha\} \}$, 
	and we denote this by $o(\alpha,\beta,\mathcal{D})$.
\end{defn}

By convention $o(\alpha,\beta,\mathcal{D}) = \infty$ if and only if there is no 
set $C\subseteq V\setminus \{\alpha\} $ such that $ (\alpha,\beta,C)\in 
\mathcal{I}(\mathcal{D})$. The \emph{order} of a graph, 
$\mathcal{D} = (V,E)$, is the largest, finite order 
$o(\alpha,\beta,\mathcal{D})$ if such a finite order exists.

\begin{prop}
	If $o(\alpha,\beta,\mathcal{G}) < \infty$,  then 
	$o(\alpha,\beta,\mathcal{G})\leq \vert \pa_\mathcal{D}(\beta)\vert $
\end{prop}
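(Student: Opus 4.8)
The plan is to show directly that $\pa_\mathcal{D}(\beta)$ is always a valid separating set for the pair $(\alpha,\beta)$ whenever $o(\alpha,\beta,\mathcal{D}) < \infty$, and that it is contained in $V\setminus\{\alpha\}$; the bound on the order then follows because the order is defined as an infimum over such sets.

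First I would recall the key structural fact used earlier in the excerpt (in the proof of Proposition \ref{prop:subIndepSubGraph}): for any edge $\alpha\rightarrow\beta$ \emph{not} in $\mathcal{D}$, we have $(\alpha,\beta,\pa_\mathcal{D}(\beta)) \in \mathcal{I}(\mathcal{D})$, and moreover $\alpha\notin\pa_\mathcal{D}(\beta)$ in that case, so $\pa_\mathcal{D}(\beta)\subseteq V\setminus\{\alpha\}$. This is essentially the statement that $\beta$ is $\mu$-separated from $\alpha$ given its parent set when $\alpha$ is not a parent — any $\mu$-connecting walk into $\beta$ must have its final edge be a head at $\beta$, i.e.\ come from a parent of $\beta$, and that parent is a non-collider on the walk unless the walk is trivial, so conditioning on $\pa_\mathcal{D}(\beta)$ blocks all such walks. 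The second step is to observe the contrapositive: if $o(\alpha,\beta,\mathcal{D}) < \infty$, then by definition there \emph{is} some $C\subseteq V\setminus\{\alpha\}$ with $(\alpha,\beta,C)\in\mathcal{I}(\mathcal{D})$; in particular $\alpha\ne\beta$ and $\beta$ is separable from $\alpha$, which (as noted just after Definition \ref{def:paFaith}) is equivalent to $\alpha\rightarrow\beta$ not being an edge of $\mathcal{D}$. Hence the edge is absent, so the fact from the first step applies.

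Putting these together: $\pa_\mathcal{D}(\beta)$ is a subset of $V\setminus\{\alpha\}$ satisfying $(\alpha,\beta,\pa_\mathcal{D}(\beta))\in\mathcal{I}(\mathcal{D})$, so it is one of the sets over which the infimum defining $o(\alpha,\beta,\mathcal{D})$ is taken. Therefore $o(\alpha,\beta,\mathcal{D}) \le |\pa_\mathcal{D}(\beta)|$.

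The main obstacle, such as it is, is purely bookkeeping: making sure the case analysis on whether $\alpha\rightarrow\beta\in\mathcal{D}$ is handled cleanly, and citing the right earlier result — the $\mu$-separation-by-parents claim is implicit in the proof of Proposition \ref{prop:subIndepSubGraph} rather than stated as a standalone lemma, so I would either reference that proof or re-derive the one-line $\mu$-separation argument (no non-collider in $\pa_\mathcal{D}(\beta)$, final edge a head at $\beta$) inline. There is also a trivial subtlety that $\pa_\mathcal{D}(\beta)$ includes $\beta$ itself via the convention that all self-edges are present; since the infimum ranges over $C\subseteq V\setminus\{\alpha\}$ and $\alpha\ne\beta$, this causes no problem, but it is worth a remark that $|\pa_\mathcal{D}(\beta)|$ here counts the self-edge.
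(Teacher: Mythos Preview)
Your proposal is correct and follows essentially the same argument as the paper: finite order forces the edge $\alpha\rightarrow\beta$ to be absent, whence $\pa_\mathcal{D}(\beta)\subseteq V\setminus\{\alpha\}$ is a separating set and the infimum is bounded by its size. The paper's proof is the same two-step argument stated in two sentences; your version simply unpacks the justifications (citing the inseparability remark after Definition~\ref{def:paFaith} and the $\mu$-separation-by-parents fact from the proof of Proposition~\ref{prop:subIndepSubGraph}) and adds the harmless self-edge bookkeeping remark.
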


\begin{proof}
	If $o(\alpha,\beta,\mathcal{G}) < \infty$, then $\alpha\rightarrow\beta$ is 
	not in $\mathcal{D}$, and therefore $\beta$ is $\mu$-separated from 
	$\alpha$ given $\pa_\mathcal{D}(\beta)\subseteq V\setminus \{\alpha\} $.
\end{proof}

\section{The dSGS-algorithm}
\label{app:dSGS}

\begin{algorithm}
	\SetKwInOut{Input}{input}
	\SetKwInOut{Output}{output}
	\Input{$\mathcal{I}$ over $V$ such that $\vert V \vert = n$, and $k$ such 
		that $0\leq 
		k \leq n-1$}
		$i\gets 0$\;
		\For{$i = 0,1,\ldots, k$}{
			\For{$\beta \in V$}{
				\For{$\alpha \in V\setminus \{\beta\}$}{
					\For{$C \subseteq V\setminus \{\alpha\}: \vert C\vert = k$}{
						\If{$(\alpha,\beta,C)\in \mathcal{I}$}
						{
							$E \gets E \setminus \{\alpha\rightarrow \beta \}$\;
							$\mathcal{D} \gets (V,E)$\;
							}	}}
							}
							}
							$\mathcal{D}_{sgs} \gets \mathcal{D}$\;
							\Output{$\mathcal{D}_{sgs}$}
							\caption{Dynamical SGS (dSGS)}
							\label{algo:dSGS}
							\end{algorithm}
							
							The following 
							result uses only the Markov assumption. This is 
							analogous to the 
							SGS-algorithm for DAGs \citep{spirtes1993}. The 
							order 
							of a graph is defined in Appendix 
							\ref{sec:edgeOrder}.
							
							\begin{prop}
								Assume that $\mathcal{D}$ is of order less than 
								or equal to 
								$m$, and that $\mathcal{I}$ 
								is Markov with respect to $\mathcal{D}$. In the 
								oracle case, the output of 
								Algorithm \ref{algo:dSGS} (dSGS), using $k = m$ 
								as the integer parameter, 
								is  a 
								subgraph of $\mathcal{D}$.
								\label{prop:dSGSsubgraph}
								\end{prop}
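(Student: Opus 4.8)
The plan is to verify that the output $\mathcal{D}_{sgs}$ omits every edge that $\mathcal{D}$ omits. Algorithm \ref{algo:dSGS} begins from the complete graph (including all self-edges), and the only thing it ever does to the edge set is delete edges, so it suffices to show that for every ordered pair $(\alpha,\beta)$ of distinct nodes with $\alpha\rightarrow\beta\notin\mathcal{D}$, the edge $\alpha\rightarrow\beta$ gets deleted during the run. Self-edges require no argument: by convention they lie in $\mathcal{D}$, and the algorithm only deletes edges between distinct nodes.

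First I would fix such a pair $(\alpha,\beta)$. Since $\alpha\rightarrow\beta$ is not in $\mathcal{D}$, we have $\pa_\mathcal{D}(\beta)\subseteq V\setminus\{\alpha\}$ and $\beta$ is $\mu$-separated from $\alpha$ given $\pa_\mathcal{D}(\beta)$, that is, $(\alpha,\beta,\pa_\mathcal{D}(\beta))\in\mathcal{I}(\mathcal{D})$; this is the same observation used in the proof of Proposition \ref{prop:subIndepSubGraph} and in Appendix \ref{sec:edgeOrder}. In particular $o(\alpha,\beta,\mathcal{D})<\infty$, and since $\mathcal{D}$ is of order at most $m$ this finite order is at most $m$. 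Because $V$ is finite, the infimum defining the order is attained, so there is a set $C^*\subseteq V\setminus\{\alpha\}$ with $|C^*|\leq m$ and $(\alpha,\beta,C^*)\in\mathcal{I}(\mathcal{D})$.

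Next I would apply the Markov hypothesis $\mathcal{I}(\mathcal{D})\subseteq\mathcal{I}$ to conclude $(\alpha,\beta,C^*)\in\mathcal{I}$. Run with the integer parameter $k=m$, Algorithm \ref{algo:dSGS} tests every triple $(\alpha,\beta,C)$ with $C\subseteq V\setminus\{\alpha\}$ and $|C|\leq k$; in particular it tests $(\alpha,\beta,C^*)$, since $|C^*|\leq m=k$. The oracle answers that $(\alpha,\beta,C^*)\in\mathcal{I}$, so $\alpha\rightarrow\beta$ is removed and, by monotonicity of the deletions, never reappears. Hence $\alpha\rightarrow\beta\notin\mathcal{D}_{sgs}$. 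Since $(\alpha,\beta)$ was an arbitrary pair of distinct nodes with $\alpha\rightarrow\beta\notin\mathcal{D}$, every non-self-edge of $\mathcal{D}_{sgs}$ lies in $\mathcal{D}$; together with the self-edges, which lie in both graphs, this gives $\mathcal{D}_{sgs}\subseteq\mathcal{D}$.

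The only step that uses the hypothesis nontrivially, and hence the only one requiring care, is the passage from $\alpha\rightarrow\beta\notin\mathcal{D}$ to the existence of a separating set of size at most $m=k$: a search bounded at depth $k$ can be guaranteed to delete a spurious edge only when a separating certificate of size $\leq k$ exists. This is exactly what the assumption that $\mathcal{D}$ has order at most $m$ delivers, via the fact that any missing edge has finite order (being $\mu$-separated by the parent set of its head). The rest is bookkeeping about the loop structure and the fact that the algorithm's edge set only ever shrinks.
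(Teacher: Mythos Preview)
Your proof is correct and follows essentially the same route as the paper's: for an absent edge $\alpha\rightarrow\beta$, use that $\pa_\mathcal{D}(\beta)$ is a separating set to get $o(\alpha,\beta,\mathcal{D})<\infty$, invoke the order bound to find a separating $C$ with $|C|\leq m$, apply Markovness to pass to $\mathcal{I}$, and conclude that the oracle removes the edge. Your additional bookkeeping about self-edges and the monotone nature of the deletions is sound and simply makes explicit what the paper leaves implicit.
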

								
								\begin{proof}
									Assume $\alpha\rightarrow\beta$ is not 
									$\mathcal{D}$. In this case, 
									$(\alpha,\beta,\pa_\mathcal{D}(\beta))\in 
									\mathcal{I}(\mathcal{D})$, $\alpha\notin 
									\pa_\mathcal{D}(\beta)$. We have 
									$o(\alpha,\beta,\mathcal{D}) < \infty$, 
									and therefore $o(\alpha,\beta,\mathcal{D}) 
									\leq m$ by assumption. There exists a $C$, 
									$\vert C\vert \leq m$, and $C\subseteq 
									V\setminus \{\alpha \}$ such that 
									$(\alpha,\beta,C)\in 
									\mathcal{I}(\mathcal{D})$. Using the Markov 
									property, $(\alpha,\beta,C)\in 
									\mathcal{I}$. In the 
									oracle case, this edge is 
									therefore removed using the set $C$, and 
									$\alpha\rightarrow\beta$ is not in
									$\mathcal{D}_{sgs}$.
									\end{proof}

\section{Learning and Faithfulness}
\label{app:learnFaith}

In this section, we relate the contents of Section \ref{sec:characFaith} to 
structure learning.

\subsection{Learning the Edge-Transitive Graph}
\label{ssec:learnEdgeTransitive}

From a collection of test results, i.e., an empirical independence model, we 
may output the corresponding edge-transitive 
graph. This graph is defined for \emph{any} independence model and Algorithm 
\ref{algo:edgeTransitive} (for $k = n$) therefore outputs a graph which is 
faithful 
to the observed test results, regardless of whether there are statistical 
errors in 
the test results.

We say that $\mathcal{I}$ is $k$-faithful with respect to $\mathcal{D}$ if for 
all $C$ such that $\vert C\vert \leq k$

\begin{align*}
(A,B,C) \in \mathcal{I} \Rightarrow (A,B,C) \in \mathcal{I}(\mathcal{D}).
\end{align*}

One should note that $n$-faithfulness, and $(n-1)$-faithfulness, is the same as 
faithfulness. The 
idea of $k$-faithfulness is similar in nature to 
\cite{mogensen2023weak} which defines a weak notion of Markov equivalence by 
restricting the size of the conditioning sets.

\begin{algorithm}
	\SetKwInOut{Input}{input}
	\SetKwInOut{Output}{output}
	\Input{$\mathcal{I}$ over $V$ such that $\vert V \vert = n$, and $k$ such 
		that $0\leq 
		k \leq n-1$}
	$i\gets 0$\;
	$\mathcal{D}$ is the complete graph on nodes $V$\;
	\For{$i = 0,1,\ldots, k$}{
		\For{$\beta \in V$}{
			\For{$\alpha \in V\setminus \{\beta\}$}{
				\For{$C \subseteq V\setminus \{\alpha\}: \vert C\vert = k$}{
					\If{E0, E1, E2, or E3 is violated}
					{
						$E \gets E \setminus \{\alpha\rightarrow \beta \}$\;
						$\mathcal{D} \gets (V,E)$\;
					}	}}
				}
			}
			\Output{$\mathcal{D}$}
			\caption{Edge-Transitive Graph}
			\label{algo:edgeTransitive}
		\end{algorithm}
		
		The next proposition follows immediately from the proof of Theorem 
		\ref{thm:edgeFaith}.
		
		\begin{prop}
			The output of Algorithm \ref{algo:edgeTransitive} is $k$-faithful.
		\end{prop}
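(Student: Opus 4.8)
The plan is to first identify the graph that Algorithm~\ref{algo:edgeTransitive} actually outputs, and then to observe that the argument proving Theorem~\ref{thm:edgeFaith} is indifferent to the size of the conditioning sets involved, so that it restricts verbatim to sets of size at most~$k$.

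\textbf{Step 1 (identify the output graph).} Write $\mathcal{D}^{(k)}$ for the output. Reading off the algorithm, one starts from the complete graph and deletes $\alpha\to\beta$ exactly when one of E0--E3 fails for some $C\subseteq V\setminus\{\alpha\}$ with $|C|\le k$. Hence $\alpha\to\beta$ is in $\mathcal{D}^{(k)}$ if and only if E0--E3 hold for every such $C$. In particular $\mathcal{F}_I\subseteq\mathcal{D}^{(k)}$, with equality once $k\ge n-1$, consistent with the observation that $(n-1)$-faithfulness is faithfulness.

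\textbf{Step 2 (transfer the faithfulness argument).} Fix a triple $(A,B,C)$ with $|C|\le k$ and suppose $(A,B,C)\notin\mathcal{I}(\mathcal{D}^{(k)})$. Then there is a $\mu$-connecting walk $\omega$ in $\mathcal{D}^{(k)}$ from some $\alpha\in A\setminus C$ to some $\beta\in B$, with a head at $\beta$, all colliders in $\an_{\mathcal{D}^{(k)}}(C)$, and no noncollider in $C$. I would now replay the induction used to prove Theorem~\ref{thm:edgeFaith}: it peels edges off such a walk and, at each step, invokes conditions E0--E3 of the edges on $\omega$ together with E0--E3 of the edges on the directed paths realizing colliders as ancestors of $C$. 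The key point is that every one of these conditions is invoked at the same conditioning set, namely $C$ itself — conditions E0--E3 are stated relative to a single set $C$, and nothing in the argument enlarges it (only left and right decomposition are used, to move between the single nodes $\alpha,\beta$ and the sets $A,B$, and neither changes $C$). Since $|C|\le k$ and, by Step~1, every edge of $\mathcal{D}^{(k)}$ (hence every edge of any directed path inside it) satisfies E0--E3 at~$C$, the induction applies unchanged and gives $(\alpha,\beta,C)\notin\mathcal{I}$, whence $(A,B,C)\notin\mathcal{I}$ by left and right decomposition. This is the contrapositive of the asserted $k$-faithfulness.

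\textbf{The main obstacle} is the claim in Step~2 that the proof of Theorem~\ref{thm:edgeFaith} is ``local in $C$'': that it never needs an instance of E0--E3 at a conditioning set other than the one in the triple under analysis, and uses no property of $\mathcal{F}_I$ beyond the fact that its edges, and the edges of directed paths inside it, satisfy E0--E3. Given the way Definitions~\ref{def:transitivity} and~\ref{def:edgetransitive} are organized around a fixed $C$, this should hold, but it must be checked against the actual proof of Theorem~\ref{thm:edgeFaith}. If that proof instead runs through Proposition~\ref{prop:graphTransitive} and Theorem~\ref{thm:characFaith}, one would reformulate the same observation as: $\mathcal{I}(\mathcal{D}^{(k)})$ satisfies the transitivity conditions at every $C$ with $|C|\le k$, and an analogue of Theorem~\ref{thm:characFaith} restricted to conditioning sets of size at most~$k$ holds; the content is unchanged, but one must verify that the size bound survives each step of that chain. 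Once $C$-locality is granted, the proposition is immediate, as claimed.
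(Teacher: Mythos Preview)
Your approach is correct and matches the paper's, which simply states that the proposition ``follows immediately from the proof of Theorem~\ref{thm:edgeFaith}''; you have spelled out precisely why, namely that the induction there is local in the conditioning set~$C$ and uses only E0--E3 at that fixed~$C$. One minor simplification: because Proposition~\ref{prop:allCollInC} is invoked first to arrange that all colliders lie in $C$ itself (not merely in $\an(C)$), you never need to chase directed paths from colliders down to~$C$, so only the edges of the walk~$\omega$ are used.
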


		\subsection{Trimming the Output of a Learning Algorithm}
		\label{ssec:trim}
		
		Constraint-based learning algorithms proceed by testing a number of 
		conditional 
		independences. These tests results may be reused to check conditions 
		D0, 
		D1, D2, 
		and D3, and remove any edges that are in violation.
		
		\begin{algorithm}
			\SetKwInOut{Input}{input}
			\SetKwInOut{Output}{output}
			\Input{$\mathcal{I}$ over $V$ such that $\vert V \vert = n$, and a 
			graph 
				$\mathcal{D}= (V,E)$}
			\For{$\beta \in V$}{
				\For{$\alpha \in V\setminus \{\beta\}$}{
					\For{$C \subseteq V\setminus \{\alpha\}$}{
						\If{D0, D1, D2, or D3 is violated}
						{
							$E \gets E \setminus \{\alpha\rightarrow \beta \}$\;
							$\mathcal{D} \gets (V,E)$\;
						}	}}
					}
					$\mathcal{D}_{tr}\gets \mathcal{D}$\;
					\Output{$\mathcal{D}_{tr}$}
					\caption{Trimming}
					\label{algo:trimFaith}
				\end{algorithm}

				\begin{prop}
					The independence model $\mathcal{I}$ (input in Algorithm 
					\ref{algo:trimFaith}) is faithful with respect to 
					$\mathcal{D}$ (output in 
					Algorithm \ref{algo:trimFaith}).
				\end{prop}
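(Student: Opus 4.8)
The plan is to reduce everything to the faithfulness characterization in Theorem~\ref{thm:characFaith}: to show that $\mathcal{I}$ is faithful with respect to the output graph $\mathcal{D}_{tr}$, it suffices to show that $\mathcal{I}$ is transitively closed with respect to $\mathcal{D}_{tr}$, i.e.\ that conditions D0--D3 of Definition~\ref{def:transitivity} hold for every edge $\alpha\rightarrow\beta$ of $\mathcal{D}_{tr}$ and every $C\subseteq V$. Since the preconditions of D1--D3 all force $\alpha\notin C$, only sets $C\subseteq V\setminus\{\alpha\}$ carry content, and these are exactly the sets over which the innermost loop of Algorithm~\ref{algo:trimFaith} ranges.

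First I would pin down, for a surviving edge $\alpha\rightarrow\beta$, which graph Algorithm~\ref{algo:trimFaith} actually tested the conditions against. The algorithm only ever deletes edges, and during the contiguous block of iterations in which the ordered pair $(\alpha,\beta)$ is processed the only edge it can delete is $\alpha\rightarrow\beta$ itself; hence the working graph is constant throughout that block --- call it $\mathcal{D}^{\ast}$ --- and $\mathcal{D}_{tr}\subseteq\mathcal{D}^{\ast}$. If $\alpha\rightarrow\beta$ is still present in $\mathcal{D}_{tr}$, then it was never removed, so for every $C\subseteq V\setminus\{\alpha\}$ the test ``D0, D1, D2, or D3 is violated'' came out false; that is, D0--D3 all hold for the pair $(\alpha\rightarrow\beta,C)$ with respect to $\mathcal{D}^{\ast}$.

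The main obstacle, and the step I expect to require care, is transferring these conditions from the intermediate graph $\mathcal{D}^{\ast}$ down to the (possibly strictly smaller) output graph $\mathcal{D}_{tr}$, since a deletion performed later in the run could in principle have invalidated an earlier check. The tool is monotonicity of $\mu$-separation under edge deletion: $\mathcal{D}_{tr}\subseteq\mathcal{D}^{\ast}$ implies $\mathcal{I}(\mathcal{D}^{\ast})\subseteq\mathcal{I}(\mathcal{D}_{tr})$ --- this is the converse of Proposition~\ref{prop:subIndepSubGraph} and reads off Definition~\ref{def:deltaSep}, because deleting edges both removes potential $\mu$-connecting walks and shrinks the ancestor sets $\an_{\mathcal{D}}(C)$ entering the collider condition. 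Condition D0 does not mention the graph, so it passes to $\mathcal{D}_{tr}$ unchanged. Conditions D1, D2, D3 mention the graph only through hypotheses of the form ``$(\cdot,\cdot,C)\notin\mathcal{I}(\mathcal{D})$'', and since $\mathcal{I}(\mathcal{D}^{\ast})\subseteq\mathcal{I}(\mathcal{D}_{tr})$ such a hypothesis stated for $\mathcal{D}_{tr}$ is stronger than the corresponding one for $\mathcal{D}^{\ast}$; hence the purely $\mathcal{I}$-valued conclusion still follows from the D1/D2/D3 already established with respect to $\mathcal{D}^{\ast}$. Thus D0--D3 hold for every edge of $\mathcal{D}_{tr}$ and every $C$, so $\mathcal{I}$ is transitively closed with respect to $\mathcal{D}_{tr}$, and Theorem~\ref{thm:characFaith} delivers that $\mathcal{I}$ is faithful with respect to $\mathcal{D}_{tr}$.
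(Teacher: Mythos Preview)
Your proposal is correct and follows essentially the same route as the paper: identify for each surviving edge an intermediate graph $\mathcal{D}^{\ast}$ (the paper calls it $\mathcal{D}_i$) with $\mathcal{D}_{tr}\subseteq\mathcal{D}^{\ast}\subseteq\mathcal{D}$ for which D0--D3 were verified, transfer these conditions to $\mathcal{D}_{tr}$ via the monotonicity $\mathcal{I}(\mathcal{D}^{\ast})\subseteq\mathcal{I}(\mathcal{D}_{tr})$, and conclude by Theorem~\ref{thm:characFaith}. Your version is in fact more careful than the paper's in justifying why the working graph is constant across the block processing $(\alpha,\beta)$ and in spelling out why the graph-dependent hypotheses of D1--D3 only weaken when passing to the subgraph.
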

				
				A constraint-based learning algorithm need not test all 
				possible independences. 
				If we let $\mathcal{I}$ be an `empirical' independence model, 
				i.e., a set of 
				local independences that are believed to hold/not hold based on 
				statistical tests, we may not 
				have access to 
				the entire $\mathcal{I}$ after running a learning algorithm. In 
				that case, the 
				trimming would need to be restricted to the observed part of 
				$\mathcal{I}$.
				
				\begin{proof}
					Let $\mathcal{D}_{tr}$ be the output of Algorithm 
					\ref{algo:trimFaith}. If 
					$\alpha\rightarrow\beta$ is in $\mathcal{D}_{tr}$, then 
					this edges 
					satisfies the 
					conditions in Definition \ref{def:transitivity} for some 
					graph 
					$\mathcal{D}_i$ such that $\mathcal{D}_{tr}\subseteq 
					\mathcal{D}_i 
					\subseteq \mathcal{D}$. Therefore, the conditions are also 
					satisfied for 
					$\mathcal{D}_{tr}$, and by Theorem \ref{thm:characFaith}, 
					$\mathcal{I}$ is 
					faithful with respect to $\mathcal{D}_{tr}$.
				\end{proof}

\section{Partial Observation}
\label{app:partialObs}

In this section, we will turn our attention to the setting where we only assume 
\emph{partial observation}. We will assume that there exists an underlying 
causal graph, $\mathcal{D}=  (V,E)$, however, we only observe the coordinate 
processes in the set $O$, $O\subseteq V$. In this case, one can use a so-called 
\emph{latent projection} to compute a \emph{directed mixed graph}, 
$\mathcal{G}$, such that $\mathcal{I}(\mathcal{G}) = 
\mathcal{I}(\mathcal{D})_O$ where $\mathcal{I}(\mathcal{D})_O = \{(A,B,C)\in 
\mathcal{I}(\mathcal{D}): A,B,C\subseteq O \}$ \citep{Mogensen2020a}. A 
directed mixed graph may have both directed, $\rightarrow$, and bidirected 
edges, $\leftrightarrow$. In case of partial observation, we only have access 
to $\mathcal{I}_O = \{(A,B,C)\in \mathcal{I}: 
A,B,C\subseteq O \}$ as we can only test local 
independence among the observed coordinate processes. It is important to note 
that the \emph{partial observation} in this paper 
refers to the fact that some coordinate processes are fully unobserved.

A first observation is the fact that the corresponding induced local 
independence 
graph, $\mathcal{D}_{I_O}$, may still be useful, even if its interpretation is 
slightly different. Assuming the equivalence of pairwise and global Markov 
properties, we still have $\mathcal{I}(\mathcal{D}_{I_O}) \subseteq 
\mathcal{I}_{O}$ such that $\mu$-separation in the induced local indepedence 
graph implies local independence.

\begin{prop}
	Let $\mathcal{D} = (V,E)$, and let $\mathcal{G} = (O,E)$ be the latent 
	projection of $\mathcal{G}$ over $O$, $O \subseteq V$. If $\mathcal{I}$ is 
	faithful (ancestor faithful) with respect 
	to $\mathcal{D}$, then $\mathcal{I}_O$ is faithful (ancestor faithful) with 
	respect 
	to $\mathcal{G}$.
\end{prop}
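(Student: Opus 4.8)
The plan is to prove the two assertions in turn, reducing each to the hypothesis on $\mathcal{D}$ via the defining property $\mathcal{I}(\mathcal{G})=\mathcal{I}(\mathcal{D})_O$ of the latent projection and the identity $\mathcal{I}_O=\{(A,B,C)\in\mathcal{I}:A,B,C\subseteq O\}$. The faithfulness case is then immediate: faithfulness of $\mathcal{I}$ with respect to $\mathcal{D}$ means $\mathcal{I}\subseteq\mathcal{I}(\mathcal{D})$, so any $(A,B,C)\in\mathcal{I}_O$ has $(A,B,C)\in\mathcal{I}$ with $A,B,C\subseteq O$, hence $(A,B,C)\in\mathcal{I}(\mathcal{D})$ and therefore $(A,B,C)\in\mathcal{I}(\mathcal{D})_O=\mathcal{I}(\mathcal{G})$; thus $\mathcal{I}_O\subseteq\mathcal{I}(\mathcal{G})$, i.e.\ $\mathcal{I}_O$ is faithful with respect to $\mathcal{G}$.

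For ancestor faithfulness the work is to lift a directed, $\mu$-connecting path in $\mathcal{G}$ back to one in $\mathcal{D}$. Let $A,B,C\subseteq O$ with $A\not\subseteq C$, and suppose $\pi=(\alpha=v_0\rightarrow v_1\rightarrow\cdots\rightarrow v_k=\beta)$ is a directed, $\mu$-connecting path from $\alpha\in A\setminus C$ to $\beta\in B$ given $C$ in $\mathcal{G}$, so that $v_1,\dots,v_{k-1}\notin C$. By the construction of the latent projection \citep{Mogensen2020a}, each directed edge $v_{i-1}\rightarrow v_i$ of $\mathcal{G}$ is witnessed by a directed path in $\mathcal{D}$ from $v_{i-1}$ to $v_i$ whose interior nodes all lie in $V\setminus O$. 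Concatenating these witnesses yields a directed walk from $\alpha$ to $\beta$ in $\mathcal{D}$ whose node set is contained in $\{v_0,\dots,v_k\}\cup(V\setminus O)$, and extracting from it a directed path $P$ from $\alpha$ to $\beta$ (which only discards nodes) keeps that containment. Now $P$ is directed, so it has no colliders and its last edge has a head at $\beta$; its interior nodes lie in $\{v_1,\dots,v_{k-1}\}\cup(V\setminus O)$, none of which is in $C$ (the $v_i$ since $\pi$ is $\mu$-connecting given $C$, and the nodes of $V\setminus O$ since $C\subseteq O$); and $\alpha\notin C$. Hence $P$ is a directed, $\mu$-connecting path from $A$ to $B$ given $C$ in $\mathcal{D}$, and since $\alpha\in A\setminus C$ we have $A\not\subseteq C$, so ancestor faithfulness of $\mathcal{I}$ with respect to $\mathcal{D}$ (Definition~\ref{def:ancFaith}) gives $(A,B,C)\notin\mathcal{I}$, whence $(A,B,C)\notin\mathcal{I}_O$. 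As $\mathcal{G}$ has node set $O$, this is exactly ancestor faithfulness of $\mathcal{I}_O$ with respect to $\mathcal{G}$.

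The step I expect to require the most care is this lifting of edges: invoking the precise definition of the latent projection to see that each directed $\mathcal{G}$-edge among observed nodes comes from a directed $\mathcal{D}$-path through unobserved nodes, and verifying that concatenation followed by path-extraction cannot reintroduce a node of $C$ — which it cannot, since $P$ uses only nodes already on the walk, all of which are outside $C$. No analysis of the bidirected edges of $\mathcal{G}$ is needed, since a directed path uses only $\rightarrow$-edges.
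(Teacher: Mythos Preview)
Your proof is correct and follows the same approach as the paper. The faithfulness case is identical, and for ancestor faithfulness the paper simply asserts that a directed $\mu$-connecting path in $\mathcal{G}$ yields one in $\mathcal{D}$ (relying on the characterisation of directed edges in the latent projection stated just after the proof); you spell out exactly this lifting argument in full, including the concatenation, path extraction, and verification that no interior node lands in $C$.
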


\begin{proof}
	If $\mathcal{I}$ is faithful with respect to $\mathcal{D}$, then 
	$\mathcal{I}_O$ is clearly faithful with respect to $\mathcal{G}$ 
	using the fact that $\mathcal{I}(\mathcal{G}) = \mathcal{I}(\mathcal{D})_O$.
	
	Assume $\mathcal{I}$ is ancestor faithful with respect to $\mathcal{D}$. If 
	there is a directed path from $A$ to $B$ in $\mathcal{G}$ which is 
	$\mu$-connecting given $C$, then there is also a directed path from 
	$A$ to $B$ in $\mathcal{D}$ which is $\mu$-connecting given 
	$C$, and we see that $(A,B,C)\notin \mathcal{I}_O$.
\end{proof}

On the other hand,´parent faithfulness or causal minimality of $\mathcal{I}$ 
and $\mathcal{D}$ is not inherited by $\mathcal{I}_O$ and $\mathcal{G}$ in this 
way.

Note that the next proposition does not assume causal sufficiency.

\begin{prop}[\cite{mogensen2020causal}]
	Assume ancestor faithfulness of 
	$\mathcal{I}$ 
	with respect to $\mathcal{D}$. If $\alpha\rightarrow\beta$ is not in 
	the output of the CS-algorithm (in the oracle case), 
	then $\alpha\rightarrow\beta$ is not in the latent projection of the causal 
	graph, 
	$\mathcal{G}$.
	\label{prop:latentCS}
\end{prop}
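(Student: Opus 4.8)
The plan is to prove the contrapositive: assuming ancestor faithfulness of $\mathcal{I}$ with respect to $\mathcal{D}$, if $\alpha \to \beta$ is an edge of the latent projection $\mathcal{G}$ of $\mathcal{D}$ over $O$, then the CS-algorithm --- run in the oracle case with the independence model $\mathcal{I}_O$ over node set $O$ --- does not delete $\alpha \to \beta$. The starting point is the characterization of directed edges in a latent projection: $\alpha \to \beta$ lies in $\mathcal{G}$ exactly when there is a directed path $P \colon \alpha \to v_1 \to \cdots \to v_k \to \beta$ in $\mathcal{D}$ whose interior nodes $v_1, \dots, v_k$ all lie in $V \setminus O$ (with $k = 0$ giving the direct edge $\alpha \to \beta$ in $\mathcal{D}$). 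I will use this one path $P$ throughout, keeping in mind that $\alpha \ne \beta$.

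The key step is the observation that $P$ is a directed $\mu$-connecting path from $\alpha$ to $\beta$ given any set $C$ with $\alpha \notin C$ and $C \cap \{v_1, \dots, v_k\} = \emptyset$: $P$ has no colliders, so that condition is vacuous; its only noncolliders are the interior nodes $v_1, \dots, v_k$; and its final edge has a head at $\beta$. Since $v_1, \dots, v_k \notin O$, this applies in particular to every $C \subseteq O \setminus \{\alpha\}$, and also to $C = \{\beta\}$ (here $\beta \notin \{v_1, \dots, v_k\}$ because $P$ is a path, and $\alpha \notin \{\beta\}$ because $\alpha \ne \beta$). Feeding $P$ into the ancestor faithfulness hypothesis (Definition \ref{def:ancFaith}) then yields $(\alpha, \beta, C) \notin \mathcal{I}$, hence $(\alpha, \beta, C) \notin \mathcal{I}_O$, for each such $C$.

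It remains to match this against the conditioning sets the CS-algorithm actually uses for the ordered pair $(\alpha, \beta)$. Its first step tests $(\alpha, \beta, \{\beta\})$; by the above this is not in $\mathcal{I}_O$, so $\alpha \to \beta$ survives step 1 and $\alpha \in \pa_{\mathcal{D}_1}(\beta)$, where $\mathcal{D}_1$ is the step-1 output (a graph on $O$). Its second step tests $(\alpha, \beta, C)$ with $C = \pa_{\mathcal{D}}(\beta) \setminus \{\alpha\}$ for whatever graph $\mathcal{D} \subseteq \mathcal{D}_1$ is current when the pair is examined; since every parent set in such a graph is contained in $O$, we have $C \subseteq O \setminus \{\alpha\}$, so again $(\alpha, \beta, C) \notin \mathcal{I}_O$ and the edge is not removed. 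Hence $\alpha \to \beta$ is in the output of CS, which is the desired contrapositive.

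The part requiring the most care is the last one: identifying precisely which conditioning sets CS uses and confirming that $P$ remains $\mu$-connecting for all of them --- in particular noticing that the ever-present self-edge $\beta \to \beta$ places $\beta$ inside the step-2 conditioning set, which is harmless only because $\beta$ is an endpoint of $P$ and therefore not one of its noncolliders, and because the interior of $P$ stays entirely outside $O$. The rest is bookkeeping: the equivalence $(\alpha, \beta, C) \in \mathcal{I} \iff (\alpha, \beta, C) \in \mathcal{I}_O$ for $\alpha, \beta \in O$ and $C \subseteq O$, and the checks $\alpha \notin \{\beta\}$ and $\alpha \notin \pa_\mathcal{D}(\beta) \setminus \{\alpha\}$ that ancestor faithfulness needs.
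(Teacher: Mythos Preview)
Your proof is correct. The paper itself does not supply a proof of this proposition --- it is attributed to \cite{mogensen2020causal} --- so there is no in-paper argument to compare against. Your contrapositive approach is the natural one: use the characterization of directed edges in the latent projection (stated immediately after the proposition in the paper) to obtain a directed path in $\mathcal{D}$ whose interior lies in $V\setminus O$, observe that this path is $\mu$-connecting given any $C\subseteq O\setminus\{\alpha\}$ (and in particular given $\{\beta\}$), and then invoke ancestor faithfulness for each conditioning set the CS-algorithm actually queries for the pair $(\alpha,\beta)$. Your handling of the edge cases --- $\alpha\neq\beta$, the self-loop forcing $\beta$ into the step-2 conditioning set, and the translation between $\mathcal{I}$ and $\mathcal{I}_O$ --- is careful and correct.
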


The edge $\alpha\rightarrow\beta$, $\alpha\neq\beta$, is in the latent 
projection of $\mathcal{D}$ if and only if there is a directed path from 
$\alpha$ to $\beta$ in $\mathcal{D}$ such that all nonendpoint nodes are 
unobserved, i.e., not in $O$.

\section{Simulations}
\label{app:sim}

We generated data from a VAR(1)-process. We first generated a graph by 
sampling edges independently with a, randomly sampled, success parameter 
between 0 and 
.5. Given the graphical structure, we sampled the nonzero regression 
parameters independently and uniformly on $[-1,1]$. We kept sampling until the 
result was a stable VAR(1)-process. We sampled data from this VAR(1)-process 
(100 
observed time points). We repeated the entire procedure $M$ times (see Figures 
\ref{fig:num} and \ref{fig:numPartial}).

The simulations were implemented in {\tt R} and we used the Granger causality 
test´in the
FIAR package ({\tt condGranger}). Code is available along with this paper.

\subsection{Partial Observation}
\label{ssec:numPart}

We also compare the algorithms in the case of partial observation (Appendix 
\ref{app:partialObs}), also 
including the dFCI-algorithm from \cite{mogensenUAI2018}. This algorithm is the 
only algorithm of the five in Figure \ref{fig:numPartial} which is sound and 
complete in the case of partial observation, i.e., outputs the true graph in 
the oracle case. In the case of partial observation, the learning target is the 
greatest element of the Markov equivalence class of the true graph 
\citep{Mogensen2020a} as the true graph itself is not necessarily identifiable 
from tests of local independence. 
We compare the output of the learning algorithms only to the \emph{directed} 
part of the learning target, i.e., we ignore bidirected edges in the learning 
target.

For this experiment, we sampled the number of unobserved nodes uniformly on 
$\{0,1,\ldots,n\}$. The 
true (and fully observed) graph was then sampled as in Figure \ref{fig:num} 
(see above description). We marginalized the graph using the latent projection 
and computed the 
greatest element of the Markov equivalence class of the latent 
projection.

As seen from Figure \ref{fig:numPartial}, the dFCI does not fare better than 
the simpler algorithms. Most likely this is due to the fact that it uses a 
large number of tests and makes decisions sequentially based on these test 
results. This may lead to propagation of statistical errors.

\begin{figure*}
	\includegraphics[scale = .7]{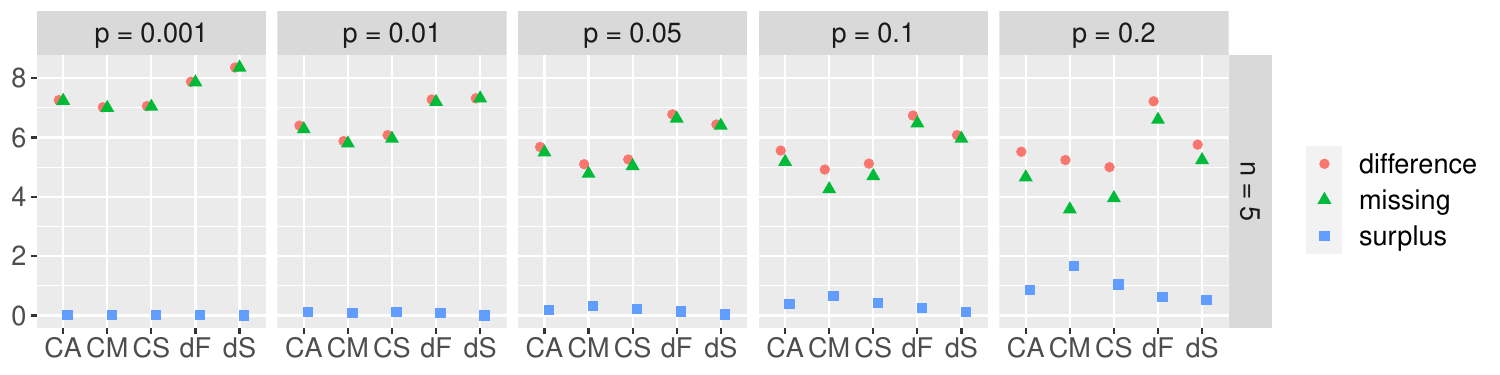}
	\caption{Comparison of algorithms in the case of partial observation. 
	Points 
	indicate mean over $M = 50$ 
		repetitions (see caption of Figure \ref{fig:num} for a description of 
		the symbols). Subsection 
		\ref{ssec:numPart} provides more details.}
		\label{fig:numPartial}
		\end{figure*}

\subsection{Number of Tests}
\label{ssec:numTests}

Of the algorithms reported, only the CA-, the CS-, and the dFCI-algorithms are 
`adaptive' 
in the sense 
that they use different numbers of tests depending on the test results. The 
dSGS-algorithm uses all possible tests, the CM-algorithm uses a single test for 
each 
ordered pair of nodes, and the CS-algorithm uses at most two tests for each 
ordered 
pair of nodes. In the experiment reported in Figure \ref{fig:num}, the number 
of tests used by the CA-algorithm was in the ranges $20-142$ ($n = 5$),  
$45-621$ ($n = 7$), and  $79-1509$ ($n = 9$), respectively.

\section{Proofs}
\label{app:proofs}

\begin{proof}[Proof of Proposition \ref{prop:graphTransitive}]
	We should show that the conditions D0-D3 in Definition 
	\ref{def:transitivity} hold for every $C\subseteq V$ when $\mathcal{I} = 
	\mathcal{I}(\mathcal{D})$. D0 holds as $\alpha\rightarrow\beta$ is 
	$\mu$-connecting for all $C$ such that $\alpha\notin C$. In D1, if 
	$(\gamma,\alpha,C)\notin \mathcal{I}(\mathcal{D})$, then there is a 
	$\mu$-connecting walk from $\gamma$ to $\alpha$ given $C$, and if 
	$\alpha\notin C$, then the composition of this walk with the edge 
	$\alpha\rightarrow\beta$ is $\mu$-connecting from $\gamma$ to $\beta$ 
	given $C$. Conditions D2 and D3 follow similarly. This is clear from the 
	definition of $\mu$-separation.
\end{proof}

\begin{lem}[\cite{Mogensen2020a}]
	If there is a $\mu$-connecting walk from $\alpha$ to $\beta$ given $C$, 
	then there is a $\mu$-connecting walk from $\alpha$ to $\beta$ given $C$ 
	such that all colliders are in $C$.
	\label{prop:allCollInC}
\end{lem}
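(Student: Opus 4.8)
The plan is to peel off the ``bad'' colliders one at a time by rerouting the walk through $C$, organized as an induction on the number of colliders lying outside $C$. Fix a $\mu$-connecting walk $\omega$ from $\alpha$ to $\beta$ given $C$. By Definition~\ref{def:deltaSep} every collider on $\omega$ lies in $\an_\mathcal{D}(C)$, so a collider not in $C$ lies in $\an_\mathcal{D}(C)\setminus C$; write $m(\omega)$ for the number of such colliders. If $m(\omega)=0$ we are done, so suppose $\gamma$ occurs on $\omega$ as a collider with $\gamma\notin C$. At this occurrence $\gamma$ has two incident walk-edges, both with a head at $\gamma$; in particular this occurrence is interior to $\omega$, hence not a walk endpoint.

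Since $\gamma\in\an_\mathcal{D}(C)$ there is a directed path from $\gamma$ into $C$; let $c\in C$ be the first node of $C$ met along it and truncate there, obtaining $\gamma=\delta_0\rightarrow\delta_1\rightarrow\cdots\rightarrow\delta_k=c$ with $k\geq 1$ and $\delta_0,\ldots,\delta_{k-1}\notin C$. I would then build $\omega'$ from $\omega$ by splicing in, at the chosen occurrence of $\gamma$, the closed sub-walk $\gamma\rightarrow\delta_1\rightarrow\cdots\rightarrow\delta_{k-1}\rightarrow c\leftarrow\delta_{k-1}\leftarrow\cdots\leftarrow\delta_1\leftarrow\gamma$, i.e.\ the path down to $c$ followed by its reversal back to $\gamma$. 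As this sub-walk begins and ends at $\gamma$, $\omega'$ is again a walk from $\alpha$ to $\beta$, and its final edge (untouched) still has a head at $\beta$.

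The core of the argument is the bookkeeping that $\omega'$ is still $\mu$-connecting given $C$ and that $m(\omega')=m(\omega)-1$. Every edge of $\omega$ retains its orientation, so each vertex of $\omega$ away from the splice point is a collider/noncollider of $\omega'$ exactly as in $\omega$; thus no old noncollider outside $C$ is disturbed and every old collider stays in $\an_\mathcal{D}(C)$. At each of the two occurrences of $\gamma$ in $\omega'$, one incident walk-edge has a head at $\gamma$ (an original edge of $\omega$, since $\gamma$ was a collider) and the other has a tail at $\gamma$ (the edge $\gamma\rightarrow\delta_1$), so both are noncolliders, and harmless because $\gamma\notin C$. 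Each interior $\delta_j$ occurs twice, each time flanked by one edge of the directed path pointing into it and one pointing out, hence as a noncollider with $\delta_j\notin C$. The only new collider is $c$ itself, flanked on both sides by the edge $\delta_{k-1}\rightarrow c$; since $c\in C\subseteq\an_\mathcal{D}(C)$ it is admissible. Hence $\omega'$ is $\mu$-connecting with one fewer bad collider, and iterating drives $m$ to $0$, leaving a $\mu$-connecting walk all of whose colliders lie in $C$.

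I expect the main (indeed the only) obstacle to be precisely this case analysis: one must track which endpoint of each walk-edge carries the head, especially at the turning point $c$ and at the two re-entry points of $\gamma$, together with the minor observations that a collider occurrence is never a walk endpoint and that walks may repeat vertices. No graphoid axioms, Markov, or faithfulness assumptions enter; the statement is purely about the combinatorics of $\mu$-connection.
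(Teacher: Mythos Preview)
The paper does not supply its own proof of this lemma; it is quoted from \cite{Mogensen2020a} and used as a black box in the proofs of Theorems~\ref{thm:characFaith} and~\ref{thm:edgeFaith}. Your argument is correct and is the standard one: splice in a directed detour from the offending collider $\gamma\in\an_\mathcal{D}(C)\setminus C$ down to the first node $c\in C$ and back, turning the single collider occurrence of $\gamma$ into two noncollider occurrences (legal since $\gamma\notin C$), making each intermediate $\delta_j$ a noncollider outside $C$, and creating exactly one new collider at $c\in C$. Your bookkeeping at the splice points is accurate, the observation that collider occurrences are necessarily interior is right, and the induction on $m(\omega)$ terminates because $m$ drops by one while no new bad colliders are introduced. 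This is almost certainly the same construction as in the cited reference.
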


\begin{proof}[Proof of Proposition \ref{prop:transitiveInclusion}]
	Assume $\alpha\rightarrow\beta$ is in $\mathcal{D}$. If $\alpha\notin C$, 
	then 
	$(\alpha,\beta,C)\notin \mathcal{I}_2$, and therefore 
	$(\alpha,\beta,C)\notin 
	\mathcal{I}_1$. The other conditions follow similarly using the fact that 
	$\mathcal{I}_1\subseteq \mathcal{I}_2$.
\end{proof}

We use the notation $\alpha \sim \beta$ to indicate an edge, 
$\alpha\rightarrow\beta$ or $\alpha\leftarrow\beta$, between nodes $\alpha$ and 
$\beta$.

\begin{proof}[Proof of Theorem \ref{thm:characFaith}]
	Assume first that $\mathcal{D}$ is transitively closed with respect to 
	$\mathcal{I}$, and assume $(A,B,C)\notin \mathcal{I}(\mathcal{G})$. Let 
	$\tilde{\omega}$ be a $\mu$-connecting walk from $\gamma\in A$ to 
	$\delta\in B$ given $C$. We can find a walk, $\omega$, which is 
	$\mu$-connecting from 
	$\gamma$ to $\delta$ given $C$ such that all colliders on $\omega$ are in 
	$C$ 
	(Proposition \ref{prop:allCollInC}). If 
	$\omega$ has length 1, then 
	$\gamma\rightarrow\delta$, $\gamma\notin C$, 
	and 
	from D0 we have $(\gamma,\delta,C)\notin \mathcal{I}$. Otherwise, the walk 
	has a nonendpoint node, $\varepsilon$, $\gamma \sim \ldots \sim 
	\varepsilon \rightarrow \delta$, and $\omega$ is of one of the three types 
	in 
	Lemma \ref{lem:walkTypes}. If it is of type 1, then there is a 
	$\mu$-connecting walk from $\gamma$ to $\varepsilon$ given $C$ and 
	$\varepsilon\notin C$ as $\omega$ is $\mu$-connecting. D1 gives that 
	$(\gamma,\delta,C)\notin \mathcal{I}$. If $\omega$ is of type 2, there is 
	an edge $\alpha\rightarrow\beta$ on $\omega$ such that $\beta$ is in $C$ 
	(all colliders on $\omega$ are in $C$), $\alpha\notin C$, the subwalk from 
	$\gamma$ to $\beta$ is $\mu$-connecting given $C$, and the subwalk from 
	$\alpha$ to $\delta$ is $\mu$-connecting given $C$. D2 gives that 
	$(\gamma,\delta,C)\notin \mathcal{I}$. If $\omega$ is of type 3, there must 
	be a head at $\gamma$, $\gamma \leftarrow \alpha \leftarrow \ldots 
	\leftarrow 
	\varepsilon \rightarrow \delta$. The subwalk from $\alpha$ to $\delta$ is 
	$\mu$-connecting given $C$ as $\alpha\notin C$, and D3 gives that 
	$(\gamma,\delta,C)\notin\mathcal{I}$. This means that in each case
	$(\gamma,\delta,C)\notin \mathcal{I}$. From the left and right 
	decomposition properties of 
	local independence, this means that $(A,B,C)\notin \mathcal{I}$. Note that 
	the 
	right decomposition property is immediate from the definition of local 
	independence/Granger noncausality that we use (see also Section 
	\ref{sec:asym}).
	
	Assume now that $\mathcal{I}$ is faithful with respect to $\mathcal{D}$, 
	that 
	is, $\mathcal{I}\subseteq \mathcal{I}(\mathcal{D})$. Proposition 
	\ref{prop:graphTransitive} gives that $\mathcal{I}(\mathcal{D})$ is 
	transitively closed with respect to $\mathcal{D}$, and Proposition 
	\ref{prop:transitiveInclusion} gives that $\mathcal{I}$ is transitively 
	closed 
	with respect to $\mathcal{D}$.
\end{proof}

For convenience, we say that a $\mu$-connecting walk of length strictly 
greater than 1 is of 
type 1 if $\alpha \dots \rightarrow \gamma \rightarrow \beta$. We say that it 
is 
of type 2 if $\alpha \dots \leftarrow \gamma \rightarrow \beta$ and it contains 
a collider, and we say that it is of type 3 if $\alpha \dots \leftarrow \gamma 
\rightarrow \beta$ and it does not contain a collider. The following lemma 
helps 
clarify the contents of Definition \ref{def:transitivity}: D0-D3 are 
essentially sufficient to characterize the $\mu$-connecting walks.

\begin{lem}
	Any $\mu$-connecting walk of length strictly greater than 1 is of type 
	1, 2, or 3.
	\label{lem:walkTypes}
\end{lem}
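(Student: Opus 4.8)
The plan is to argue by a simple case split on the orientation of the final edge and, when needed, on whether the walk contains a collider. Let $\omega$ be a $\mu$-connecting walk from $\alpha$ to $\beta$ given $C$ of length $\ell > 1$, say $\omega = \alpha = \alpha_1, e_1, \alpha_2, \ldots, \alpha_\ell, e_\ell, \alpha_{\ell+1} = \beta$. By the definition of $\mu$-separation (Definition \ref{def:deltaSep}), the final edge $e_\ell$ has a head at $\beta$, so it is either $\alpha_\ell \rightarrow \beta$ or $\alpha_\ell \leftrightarrow \beta$; since the paper works with directed graphs in this section, the relevant case is $\alpha_\ell \rightarrow \beta$, i.e.\ the penultimate node $\gamma := \alpha_\ell$ satisfies $\gamma \rightarrow \beta$. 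This already matches the schematic $\alpha \cdots \gamma \rightarrow \beta$ used in all three type definitions, so what remains is to classify the orientation of the edge $e_{\ell-1}$ incident to $\gamma$ from the side of $\alpha$.

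The first main case is that $e_{\ell-1}$ has a head at $\gamma$ from that side, i.e.\ $\alpha \cdots \rightarrow \gamma \rightarrow \beta$; then $\gamma$ together with its two incident edges forms the configuration making $\omega$ of type~1, and we are done. The second main case is that $e_{\ell-1}$ has a tail at $\gamma$, i.e.\ $\alpha \cdots \leftarrow \gamma \rightarrow \beta$. Here I split further according to whether $\omega$ contains at least one collider: if it does, $\omega$ is of type~2 by definition; if it does not, $\omega$ is of type~3. This exhausts all possibilities for the orientation of $e_{\ell-1}$ and hence shows every $\mu$-connecting walk of length $> 1$ falls into one of the three types.

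The only subtlety — and the step I would be most careful about — is making sure the case split on $e_{\ell-1}$ is genuinely exhaustive and that the node $\gamma$ in the type definitions is unambiguously the penultimate node. Since an edge between $\gamma$ and its $\alpha$-side neighbour (in a directed graph, possibly with self-edges, though a self-edge at a nonendpoint cannot occur on a walk without repeating structure in a way irrelevant here) must have either a head or a tail at $\gamma$, the dichotomy "head at $\gamma$" versus "tail at $\gamma$" is complete, and combined with the forced orientation $\gamma \rightarrow \beta$ of the last edge this yields exactly the trichotomy in the statement. No graphoid or Markov assumptions are needed; the lemma is purely about the combinatorics of walks, so the argument is short and self-contained.
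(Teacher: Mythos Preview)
Your proposal is correct and follows essentially the same approach as the paper: both identify the penultimate node $\gamma$ with $\gamma \rightarrow \beta$ forced by the head-at-$\beta$ requirement of $\mu$-connectivity, and then split on the orientation of the preceding edge (and, in the $\leftarrow$ case, on the presence of a collider). The paper's proof is terser---it simply writes ``there is a nonendpoint node, $\gamma$, $\alpha \sim \ldots \sim \gamma \rightarrow \beta$. The statement follows immediately from this''---so your version is a more explicit unpacking of exactly that argument.
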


\begin{proof}
	Let $\omega$ be a $\mu$-connecting walk of length strictly greater than 
	1. In this case, there is a nonendpoint node, $\gamma$, $\alpha \sim \ldots 
	\sim \gamma 
	\rightarrow \beta$. The statement follows immediately from this.
\end{proof}

The next corollary follows from Theorem \ref{thm:characFaith} and the 
definition of 
faithfulness as $\mathcal{I}(\mathcal{D}_2)\subseteq 
\mathcal{I}(\mathcal{D}_1)$ when $\mathcal{D}_1 \subseteq \mathcal{D}_2$.

\begin{cor}
	Let $\mathcal{D}_1$ and $\mathcal{D}_2$ be graphs such that 
	$\mathcal{D}_1\subseteq \mathcal{D}_2$. If $\mathcal{D}_2$ is transitively 
	closed with respect to $\mathcal{I}$, then $\mathcal{D}_1$ is transitively 
	closed with respect to $\mathcal{I}$.
\end{cor}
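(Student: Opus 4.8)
The plan is to reduce the statement to Theorem~\ref{thm:characFaith} together with the anti-monotonicity of $\mu$-separation under passing to a subgraph. First I would note that the hypothesis, ``$\mathcal{D}_2$ is transitively closed with respect to $\mathcal{I}$'', is by Theorem~\ref{thm:characFaith} equivalent to ``$\mathcal{I}$ is faithful with respect to $\mathcal{D}_2$'', i.e.\ $\mathcal{I} \subseteq \mathcal{I}(\mathcal{D}_2)$.

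Next I would use the fact recorded just above the corollary: if $\mathcal{D}_1 \subseteq \mathcal{D}_2$, then $\mathcal{I}(\mathcal{D}_2) \subseteq \mathcal{I}(\mathcal{D}_1)$. This is immediate from Definition~\ref{def:deltaSep}: any walk present in $\mathcal{D}_1$ is present in $\mathcal{D}_2$ with the same collider/noncollider pattern, and $\an_{\mathcal{D}_1}(C) \subseteq \an_{\mathcal{D}_2}(C)$, so a $\mu$-connecting walk witnessing $(A,B,C) \notin \mathcal{I}(\mathcal{D}_1)$ is still $\mu$-connecting in $\mathcal{D}_2$. Chaining the two inclusions gives $\mathcal{I} \subseteq \mathcal{I}(\mathcal{D}_2) \subseteq \mathcal{I}(\mathcal{D}_1)$, so $\mathcal{I}$ is faithful with respect to $\mathcal{D}_1$. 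Applying Theorem~\ref{thm:characFaith} once more, in the converse direction, turns this back into: $\mathcal{I}$ is transitively closed with respect to $\mathcal{D}_1$, which is exactly the conclusion.

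I do not expect any real obstacle; the proof is a short chain of set inclusions, and the only non-tautological ingredient beyond Theorem~\ref{thm:characFaith} is the monotonicity $\mathcal{D}_1 \subseteq \mathcal{D}_2 \Rightarrow \mathcal{I}(\mathcal{D}_2) \subseteq \mathcal{I}(\mathcal{D}_1)$, which the paper already takes for granted. If a self-contained argument avoiding the faithfulness detour were wanted, one could instead check D0--D3 directly for $\mathcal{D}_1$: every edge of $\mathcal{D}_1$ is an edge of $\mathcal{D}_2$, and, since $\mathcal{I}(\mathcal{D}_2) \subseteq \mathcal{I}(\mathcal{D}_1)$, each hypothesis of the form $(\cdot,\cdot,C) \notin \mathcal{I}(\mathcal{D}_1)$ appearing in D1, D2, D3 implies the same statement with $\mathcal{D}_2$ in place of $\mathcal{D}_1$; since the conclusions of D0--D3 mention only $\mathcal{I}$, the conditions for $\mathcal{D}_2$ then yield the conditions for $\mathcal{D}_1$. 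Either way, the work is negligible.
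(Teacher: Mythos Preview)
Your proposal is correct and follows exactly the paper's own argument: the corollary is stated immediately after noting that $\mathcal{D}_1\subseteq\mathcal{D}_2$ implies $\mathcal{I}(\mathcal{D}_2)\subseteq\mathcal{I}(\mathcal{D}_1)$, and the paper says the result follows from this together with Theorem~\ref{thm:characFaith}, which is precisely your chain $\mathcal{I}\subseteq\mathcal{I}(\mathcal{D}_2)\subseteq\mathcal{I}(\mathcal{D}_1)$. Your optional direct verification of D0--D3 is a valid alternative not spelled out in the paper, but your primary route matches it.
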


\begin{proof}[Proof of Theorem \ref{thm:edgeFaith}]
	Assume $(A,B,C)\notin \mathcal{I}(\mathcal{\mathcal{F}_I})$. In this case, 
	there is a 
	$\mu$-connecting walk from $\alpha\in A$ to $\beta\in B$ given $C$. We 
	can then also find a $\mu$-connecting walk in 
	$\mathcal{F}_I$ from $\alpha\in A$ to 
	$\beta\in B$ given $C$ such that all colliders are in $C$ (Proposition 
	\ref{prop:allCollInC}). We show by induction 
	on walk length that the existence of a $\mu$-connecting walk, $\omega$, 
	from $\gamma$ 
	to $\delta$ given $C$ implies that $(\gamma,\delta,C)\notin \mathcal{I}$, 
	assuming that all colliders on $\omega$ are in $C$. 
	If $\omega$ 
	has length 1, then E0 gives the result. Assume now that it holds for 
	all walks of lengths 1,2,\ldots,$m-1$ and with all colliders in $C$ that 
	$\mu$-connectivity implies dependence. We 
	consider a $\mu$-connecting 
	walk 
	of length $m$. Assume this walk is of type 
	1, say, $\gamma \sim \ldots \rightarrow \varepsilon \rightarrow \delta$. 
	The subwalk from $\gamma$ to $\varepsilon$ 
	is $\mu$-connecting given $C$ and has length $m-1$. From the induction 
	assumption, $(\gamma,\varepsilon,C)\notin \mathcal{I}$. As 
	$\varepsilon\rightarrow\beta$ is in $\mathcal{F}_I$ and $\varepsilon\notin 
	C$, we have that $(\gamma,\delta,C)\notin\mathcal{I}$. If it is of type 2, 
	there is an edge $\alpha\rightarrow \beta$, a $\mu$-connecting walk from 
	$\gamma$ to $\beta\in C$ (all colliders on $\omega$ are in $C$), and a 
	$\mu$-connecting walk from $\alpha\notin C$ to $\delta$ given $C$ such 
	that the $\mu$-connecting walks are both of length less than $m-1$. Using 
	the induction hypothesis and E2, we have $(\gamma,\delta,C)\notin 
	\mathcal{I}$. Finally, if $\omega$ is of type 3, then it follows from 
	similar arguments and E3.
\end{proof}

\begin{proof}[Proof of Proposition \ref{prop:graphicalI}]
	We have $\mathcal{I} = 
	\mathcal{I}(\mathcal{D})$ for a graph $\mathcal{D}$. If $e$ is in 
	$\mathcal{D}$, then it follows from Proposition \ref{prop:graphTransitive} 
	that $e$ is in $\mathcal{F}_I$ by comparing Definitions 
	\ref{def:transitivity} and \ref{def:edgetransitive} and using $\mathcal{I} 
	= \mathcal{I}(\mathcal{D})$. If $e$, say $\alpha\rightarrow\beta$, is 
	not in $\mathcal{D}$, then there exists a set $C$, $\alpha\notin C$, such 
	that $\beta$ is $\mu$-separated from $\alpha$ given $C$ in 
	$\mathcal{D}$, and $(\alpha,\beta,C)\in 
	\mathcal{I}(\mathcal{D})=\mathcal{I}$. Therefore, $e$ is not in 
	$\mathcal{F}_I$ using E0.
\end{proof}

\begin{proof}[Proof of Proposition \ref{prop:hierarchyFaith}]
	The first two implications are obvious from the definitions. 
	
	If $\mathcal{I}$ is Markov and parent faithful with respect to 
	$\mathcal{D}$, 
	we can 
	consider a proper subgraph $\mathcal{D}_0$ of $\mathcal{D}$. There is some 
	edge 
	which is in $\mathcal{D}$, but not in $\mathcal{D}_0$, say 
	$\alpha\rightarrow\beta$, $\alpha\neq \beta$. Using parent faithfulness of 
	$\mathcal{I}$ with respect to $\mathcal{D}$, we have 
	$(\alpha,\beta,C)\notin 
	\mathcal{I}$ for all 
	$C$ such that $\alpha\notin C$. We have 
	$(\alpha,\beta,\pa_{\mathcal{D}_0}(\beta))\in \mathcal{I}(\mathcal{D}_0)$ 
	and 
	$\alpha\notin \pa_{\mathcal{D}_0}(\beta)$ which means that $\mathcal{I}$ is 
	not 
	Markov with respect to $\mathcal{D}_0$. Therefore, $\mathcal{I}$ is 
	causally 
	minimal with respect to $\mathcal{D}$, and we conclude that the combination 
	of 
	Markovness and parent faithfulness 
	implies causal minimality. 
\end{proof}

\begin{proof}[Proof of Proposition \ref{prop:inducedCauMin}]
	By definition of the induced local independence graph, $\mathcal{I}$ 
	satisfies the pairwise Markov property with respect to $\mathcal{D}_I$, and 
	therefore 
	$\mathcal{I}$ is Markov with respect to $\mathcal{D}_I$. Let 
	$\mathcal{D}_0$ 
	be a proper subgraph of $\mathcal{D}_I$, say $\alpha\rightarrow\beta$, 
	$\alpha\neq \beta$, is in $\mathcal{D}$, but not in $\mathcal{D}_0$. In 
	this case, $(\alpha,\beta,V\setminus \{\alpha \}) \in 
	\mathcal{I}(\mathcal{D}_0)$ such that 
	$\mathcal{I}(\mathcal{D}_0)\not\subseteq \mathcal{I}$.
\end{proof}

\begin{proof}[Proof of Proposition \ref{prop:cauMinUnique}]
	If $\alpha\rightarrow\beta$ is not in $\mathcal{D}$, then $\beta$ is 
	$\mu$-separated from $\alpha$ given $V\setminus\{\alpha\}$. If 
	$\alpha\rightarrow\beta$ is in $\mathcal{D}_I$, then 
	$(\alpha,\beta,V\setminus \{\alpha\})\notin \mathcal{I}$ and 
	$\alpha\rightarrow\beta$ must be 
	in $\mathcal{D}$ if $\mathcal{I}(\mathcal{D})\subseteq \mathcal{I}$. Any 
	causally minimal graph is therefore a supergraph of $\mathcal{D}_I$. As 
	$\mathcal{D}_I$ is causally minimal it follows that $\mathcal{D}_I$ is the 
	only such graph.
\end{proof}

\begin{proof}[Proof of Proposition \ref{prop:superCpa}]
	Any $\mu$-connecting walk must have a head into $\beta$, and be of length 
	at 
	least 2, $\alpha \sim \ldots \sim \gamma \rightarrow \beta$. We see that 
	$\gamma\in C$, and therefore $\beta$ is $\mu$-separated from $\alpha$ given 
	$C$. Using Markovness, $(\alpha,\beta,C)\in \mathcal{I}$.
	
	Assume now that $\mathcal{I}$ satisfies left weak union, left 
	decomposition, 
	and left contraction. 
	We have $\pa_\mathcal{D}(\beta) \setminus \{\alpha \}\subseteq C$, and 
	therefore 
	$(V\setminus \pa_\mathcal{D}(\beta), \beta, C \cup \{\alpha\})\in 
	\mathcal{I}$ using the 
	global Markov 
	property and the above argument. 
	If $(\alpha,\beta,C)\in \mathcal{I}$, we use left contraction to obtain 
	$(\{\alpha \} \cup V\setminus \pa_\mathcal{D}(\beta), \beta, C)\in 
	\mathcal{I}$. Using 
	left 
	weak union and left 
	decomposition we obtain 
	$(\alpha,\beta,V\setminus \{\alpha\})\in \mathcal{I}$ using that 
	$\pa_\mathcal{D}(\beta)\subseteq C$. Using the equivalence of pairwise and 
	global Markov properties, we see that $\mathcal{I}$ is Markov with respect 
	to the graph obtained by removing the edge $\alpha\rightarrow\beta$, and 
	this is a violation of 
	causal minimality.
\end{proof}

\begin{proof}[Proof of Lemma \ref{lem:reformulate}]
	The conditions D1' and D3' are weaker than conditions D1 and D3, 
	respectively, so one 
	direction is 
	immediate. We assume that D0, D1', D2, and D3' hold for every edge and set 
	$C$. To show that D1 holds, assume there is a $\mu$-connecting walk from 
	$\gamma$ to $\alpha$ given $C$, and we can choose this walk such that all 
	colliders are in $C$. If this walk is directed, then the result 
	follows immediately. Otherwise, if there is no colliders on the walk, it 
	follows from D3'. If there is a collider, there is some edge on the walk 
	which points 
	towards $\gamma$, and let $\phi \rightarrow \psi$ such that $\psi$ is a 
	collider. We see that the result follows from D2. Condition D3 is shown 
	similarly.
\end{proof}

\begin{proof}[Proof of Theorem \ref{thm:conditionsFaith}]
	Assume that $\mathcal{I}$ is parent faithful with respect to $\mathcal{D}$, 
	and let $\alpha\rightarrow\beta$ be an edge in $\mathcal{D}$. In this case, 
	$(\alpha,\beta,C)\notin \mathcal{I}$, $\alpha\notin C$, and D0 holds. On 
	the other hand, assume that D0 holds. In this case $(\alpha,\beta,C)\notin 
	\mathcal{I}$, $\alpha\notin C$, and it follows from left and right 
	decomposition that 
	$(A,B,C)\notin 
	\mathcal{I}$ for all $A$ and $B$ such that $\alpha\in A$ and $\beta\in B$.
	
	Assume that $\mathcal{I}$ is ancestor faithful with respect to 
	$\mathcal{D}$. In this case, they are also parent faithful, and D0 follows. 
	If there is a directed path from $\gamma$ to $\alpha$ which is 
	$\mu$-connecting given $C$, $\alpha\notin C$, and $\alpha\rightarrow
	\beta$, then 
	$(\gamma,\beta,C)\notin \mathcal{I}$ using ancestor faithfulness. On the 
	other hand, assume that D0 and 
	D1' hold, and that there is a $\mu$-connecting walk from $\alpha$ to 
	$\beta$ given $C$, $\alpha\notin C$. If it has length one, then it follows 
	from D0 that $(\alpha,\beta,C)\notin \mathcal{I}$. Otherwise, it has the 
	form $\alpha 
	\rightarrow \ldots \rightarrow \gamma \rightarrow \beta$, and 
	$(\alpha,\beta,C)\notin \mathcal{I}$ follows from D1' since the subwalk 
	from $\alpha$ 
	to $\gamma$ is $\mu$-connecting given $C$ and $\gamma\notin C$. It 
	follows from left and right decomposition of $\mathcal{I}$ that 
	$(A,B,C)\notin \mathcal{I}$ for all $A$ and $B$ such that $\alpha\in A$ and 
	$\beta\in B$.
	
	Assume that $\mathcal{I}$ is trek faithful with respect to $\mathcal{D}$. 
	It is also parent and ancestor faithful, and D0 and D1' follow. If 
	$\alpha\rightarrow\beta$ and there is a $\mu$-connecting trek from 
	$\alpha$ to $\gamma$ given $C$, then there is also a $\mu$-connecting 
	trek from $\beta$ to $\gamma$ given $C$, and $(\alpha,\beta,C)\notin 
	\mathcal{I}$ 
	using trek faithfulness.  Assume now that D0, D1', and D3' hold, and assume 
	that there is a $\mu$-connecting trek from $\alpha$ to $\beta$ given 
	$C$. If the trek has length one, $(\alpha,\beta,C)\notin \mathcal{I}$ 
	follows from 
	D0. 
	If it is a directed walk, then $(\alpha,\beta,C)\notin \mathcal{I}$ follows 
	from D1'. 
	If it is not a directed walk, then it must have heads at both endpoints 
	such that $\alpha \leftarrow\gamma\sim \ldots  \rightarrow \beta$. There 
	is a $\mu$-connecting trek from $\gamma$ to $\beta$ given $C$ and using 
	D3' gives the result. Again, $(A,B,C)\notin \mathcal{I}$ for all $A$ and 
	$B$ such that $\alpha\in A$ and
	$\beta\in B$.
\end{proof}

\begin{proof}[Proof of Proposition \ref{prop:CSoutput}]
	Under parent dependence, the first step outputs a supergraph of the 
	causal graph, $\mathcal{D}$: If $\alpha\rightarrow\beta$ is in 
	$\mathcal{D}$, then $(\alpha,\beta,\beta)\notin \mathcal{I}$, and 
	this edge is not removed. Let $\mathcal{D}_1$ denote the output of the 
	first step. We have $\pa_\mathcal{D}(\beta)\subseteq 
	\pa_{\mathcal{D}_1}(\beta)$ for all $\beta\in V$. Proposition 
	\ref{prop:superCpa} implies that the second step outputs the causal graph.
\end{proof}

\end{document}